\documentclass[letterpaper,12pt]{article}
\usepackage{biblatex} 
\AtBeginBibliography{%
  \setlength{\emergencystretch}{3em}%
}
\usepackage{graphicx} 
\usepackage{palatino,mathrsfs}
\usepackage[mathcal]{euler}
\usepackage{xcolor}
\usepackage{epstopdf,epsfig}
\usepackage{pdfpages}
\usepackage{hyperref}
\usepackage{comment} 
\usepackage{amsmath,amsthm,amsfonts,amssymb,latexsym,amscd,enumerate,url,hyperref}
\usepackage{amssymb}
\usepackage{url}
\usepackage{mathtools}
\usepackage{graphicx}
\usepackage[all,knot]{xy}
\xyoption{arc}
\usepackage{multirow}
\usepackage{caption}
\usepackage[new]{old-arrows}

\newtheorem*{theorem*}{Theorem}
  
 \newtheorem*{corollary*}{Corollary}

\newtheorem*{lemma*}{Lemma}

\newtheorem*{proposition*}{Proposition}

\newtheorem*{claim*}{Claim}
\theoremstyle{definition}

\newtheorem*{defn*}{Definition}
 
\newtheorem*{example*}{Example}
\newtheorem*{examples*}{Examples}

\newtheorem*{question*}{Question}
  
\newtheorem*{problem*}{Problem}
  
\newtheorem*{remark*}{Remark}
\newtheorem*{fact*}{Fact}

\def\R{\mathbb R}
\def\C{\mathbb C}
\def\N{\mathbb N}
\def\Z{\mathbb Z}

\def\fg{\mathfrak{g}}
\def\fk{\mathfrak{k}}
\def\fs{\mathfrak{s}}

\title{ A new model  for the  quantum mechanics of the hydrogen atom}
\author{Joseph Bernstein\thanks{School of Mathematical Sciences, Tel Aviv University, Tel Aviv 6997801, Israel.}  \hspace{1pt} and   Eyal Subag\thanks{Department of Mathematics, Bar-Ilan University, Ramat-Gan, 5290002 Israel.}}
\date{\today}

\begin{document}

\maketitle
\begin{abstract}
To every  Lorentzian quadratic space \((\mathcal{V},q)\) of
even dimension $n$ such that $n\geq 4$
we attach a canonical algebraic quantum mechanical model for a  corresponding  generalized hydrogen atom system.

In our  model
the configuration space is  the regular null cone \(C\) of the quadratic space. The Hilbert space $\boldsymbol{H}$
is a canonical $L^2$ space  on the cone $C$,
and observables are realized in the algebra $\mathcal{D}(C)$ of  algebraic differential operators  on $C$. 
We also construct a distinguished Schwartz space $\mathcal{S}(\boldsymbol{H})\subset \boldsymbol{H}$, 
which carries a self-adjoint action of $\mathcal{D}(C)$ and encodes the boundary 
conditions of the standard theory. The role of the Schr\"odinger operator is played by a one-parameter 
Schr\"odinger family of operators in $\mathcal{D}(C)$. We explain how the model relates to the realization of the  minimal representation of $O(n,2)$ on $\boldsymbol{H}$.

For $n=4$, which corresponds to the physical  hydrogen atom system, 
we prove that  the spectrum of the Schr\"odinger family  on the upper-half component \(\mathcal{S}(\boldsymbol{H})_+\) coincides 
with the usual spectrum of the hydrogen atom and that the corresponding solution spaces 
recover the standard physical solutions.

The spectrum of the Schr\"odinger family  on the lower-half component 
\(\mathcal{S}(\boldsymbol{H})_-\) gives additional positive-energy solution 
spaces not present in the usual formulation.
\end{abstract}

\tableofcontents

\section{Introduction}
\subsection{Overview}
The hydrogen atom system is a fundamental well-studied example of a quantum-mechanical system which played a central role in the development of quantum mechanics, see  e.g., \cite{ LandauLifshitz,MR2156403,Wulfman,Wybourne}. The common model for the spinless non-relativistic  system is given by the Hilbert space $L^2(\R^3)$,  together with  the densely defined (normalized) Schr\"{o}dinger operator  
\[{ H_{phys}=-\Delta  -\frac{\kappa}{r},} \] 
where $\kappa$ is a positive constant, $r$ is the radial coordinate, and $ \Delta$ is the Laplacian on $\R^3$.
Solutions of 
the  time-independent Schr\"{o}dinger equation, 
\[H_{phys}\psi =E \psi   \]
determine the time-evolution of the system, where  $E\in \R$  is the energy and   $\psi$ is a function on $\R^3\setminus\{0\}$ 
 satisfying certain boundary conditions.

The spectrum of ${H}_{phys}$, that is the collection of all admissible energies, is
  \[ \operatorname{Spec}({H}_{phys})=\left\{-\frac{\kappa^2}{(2n)^2}|n\in \N\right\}\cup[0,\infty).\]  
  
To deal with degeneracy of the  energy eigenspaces one looks for solutions  of the Schr\"{o}dinger equation that are   simultaneously  eigenfunctions for the  total angular momentum    operator $L^2$ and for its projection on the $z$-axis $L_z$. We will refer to these solutions as \textit{physical solutions}, and denote   the linear span of such solutions for a given $E\in \operatorname{Spec}(H_{phys})$ by $\operatorname{PhysSol}(E)$. 
These solution spaces lead to several layers of symmetries in which the largest symmetry group is $O(4,2)$.    In physics, it is called \textit{the dynamical symmetry}; see, e.g., \cite{sym12081323}.
The  model we just described has several flaws.  
\begin{itemize}
    \item The  Schr\"{o}dinger operator seems ad hoc and definitely not canonical.
    \item The  Schr\"{o}dinger operator has  a singularity at the origin and it involves the non-algebraic square root function via the radial coordinate. 
    \item The Schr\"{o}dinger operator is not directly  a part of the action of the Lie algebra $\mathfrak{o}(4,2)$ that arises from the above-mentioned  $O(4,2)$ symmetry.
    \item The  boundary conditions that are used for the physical solutions do not have a conceptually clear motivation.
\end{itemize}
The purpose of this paper is to present an algebraic model for the quantum-mechanical system of the  hydrogen atom that does not suffer from the above-mentioned flaws and recovers the known spectrum, the physical solutions and their symmetries.

We associate a model with  every Lorentzian quadratic space $(\mathcal{V},q)$ of an even dimension greater than or equal to 4.  To simplify the discussion here we will start with   $\mathcal{V}=\R^4$ equipped with the Lorentzian quadratic form $q(x,y,z,w)=x^2+y^2+z^2-w^2$.  The general case is similar and is being  discussed in the following sections.

We denote the corresponding cone by 
\[C(\R)=\{ v\in \R^4| q(v)=0\},  \]
and its regular part $C(\R)\setminus\{0\}$  by  $C_0(\R)$. The regular part of the cone  splits as a union of two connected components, the upper half cone  $C_{+}(\R)$, and the lower half cone $C_{-}(\R)$.

The cone carries a canonical invariant measure $|\omega|$, which  allows us to define the first ingredient of our model which is 
the canonical  Hilbert space,  $\boldsymbol{H}=L^2({C}_0(\R),|\omega|)$  of square integrable functions on the cone.

 The second ingredient is the algebra $\mathcal{D}(C)$ of  algebraic differential operators on  $C(\R)$. 
 
 There is a natural action $\rho^{\infty}$ of the algebra  $\mathcal{D}(C)$  on $C^{\infty}(C_0(\R))$ the space of smooth functions on $C_0(\R)$. We construct  a distinguished subspace $\mathcal{S}(\boldsymbol{H})$ of $\boldsymbol{H}$  that we call the  Schwartz space of $\boldsymbol{H}$. The subspace $\mathcal{S}(\boldsymbol{H})$  is contained in $C^{\infty}(C_0(\R))$ and invariant under the action of the algebra $\mathcal{D}(C)$. Moreover this action $\rho$ is self-adjoint. The self-adjoint action $(\rho,\mathcal{S}(\boldsymbol{H}))$ of the algebra $\mathcal{D}(C)$ is the third ingredient of our model.
 We remark that the 
Schwartz subspace $\mathcal{S}(\boldsymbol{H})$ encodes the fact that $\boldsymbol{H}$ carries a distinguished unitary representation of an orthogonal group $G(\mathcal{V})$ (that extends $O(\mathcal{V})$ and is isomorphic to $O(4,2)$).

 The last ingredient of our model  is the \textit{Schr\"{o}dinger family}, which is a one-parameter family of differential operators on the cone $E\longmapsto S(E)\in  \mathcal{D}(C)$, parameterized by $E\in \C$.  To construct this family  we fix a positive $\kappa$ and we further assume that we are given a linear functional $w\in \mathcal{V}^*$ of norm $-1$. It is easy to pass from one choice of $\kappa$ and $w$ to another, and in this sense the actual choice is not important. Explicitly,
 \[ S(E)=
 H_w+\kappa+Ew ,\] 
 where $H_w$ is a certain second order differential operator on $C$. 
The  Schr\"o\-dinger family corresponds to the  Schr\"odinger operator $H_{phys}$ in the standard  model of the hydrogen atom.

We define  the spectrum of the Schr\"{o}dinger family in $\mathcal{S}(\boldsymbol{H})$ to be the collection of all $E\in \C$ for which the kernel of the Hermitian adjoint
\[
S(E)^{\dagger}:(\mathcal{S}(\boldsymbol{H}))^{\dagger}\longrightarrow (\mathcal{S}(\boldsymbol{H}))^{\dagger}
\]
is non-zero, where $(\mathcal{S}(\boldsymbol{H}))^{\dagger}$ denotes the  Hermitian dual of $\mathcal{S}(\boldsymbol{H})$.

The Schwartz space $\mathcal{S}(\boldsymbol{H})$ as a representation of 
$\mathcal{D}(C)$ has a canonical decomposition into   invariant subspaces  $\mathcal{S}(\boldsymbol{H})=\mathcal{S}(\boldsymbol{H})_{+}\oplus \mathcal{S}(\boldsymbol{H})_{-}$, where $\mathcal{S}(\boldsymbol{H})_{\pm}$ consists of those functions in $\mathcal{S}(\boldsymbol{H})$ that are supported in the half cone   $C_{\pm}(\R)$.  
We  prove that the spectrum of the  Schr\"{o}dinger family in the upper half cone $\mathcal{S}(\boldsymbol{H})_{+}$   is equal to the spectrum of the Schr\"{o}dinger operator  $H_{phys}$. Moreover the physical solution spaces can be identified  with solution spaces (kernels) of the Schr\"{o}dinger family in $\mathcal{S}(\boldsymbol{H})_{+}$. 
We also show  that the spectrum of the Schr\"{o}dinger family in the lower half cone $\mathcal{S}(\boldsymbol{H})_{-}$   is equal to  $(0,\infty)$.   These    solutions  are not seen in the standard description of the system in physics. It is natural to ask for a possible physical  meaning of these new solutions. We do not know the answer to this question.

Our model is completely algebraic and does not suffer from the above-mentioned flaws of the classical model. 
In addition,
our model gives a conceptual explanation of the formula for the $n$th
negative-energy level: the factor $2n$ arises as an $SO(2)$-weight in a
 discrete-series representation of $SL_2(\R)$; see
Subsection~\ref{623}, especially the final remark. This formula yields the
Rydberg formula for the spectral lines of hydrogenic systems.

Before giving a detailed account of our main results we will make two comments that put this work within a larger context and have  mathematical implications that go beyond this work. 

The  boundary conditions that are used in the classical model do not arise in our model and they are implicitly hiding in the Schwartz  space $\mathcal{S}(\boldsymbol{H})$. This  property of implicit boundary conditions through the canonical smooth subspace may have uses elsewhere. For example, when solving differential equations within a unitary representation of a reductive Lie group.

 Another point that we   discuss below is a group-theoretical and repre\-sentation-theoretical description of the spectrum problem for the hydrogen atom. We show how the Schr\"{o}dinger family corresponds to a one-parameter family of elements in the universal enveloping algebra of $SL_2(\R)$. Moreover we show that the calculation of the spectrum of the Schr\"{o}dinger family in $\boldsymbol{H}$ can be reduced to calculations completely within discrete series representations of $SL_2(\R)$.

 We finish this part of the introduction by describing the main observation that suggested to us the existence of our new model. We noticed two a priori unrelated situations in which there is an expected symmetry of $SO(n)$ but in practice the symmetry is of a larger orthogonal group. 
 
 The first instance is the usual model of the hydrogen atom system, in which there is a clear symmetry of $SO(3)$ and a larger hidden symmetry\footnote{In fact there are different hidden symmetry groups within the dynamical symmetry that are fibers of a nice family, see \cite{MR4460278,MR3827131}.}  of $O(4,2)$ that is called the dynamical symmetry.

 The other instance is of the algebra of differential operators $\mathcal{D}(C)$ of the complex algebraic variety $C$ that is determined by a quadratic cone in $\C^4$. 
It is easy to see that first order elements in $\mathcal{D}(C)$ naturally contain $\mathfrak{so}(4,\C)$. Much less obvious is the fact that  the subspace of second order elements in $\mathcal{D}(C)$  contains a copy of $\mathfrak{so}(6,\C)$. This was the starting point in our search for a new model for the hydrogen atom system which eventually led to the results presented in this paper.

 Recently we discovered works by Meng  \cite{Meng11,Meng13} that introduce a model close to our model. 
Meng considers a more general case that includes the case of the Lorentz group acting on a cone. 

  Meng starts from a simple Euclidean Jordan algebra $J$ and builds a Lie algebra $\mathfrak{co}$, the conformal Lie algebra of $J$, using the
  Kantor–Koecher–Tits
  construction.

 Then Meng constructs a Kepler cone $\mathcal{C}$ associated to $J$ and considers the Hilbert 
space $\mathcal{H}= L^2(\mathcal{C})$. He observes that  the Lie algebra $\mathfrak{co}$ acts on functions on $\mathcal{C}$ via differential operators. 

Then he defines a Schr\"{o}dinger operator  in terms of the Lie algebra $\mathfrak{co}$ and       computes its discrete spectrum.

In the case of a Jordan algebra $J$ corresponding to a   Lorentzian  space the Kepler cone $\mathcal{C}$
constructed by Meng coincides with our cone $C_+(\R)$, the Hilbert space $\mathcal{H}= L^2(\mathcal{C})$ is our space $\boldsymbol{H}_+=L^2(C_+(\R))$ and his Lie algebra $\mathfrak{co}$ is isomorphic to (a real form of) our  quantum dynamical symmetry $\fs$ (see subsection \ref{222}).

Our approach is much more algebraic. We mostly work with the algebra $\mathcal{D}(C)$ of algebraic 
differential operators on $C$; the Lie algebra $\fs$ arises as a convenient generating subspace of $\mathcal{D}(C)$ and our Schr\"{o}dinger family lives inside $\mathcal{D}(C)$. 

Also, one of the main   ingredients of our model is the Schwartz space $\mathcal{S}(\boldsymbol{H}) \subset \boldsymbol{H}$ and a self-adjoint action of the algebra $\mathcal{D}(C)$ on the Schwartz space.

 The Schwartz space implicitly includes boundary conditions.   This allows us to investigate the continuous spectrum in addition to the discrete spectrum.
 
 \subsection{Structure of the paper  and main results}\label{1.2} 
We review the main results of the paper following the chronological order in which they appear in the rest of the paper.

\subsubsection{Lorentzian quadratic space}
A  Lorentzian quadratic space is an $n$-dimensional real  vector space $\mathcal{V}$ along with a Lorentzian quadratic form $q$ that  is  of signature $(n-1,1)$. We will mostly deal with a Lorentzian quadratic space together with a fixed 
linear functional  $w$ such that $q^*(w)=-1$, where $q^*$ is the dual form on the dual vector space. We will refer to this data as a pointed  Lorentzian quadratic space and denote this in short by PLQS. With every $n$-dimensional PLQS we  associate a corresponding model for a  quantum-mechanical system of an  $(n-1)$-dimensional generalized hydrogen atom system. The model described in the introduction above is the  one  associated with the PLQS with $\mathcal{V}=\R^4$, $q$ is the standard Lorentzian quadratic form, represented by the diagonal matrix $\operatorname{diag}(1,1,1,-1)$, and $w$ being the linear functional corresponding to the last coordinate.

Throughout this part of the introduction (Section \ref{1.2})
 we assume that $(\mathcal{V},q,w)$ is a four-dimensional PLQS.
This case corresponds to the case of the hydrogen atom system. We denote the complexification of $(\mathcal{V},q)$  by $(V,Q)$. The corresponding cone $C$ is
 the complex algebraic variety  $\{v\in {V}| Q(v)=0 \}$.

\subsubsection{The algebra of differential operators on the cone}
In Section \ref{se2} we study 
 the  algebra $\mathcal{D}(C)$  of   differential operators on  $C$. Later on we study its action on smooth functions on the corresponding real cone. 

The algebra $\mathcal{D}(C)$ is filtered by degree of  differential operators and each piece in the 
filtration has a grading according to homogeneity. This gives a decomposition   
 \[\mathcal{D}(C)= \bigcup_{k\in \N_0}\left(\mathcal{D}^{k}(C)\right)= \bigcup_{k\in \N_0}\left(\oplus_{\ell\in \Z}\mathcal{D}^{k,\ell}(C)\right), \]
 where $\N_0=\Z_{\geq0}$.
 It is  known that the algebra $\mathcal{D}(C)$ is not generated by differential operators of degree one but it is generated by differential operators of degree two. Namely let  
 \[\tilde{\fs}:=\oplus_{\ell+k=1, k\leq 2}\mathcal{D}^{k,\ell}(C)=\mathcal{D}^{0,1}(C)\oplus \mathcal{D}^{1,0}(C) \oplus \mathcal{D}^{2,-1}(C),\]  it is clear that  $\tilde{\fs}$ is a Lie-algebra . The derived Lie-algebra  
 $\fs:=[\tilde{\fs},\tilde{\fs}]$, is   isomorphic to $\mathfrak{o}(6,\C)$ and moreover 
 generates $\mathcal{D}(C)$.

 The Lorentzian real form 
 $(\mathcal{V},q)$  of $(V,Q)$
  leads to a real form $\fs(\R)$ of $\fs$. 
 In section \ref{s2.5} we show that $\fs(\R)$, the algebra of differential operators on the cone with real coefficients is isomorphic to  $\mathfrak{o}(4,2)$.

\subsubsection{A special endomorphism of the vector space $\mathcal{D}(C)$}
The form $Q$ defines a canonical morphism $\Phi^Q$ of the vector space $\mathcal{D}(C)$. Explicitly,  if in coordinates   $Q$  is given by  $\sum_{i,j=1}^4B_{ij}z_iz_j $, then 
for any $d\in \mathcal{D}(C)$,
\[\Phi^Q(d)= \sum_{i,j=1}^4B_{ij}z_idz_j.\]

This is  discussed in Section \ref{s224}. 
We  describe the restriction of $\Phi^Q$ to the subspace $\mathcal{D}^{2,-1}(C)$ of $\fs$. We use the following result. 
  
\begin{proposition*}
The morphism $\Phi^Q$   defines an  isomorphism    \[\Phi^Q|_{\mathcal{D}^{2,-1}(C)}: \mathcal{D}^{2,-1}(C)\xrightarrow{\sim}\mathcal{D}^{0,1}(C)=V^*.\]
\end{proposition*}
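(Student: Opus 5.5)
Assume the standard description $\fs\simeq\mathfrak{o}(6,\C)\simeq \mathfrak{o}(V\oplus H)$, with $H$ a hyperbolic plane. I will use the graded decomposition
\[\fs=\mathcal{D}^{0,1}(C)\oplus\mathcal{D}^{1,0}(C)'\oplus \mathcal{D}^{2,-1}(C),\]
where $\mathcal{D}^{0,1}(C)=V^*$ consists of the linear functions, $\mathcal{D}^{1,0}(C)'\simeq\mathfrak{o}(V)\oplus\C E$ with $E$ the Euler operator, and $\mathcal{D}^{2,-1}(C)$ is spanned by the operators $D_u$ ($u\in V^*$). Here $D_u$ is the "Bessel-type" operator: the tangential second-order operator on $C$ extending $u\,\Delta_Q - (2E+n-2)\partial_u$ (up to normalisation and signs), which preserves the ideal of $Q$. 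In particular $\dim\mathcal{D}^{2,-1}(C)=\dim V^*=4$. Both spaces are $O(V)$-modules and $\Phi^Q$ is $O(V)$-equivariant and maps degree $(2,-1)$ into degree $(2,1)$. So the plan is to compute $\Phi^Q(D_u)$ explicitly and show that it equals $c\cdot u$ for a nonzero constant $c$, i.e.\ an operator of order $0$.

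\textbf{Step 1.} Fix coordinates with $Q=\sum B_{ij}z_iz_j$, and let $B^{ij}$ be the inverse matrix. Write the generator of $\mathcal{D}^{2,-1}(C)$ as the restriction to $C$ of the ambient operator
\[\tilde D_u=u\,\Delta-(2E+n-2)\partial_u,\qquad \Delta=\sum_{i,j} B^{ij}\partial_i\partial_j .\]
One checks that $\tilde D_u$ maps $Q\cdot\C[V]$ into itself; this is the standard identity $[\Delta,Q]=4E+2n$, together with $[E,Q]=2Q$. Hence $\tilde D_u$ descends to $C$.

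\textbf{Step 2.} Compute $\sum B_{ij}z_i\tilde D_u z_j$ modulo operators vanishing on $C$, i.e.\ modulo $Q\cdot\mathcal{D}(V)$ and operators annihilating functions of $Q\C[V]$. Use the commutation relations
\[[\partial_k,z_j]=\delta_{kj},\qquad [\Delta,z_j]=2\sum_k B^{jk}\partial_k,\qquad [E,z_j]=z_j.\]
Move the $z_j$ to the left:
\[\sum B_{ij}z_i\,\Delta\, z_j=Q\Delta+2\sum z_i\partial_i+\dots,\]
and similarly for the $\partial_u$ term. On $C$ the term $Q\Delta$ vanishes. The first-order terms collect into multiples of $uE$ and $\partial_u Q$-type terms, which vanish on $C$ or combine with the $(2E+n-2)$ factor. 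I expect the result to be
\[\Phi^Q(D_u)=-(n-2)\,u\quad\text{(or a similar nonzero multiple)}\]
modulo zero on $C$. For $n=4$ this constant is nonzero.

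\textbf{Step 3.} Conclude as follows. The map $u\mapsto D_u$ is a bijection $V^*\to\mathcal{D}^{2,-1}(C)$, and $\Phi^Q(D_u)=c\,u$ with $c\neq0$. Hence $\Phi^Q$ restricted to $\mathcal{D}^{2,-1}(C)$ is an isomorphism onto $\mathcal{D}^{0,1}(C)=V^*$. Alternatively one can argue by equivariance: $V^*$ is irreducible under $O(V)$, so Schur's lemma reduces everything to showing $\Phi^Q(D_u)\neq0$ for a single $u$.

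\textbf{Main obstacle.} Step 2 is the delicate part. One must keep track of precisely which operators vanish as operators on $C$, and one must make sure the cancellations leave an operator of order zero with a nonzero constant rather than $0$. Sign and normalisation conventions matter here, including the dependence on $n$ through $(2E+n-2)$. To guard against an accidental cancellation, I would first test the computation on the monomial function $1$ and on a linear function restricted to $C$.
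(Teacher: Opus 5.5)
Your proposal is correct and is essentially the paper's argument. Your $\tilde D_u$ is the paper's $\Psi^Q(u)=u\Box_Q-h\partial^Q_u$. Carrying out Step 2 gives exactly $\sum_{i,j}B_{ij}z_i\tilde D_u z_j=Q(\tilde D_u-2\partial_u)+(2-n)u$, which is the paper's identity $\Phi^Q\circ\Psi^Q=(2-n)\mathbb{I}_{V^*}$. As in the paper, surjectivity onto $\mathcal{D}^{2,-1}(C)$ comes from the dimension bound $\dim\mathcal{D}^{2,-1}(V)_I\le\dim V$, which the paper takes from Goncharov.

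The one caveat concerns your Schur's-lemma alternative. By itself it only gives injectivity into $\mathcal{D}^{2,1}(C)$, not that the image is the order-zero part $V^*$. That space contains other equivariant copies of $V^*$, for example $u\mapsto uE$, so the explicit computation in Step 2 is genuinely needed.
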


In fact we implicitly encountered the isomorphism $\Phi^Q|_{\mathcal{D}^{2,-1}(C)}$  before. Namely, the  second order differential operator $H_w$ that appeared in the above-mentioned  Schr\"{o}dinger family $S(E)$  is defined using this isomorphism. More on this point can be found in Sections \ref{118} and \ref {se5} below.

In Appendix \ref{Ap3} we give a group-theoretical characterization of  $\Phi^Q|_{\mathcal{D}^{2,-1}(C)}$.  
\subsubsection{The dual pair symmetry}
In Section \ref{PLQS} we show that $\fs(\R)$ has a canonical dual pair, isomorphic to 
$(\mathfrak{so}(3),\mathfrak{sl}_2(\R))$  in $\mathfrak{so}(4,2)$. This explains how a PLQS gives rise to a dual pair symmetry in $\mathcal{D}(C)$.

\subsubsection{The Hilbert space}
In section \ref{sub4.2} we describe 
a canonical invariant measure $|\omega|$ on $C(\R)$. Using the measure we  attach to our PLQS a canonical Hilbert space $\boldsymbol{H}=L^2({C}_0(\R),|\omega|)$,  consisting of square integrable functions on  $C_0(\R)$. 

\subsubsection{Dense actions}
In Section \ref{sub4.1}   
 we recall  the  notion of a partial  action of a $*$-algebra $A$ on a Hilbert space $\boldsymbol{H}$. A partial action  of $A$ on $\boldsymbol{H}$ is given by an  action $\rho:A\longrightarrow \operatorname{End}(\mathcal{W})$ of the $*$-algebra $A$,
on a linear subspace $\mathcal{W}$ of $\boldsymbol{H}$. The subspace $\mathcal{W}$ is called the domain of the action.

When $\mathcal{W}$ is dense the partial action is called a dense action. 
For a dense action there exists  a well-defined adjoint partial action
$\rho^{*}:A\longrightarrow \operatorname{End}(\mathcal{W}^{*})$, where $\mathcal{W}^{*}\subset \boldsymbol{H}$ is the adjoint domain.

We  describe related notions of symmetric, self-adjoint and essentially self-adjoint dense actions of  a $*$-algebra on a Hilbert space.  

Our treatment is a    natural generalization of the theory of  unbounded operators on Hilbert spaces.

\subsubsection{The self-adjoint action on the Schwartz space}
There is a natural action $\rho^{\infty}$ of the  algebra $\mathcal{D}(C)$  on the space $C^{\infty}(C_0(\R))$ of smooth functions on $C_0(\R)$.

In Section \ref{sub4.3} we show that $\rho^{\infty}$ restricts to a symmetric action on the domain $C^{\infty}_c(C_0(\R))$  of smooth compactly supported functions on $C_0(\R)=C(\R)\setminus\{0\}$.

Our main goal in  Section \ref{ddd} is to obtain  a distinguished Schwartz subspace $\mathcal{S}(\boldsymbol{H})$ of $\boldsymbol{H}$ that carries a self-adjoint action of the algebra $\mathcal{D}(C)$ extending the action on $C^{\infty}_c(C_0(\R))$. In order to do that, in Section \ref{seext} 
we extend the group $O(q)\simeq O(3,1)$ to a group $G(\mathcal{V})$ isomorphic to $O(4,2)$. 
Then we  show that the natural unitary representation of  the group $O(q)$  on the canonical Hilbert space $\boldsymbol{H}$ can be   extended to a distinguished  unitary irreducible representation $\pi$ of $G(\mathcal{V})$ on the   Hilbert space $\boldsymbol{H}.$

We define the Schwartz space $\mathcal{S}(\boldsymbol{H})$ to  be the canonical subspace of $\boldsymbol{H}$ consisting of smooth vectors  for $\pi$.
Using the inclusions 
$$  C^{\infty}_c(C_0(\R)) \subseteq \mathcal{S}(\boldsymbol{H}) \subseteq C^{\infty}(C_0(\R)) $$    and the fact that 
$\mathcal{S}(\boldsymbol{H})$ is stable under the action  $\rho^{\infty}$
of the algebra
$\mathcal{D}(C)$
we observe that
$\mathcal{S}(\boldsymbol{H})$  is a self-adjoint extension of the symmetric   action on $C^{\infty}_c(C_0(\R))$.

\subsubsection{The Schr\"{o}dinger family}\label{118}

In Section~\ref{se5}, with every PLQS $(\mathcal{V},q,w)$ such that
$q^*(w)=-1$, and every $\kappa\in\R$, we associate a one-parameter
family of elements of $\widetilde{\fs}\subset\mathcal{D}(C)$, the
\textit{Schr\"odinger family}, via
\[
S(q,w,\kappa;\lambda):=\Psi^Q(w)+\kappa+\lambda w
\in\widetilde{\fs},\qquad \lambda\in\C,
\]
where  $\Psi^Q$ is a   multiple of the inverse of $\Phi^Q|_{\mathcal{D}^{2,-1}(C)}: \mathcal{D}^{2,-1}(C)\xrightarrow{\sim}V^*$.

The parameter $\kappa$ plays the role of the strength of the Coulomb potential. 
For every $\lambda\in \C$, the operator   $S(q,{w},\kappa;\lambda)$ belongs to the universal enveloping algebra of a canonical subalgebra  $\mathfrak{l}$ of $\fs$, where $\mathfrak{l}\simeq \mathfrak{sl}_2(\C)$. 
This gives rise to    a one-parameter family in the universal enveloping algebra of $\mathfrak{sl}_2(\C)$ (that we call  \textit{the abstract Schr\"odinger family}). 
\subsubsection{The model}
Altogether, starting with a four-dimensional PLQS and $\kappa\in \R$, the corresponding  model for the quantum-mechanical system of the hydrogen atom consists of 
\begin{enumerate}
    \item The canonical Hilbert space, $\boldsymbol{H}=L^2({C}_{0}(\R)),$
\item The $*$-algebra of differential operators on the cone, $\mathcal{D}(C)$,
\item The distinguished self-adjoint action $(\rho,\mathcal{S}(\boldsymbol{H}))$, of the algebra  $\mathcal{D}(C)$ on $\boldsymbol{H}$,
\item The Schr\"{o}dinger family 
\[S(q,w,\kappa;\lambda)=  \Psi^Q\left({w} \right) + \kappa +\lambda {w}  \in \mathcal{D}(C).  \]
\end{enumerate}

\subsubsection{The spectrum}
In section \ref{se6} we define the spectrum of a family of operators on a topological vector space using the  Hermitian dual.

Under the action of the connected component  $G(\mathcal{V})_0\simeq SO_0(4,2)$ the   Hilbert space $\boldsymbol{H}$ has  a canonical decomposition $\boldsymbol{H}_+\oplus \boldsymbol{H}_-$ into irreducible sub-representations. The  subspace $\boldsymbol{H}_{+}$ consists of functions in $\boldsymbol{H}$ that are supported in the upper half cone and similarly $\boldsymbol{H}_{-}$  consists of functions that are supported in the lower half cone. We set $\mathcal{S}(\boldsymbol{H})_{\pm}:=\mathcal{S}(\boldsymbol{H})\cap \boldsymbol{H}_{\pm}$.
We prove that the spectrum of the Schr\"{o}dinger family in $\mathcal{S}(\boldsymbol{H})_+$ is equal to the spectrum of the Schr\"{o}dinger operator in physics. Below is the exact statement.  
\begin{theorem*}
Let $(\mathcal{V},q,w)$ be a $4$-dimensional  PLQS and  let $\boldsymbol{H}$ be the corresponding canonical Hilbert space. 
For any $\kappa >0$,  
  \[\operatorname{Spec}_{\mathcal{S}(\boldsymbol{H})_{\pm}}\{ S(q,w,\kappa;\lambda)
 |\lambda\in \C \}= \begin{cases}
    \left\{-\frac{\kappa^2}{(2n)^2}
|n\in \N\right\}\cup[0,\infty), & +\\
(0,\infty), &-.
  \end{cases}\] 
\end{theorem*}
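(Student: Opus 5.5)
The plan is to reduce the spectral problem to one about $\mathfrak{sl}_2$. The introduction says $S(q,w,\kappa;\lambda)$ lies in $U(\mathfrak{l})$, with $\mathfrak{l}\simeq\mathfrak{sl}_2(\C)$, and that $\mathfrak{l}(\R)$ together with $\mathfrak{so}(3)$ forms a dual pair in $\fs(\R)\simeq\mathfrak{o}(4,2)$. The first step is to restrict the minimal representation $\pi$ on $\boldsymbol{H}_\pm$ to the dual pair $SO(3)\times\widetilde{SL}_2(\R)$. Using polar coordinates on the cone, $C_\pm(\R)\simeq S^2\times\R_{>0}$ with $w$ playing the role of $\pm r$, one gets a decomposition
\[
\boldsymbol{H}_\pm=\widehat{\bigoplus_{\ell\geq 0}}\, \mathcal{H}_\ell\otimes D^\pm_{\ell},
\]
where $\mathcal{H}_\ell$ is the spherical harmonics of degree $\ell$. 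The factor $D^{+}_\ell$ is a holomorphic discrete series representation of $SL_2(\R)$ with lowest $K$-type $2(\ell+1)$ (or $\ell+1$, depending on normalization), and $D^-_\ell$ is the corresponding antiholomorphic one. I would also check that $\mathcal{S}(\boldsymbol{H})_\pm$ is the corresponding decomposition of smooth vectors, with rapid decay in $\ell$. Then $S(\lambda)^\dagger$ has nonzero kernel on $\mathcal{S}(\boldsymbol{H})_\pm^\dagger$ exactly when, for some $\ell$, the element $S(\lambda)$ acting on $D^\pm_\ell$ has nonzero kernel on the Hermitian dual of its smooth vectors. (A distributional kernel vector has some nonzero isotypic component, and conversely a kernel vector in one summand extends by zero.)

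Next I write the abstract Schr\"odinger family in an $\mathfrak{sl}_2$ basis. Up to normalization, $\Psi^Q(w)$ and $w$ are the two nilpotent-type elements $E_\pm$ of $\mathfrak{l}$ (degree $-1$ and degree $+1$), so $S(\lambda)=a E_- + \kappa + \lambda E_+$. On $\boldsymbol{H}_+$, $w$ acts by a positive function and $\Psi^Q(w)$ by a positive operator, and on $\boldsymbol{H}_-$ both signs flip.

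The cases then split by $\lambda$. For $\lambda<0$, $aE_-+\lambda E_+$ is elliptic, hence conjugate in $SL_2(\R)$ to $\sqrt{|\lambda| a}\,$ times the compact generator $k$. Inside $D^\pm_\ell$ it therefore has discrete spectrum $\pm\sqrt{|\lambda|}\,m$, with $m$ running over weights $m\geq 2(\ell+1)$ in the correct normalization. The equation $\sqrt{|\lambda|}\,m=\kappa$ gives $\lambda=-\kappa^2/m^2$, and writing $m=2n$ produces $-\kappa^2/(2n)^2$. This value occurs only on the $+$ side, since $\kappa>0$ forces the sign. For $\lambda>0$ the element is hyperbolic, hence conjugate to a multiple of $h$. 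In every discrete series representation, $h$ has purely continuous spectrum $\R$ with generalized eigenvectors in the dual of the smooth vectors, and no genuine $L^2$ eigenvectors. This should give $(0,\infty)$ in both cases. For $\lambda=0$ the element is $\kappa+aE_-$ with $E_-$ parabolic. In $D^+_\ell$ the operator $E_-$ is multiplication by a positive variable with spectrum $[0,\infty)$ (or an opposite sign convention), so $-\kappa$ lies in the spectrum for exactly one sign; this should produce $0\in\operatorname{Spec}_+$ and $0\notin\operatorname{Spec}_-$. For non-real $\lambda$, or real $\lambda$ outside these sets, I show the kernel is zero. Self-adjointness of the action on $\mathcal{S}(\boldsymbol{H})$ handles the non-real $\lambda$. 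For excluded real values, I use a Kirillov/Fourier model of $D^\pm_\ell$ on $L^2(\R_{>0})$, where $S(\lambda)$ becomes a second-order ODE. There the distributional solutions can be written explicitly in terms of Whittaker functions, and one checks which of them define continuous functionals on the smooth vectors.

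I expect the main obstacle to be the boundary behavior: proving that the Hermitian-dual kernel is nonzero exactly on the claimed sets. This includes the endpoint $0$ and the nonexistence of kernel elements for $\lambda<0$ not of the form $-\kappa^2/(2n)^2$, and on the $-$ side for all $\lambda\le 0$. This requires knowing the smooth vectors of $D^\pm_\ell$ precisely. My first attempt would be to use the $K$-finite basis, in which $S(\lambda)$ acts by a three-term recurrence. A dual vector is a sequence of at most polynomial growth, and the solutions of the recurrence can be analyzed asymptotically.
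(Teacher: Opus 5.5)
Your architecture is the paper's. You reduce via the $SO(3)$-isotypic decomposition to $\mathcal{S}(\boldsymbol{H}_\pm)_\ell\cong(D^\pm_{2\ell+2})^\infty\otimes\mathcal{Y}^\ell$, then conjugate $S$ inside $SL_2(\R)$ to an elliptic, parabolic or hyperbolic normal form according to the sign of $\lambda$. The step that fails is the exclusion of non-real $\lambda$, and there are further problems at $\lambda=0$ and $\lambda>0$.

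\textbf{Non-real $\lambda$.} ``Self-adjointness of the action handles it'' is not an argument here.
\begin{itemize}
\item The spectrum is defined through kernels of $S(\lambda)^\dagger$ on the Hermitian dual $\mathcal{S}(\boldsymbol{H})^\dagger$, which is a space of distributions. Self-adjointness of $(\rho,\mathcal{S}(\boldsymbol{H}))$ gives no control over generalized eigenfunctionals there. The paper's own example shows this: on $C_c^\infty(\R)$ the family $d/dx-\lambda$ has nonzero dual kernel for every $\lambda\in\C$, although $i\,d/dx$ is essentially self-adjoint on that domain. What cuts the spectrum down on the Schwartz space is a growth condition, not symmetry.
\item $S(\lambda)=(\kappa+\Psi^Q(w))+\lambda w$ is a pencil, not $T-\lambda$. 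The positivity trick $\langle Av,v\rangle+\lambda\langle wv,v\rangle=0$, with $w$ definite on $\boldsymbol{H}_\pm$, only applies to honest vectors $v$.
\item The paper needs a separate argument here. By Casselman's submodule theorem, an irreducible admissible representation of $SL_2(\R)$ is a quotient of a principal series on $C^\infty(S^1)$. So the dual kernel embeds into distributions on $S^1$ annihilated by a first-order operator with leading coefficient $\cos^2\theta-\bar\lambda\sin^2\theta$. For $\lambda\notin\R$ this coefficient never vanishes, so solutions are smooth. Their monodromy has modulus $\exp\bigl(\operatorname{Im}\lambda\int_0^{2\pi}\sin^2\theta\,|\cos^2\theta-\lambda\sin^2\theta|^{-2}d\theta\bigr)\neq1$, so there is no global solution.
\item Your $K$-basis recurrence, or the ODE on $t^\ell\mathcal{S}([0,\infty))$ together with boundary functionals at $t=0$, could be pushed to cover non-real $\lambda$. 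As written, that case is not proved.
\end{itemize}

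\textbf{The sign at $\lambda=0$.} Your sign claim is backwards. On $\boldsymbol{H}_+$ one has $\Psi^Q(w)=r\Delta$, radially $t\partial_t^2+2\partial_t-\ell(\ell+1)/t$. This is a \emph{negative} operator: the Weyl element conjugates it to multiplication by $-t$. With the signs you state, $\kappa+\Psi^Q(w)$ would be strictly positive on $\boldsymbol{H}_+$, and $0$ would land in $\operatorname{Spec}_-$ instead of $\operatorname{Spec}_+$. The hedge ``or the opposite sign convention'' conceals exactly what decides the case $\lambda=0$.

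\textbf{Existence and exhaustion of dual kernel elements at $\lambda\ge 0$.} These do not follow from ``$-\kappa$ is in the spectrum of a multiplication operator'' or ``$h$ has continuous spectrum''. The paper works in the model $t^\ell\mathcal{S}([0,\infty))\otimes\mathcal{Y}^\ell$.
\begin{itemize}
\item It shows that $T^\dagger$ (multiplication by $1-t$, resp.\ $1+i\nu(2t\partial_t+2)$) acts invertibly on the span of the boundary functionals $t^\ell a\,Y\mapsto\overline{a^{(j)}(0)}\,\alpha(Y)$, with $\alpha\in(\mathcal{Y}^\ell)^\dagger$. Hence restriction to the open half-cone is injective on the kernel.
\item It then exhibits $\delta(t-1)$, resp.\ $t^{-1+i/(2\nu)}$, as solutions on the open half-cone and checks that they extend continuously.
\end{itemize}

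\textbf{The case $\lambda<0$ is easier than you expect.} After conjugating to $1+i\nu(E-F)$, a dual kernel element is an eigenfunctional of the $SO(2)$ generator. It therefore lives on a single finite-dimensional $K$-isotypic piece and corresponds to a $K$-finite vector. No Whittaker or recurrence asymptotics are needed.
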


In particular  the spectrum of the Schr\"{o}dinger family in $\mathcal{S}(\boldsymbol{H})_+$  is equal to  $\operatorname{Spec}(H_{phys})$, the spectrum in the standard model in physics. We identify the Hermitian-dual solution spaces for the Schr\"{o}dinger family in $\mathcal{S}(\boldsymbol{H})_+$ with the physical solutions. The spectrum in the full Schwartz  space $\mathcal{S}(\boldsymbol{H})$ has more solutions for $\lambda > 0$ than in the classical model in physics.

\subsubsection{Concluding remarks}
Our suggested model does not suffer from the flaws of the classical model. It is defined with respect to a four-dimensional PLQS and $\kappa\in \R$. Up to a natural  isomorphism, the model does not depend on the PLQS and in that sense it is canonical. 

  The main differences with the standard model are
\begin{enumerate}
\item We use the cone $C$ instead of  $\R^3$ as our configuration space.  As a result, the   group of geometric symmetries of our configuration space is $O(q)\simeq O(3,1)$ rather than  $O(3)\ltimes \R^3$ for $\R^3$. 
\item We only use algebraic operators with no singularities and in particular the Schr\"{o}dinger family belongs to $\tilde{\fs}\subset \mathcal{D}(C)$ and  leads to a reductive dual pair symmetry of the type of $(\mathfrak{so}(3),\mathfrak{sl}_2(\R))$  in $\mathfrak{so}(4,2)$. 
\item We do not use any specific boundary conditions for solutions of our equations. The distinguished Schwartz  space $\mathcal{S}(\boldsymbol{H})$ encodes them in an indirect way. 
\end{enumerate}

 We compute the spectrum of the Schr\"{o}dinger family  in the Schwartz space $\mathcal{S}(\boldsymbol{H})$  and show that  it coincides with the spectrum in physics and solutions in $\mathcal{S}(\boldsymbol{H})_+$  correspond to the  usual solutions in physics.

This research was supported by the Israel Science Foundation   (grant No. 1040/22).

\section{The algebra of differential operators on a cone}\label{se2} 
Let $(V,Q)$ be a finite-dimensional complex nondegenerate quadratic space and $C\subset V$ be the associated  cone.
In this section we discuss some known and some lesser-known facts about  the algebra  $\mathcal{D}(C)$ of differential operators on $C$. We will specify and study a canonical Lie-subalgebra $\fs$ of  $\mathcal{D}(C)$ that generates it.   We will also study related real structures of $\mathcal{D}(C)$.
\subsection{Differential operators}\label{s2.1}
\subsubsection{Differential operators on linear spaces}\label{Sec2.1}
In this work we will consider complex (and occasionally also real)  affine algebraic varieties, which for our purposes can be given as the solution  set of a system of   polynomial equations in a complex affine space. For such a variety $X\subset \C^n$, its ring of polynomial functions $\mathcal{O}_X$ is a quotient of $\C[z_1,z_2,\ldots,z_n]$.  Grothendieck defined a differential operator of 
degree less than or equal to
$k\in\N_0$ on $X$ to be a linear operator  $D\in \operatorname{End}_{\C}(\mathcal{O}_X)$,  for which 
\[ [{f_0},[{f_1},...,[{f_k},D]]...]=0,\]
for any $f_0,f_1,...,f_k\in \mathcal{O}_X$, and where we treat 
$\mathcal{O}_X$ as a subalgebra of $\operatorname{End}_{\C}(\mathcal{O}_X)$. 
 The collection $\mathcal{D}^k(X)$ of all differential operators of degree less than or equal to  $k$ on $X$ defines an increasing exhaustive filtration of $\mathcal{D}(X)$, the algebra of differential operators on $X$. 

When $X$ is smooth,  $\mathcal{D}(X)$ is generated as an  algebra   by $\mathcal{O}_X$ and $\operatorname{Der}(\mathcal{O}_X)=\{D\in \mathcal{D}^1(X)| D(1)=0\}$. In general, beyond the smooth case, $\mathcal{D}(X)$ may not even be Noetherian, e.g., see the case of the  cubic  cone \cite{cubiccone}. 

Consider the case in which  
  the algebraic variety  $X$ is a finite-dimensional complex vector space $V$.
  In this case we have an embedding $V^*  \hookrightarrow  \mathcal{O}_X  \hookrightarrow \mathcal{D}^0(X) $ sending $\varphi$ to multiplication by $\varphi$ as an element in $ \mathcal{D}^0(V)$. 
  In addition, we have an embedding of $V$ in $\mathcal{D}^1(V)$ sending a vector $v$ to the corresponding directional derivative $\partial_v$. 
  We note that for any $\varphi \in V^*$ and  $v\in V$, \[[\partial_v,{\varphi}]=\partial_v(\varphi).\]
The  algebra $\mathcal{D}(V)$ of differential operators  on $V$  is generated by the images of $V$ and $V^*$ in  $\mathcal{D}(V)$. This algebra is called the Weyl algebra of $V$.
 \subsubsection{Differential operators on cones}\label{s212} 
From now on we will assume that $(V,Q)$ is a complex nondegenerate quadratic space with  $\operatorname{dim}(V)=n>2$. We denote the  associated symmetric bilinear form by $B_Q$ or by $B$, and the corresponding cone by 
\[C=C_Q=\{v\in V| Q(v)=0 \}. \]
 In contrast to the smooth case, $\mathcal{D}(C)$ is not generated by $\mathcal{D}^1(C)$. However,  $\mathcal{D}(C)$ is generated by $\mathcal{D}^2(C)$, and moreover it is isomorphic to a quotient of the universal  enveloping algebra of $\mathfrak{so}(n+2,\C)$.
 
 We denote by $I=I_Q$ the ideal of $\mathcal{O}_V$ that is generated by $Q\in  \mathcal{O}_V$.  
 
The ring $ \mathcal{O}_C$ of polynomial functions on $C$ is given by  
 $\mathcal{O}_V/I$.

 We say that a differential operator $D\in \mathcal{D}(V)$  is a \textbf{lift} of a differential operator $D'\in \mathcal{D}(C)$ (or equivalently $D'$ is a \textbf{restriction} of $D$) if 
\[D'(f|_{C})= D(f)|_{C}, \quad \forall f\in \mathcal{O}_V. \]
A differential operator  $D\in \mathcal{D}(V)$  has a restriction to a differential operator on $C$ if and only if $D(I)\subset I$.
We define the  subalgebra of    $\mathcal{D}(V)$ consisting of all restrictable operators  by
\[\mathcal{D}(V)_I=\{D\in\mathcal{D}(V)|D(I)\subset I  \}.\]
The following result, which follows from Proposition  1.6 of \cite{SSP}, explains how a differential operator on the cone can be described by differential operators on the ambient vector space. 
\begin{proposition*}\label{pr121}
The natural map $ \mathcal{D}(V)_I\longrightarrow \mathcal{D}(C)$  is onto and has $I\mathcal{D}(V)$ as its kernel.  
\end{proposition*}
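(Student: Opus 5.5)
We have to show two things: every operator on $C$ lifts to a restrictable operator on $V$, and the operators killed by restriction are exactly those in $I\mathcal{D}(V)$.

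\textbf{Well-definedness and the easy inclusion.} First, restriction is well defined on $\mathcal{D}(V)_I$: if $D(I)\subset I$, then $f|_C\mapsto D(f)|_C$ does not depend on the choice of representative $f$. An induction on the order, using Grothendieck's commutator characterization, shows that the resulting map on $\mathcal{O}_C$ is a differential operator. Next, $I\mathcal{D}(V)\subset \mathcal{D}(V)_I$, and every element $QD$ restricts to zero, since $QD(f)\in I$ for every $f$.

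\textbf{The kernel.} I use the Euler grading. Write $\mathcal{D}(V)$ in normal ordered form (coefficients to the left), and decompose each $D$ by total order of vanishing along $Q$. Suppose $D$ restricts to $0$, i.e.\ $D(\mathcal{O}_V)\subset I$. Reduce $D$ modulo $I\mathcal{D}(V)$ to a canonical representative. Choose coordinates with $Q=z_1z_2+q'(z_3,\dots,z_n)$, or more invariantly use the fact that $\partial Q/\partial z_1$ is a coordinate; dividing by $Q$ in the coefficients, each class in $\mathcal{D}(V)/I\mathcal{D}(V)$ has a representative whose coefficients lie in a complement of $I$ in $\mathcal{O}_V$, say polynomials of degree $\le 1$ in $z_1$.

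Then show that such a representative $D=\sum_\alpha a_\alpha\partial^\alpha$ with $D(\mathcal{O}_V)\subset I$ must vanish. Apply $D$ to monomials $z^\beta$ and induct on $|\beta|$. Apply it first to $1$, giving $a_0\in I$, hence $a_0=0$ by the normal form; then to linear functions, and so on. At each step, $D(z^\beta)=\beta!\,a_\beta+\sum_{\alpha<\beta}(\cdots)$, and the lower terms already vanish. This gives $a_\beta\in I$ and hence $a_\beta=0$.

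\textbf{Surjectivity.} This is the main obstacle, since it requires constructing lifts. Given $D'\in\mathcal{D}^k(C)$, I would proceed in three steps.
\begin{itemize}
\item[(i)] Localize to the smooth locus. On a smooth affine open $U\subset C$, the operator $D'$ lifts locally, because $Q$ is part of a local coordinate system near smooth points of $C$, where $dQ\neq 0$. Differential operators on $C_0$ are thus locally restrictions of operators on $V$ preserving $I$.
\item[(ii)] Glue. Compare two lifts on overlaps: by the kernel statement above they differ by elements of $I\mathcal{D}$, so the obstruction lies in $H^1$ of the sheaf $I\mathcal{D}_V$ on $V\setminus\{0\}$. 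This vanishes by depth considerations, since $\operatorname{codim}\{0\}=n\ge 3$ and Hartogs-type extension applies because $\mathcal{D}_V$ is free over $\mathcal{O}_V$.
\item[(iii)] Extend across $0$. Use that $\mathcal{D}(C)=\mathcal{D}(C_0)$, which holds because $C$ is normal (a quadric cone with $n\ge3$), and that operators on $V\setminus\{0\}$ extend to $V$.
\end{itemize}

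Alternatively, and more concretely, I would instead invoke Proposition 1.6 of \cite{SSP} directly as stated in the paper. That result gives the exact sequence $0\to I\mathcal{D}(V)\to\mathcal{D}(V)_I\to\mathcal{D}(C)\to 0$ for complete intersections with isolated singularities in dimension $\ge 2$, and I would only check that its hypotheses hold for the quadric cone with $n>2$.
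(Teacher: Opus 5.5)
Your argument is correct in substance but follows a different and much heavier route than needed. The paper gives no argument here; it only cites Proposition~1.6 of \cite{SSP}. The fact behind that citation is the classical one: $\mathcal{D}(\mathcal{O}_V/I)\cong\{D\in\mathcal{D}(V): D(I)\subset I\}/I\mathcal{D}(V)$ for \emph{every} ideal $I$ of a polynomial ring in characteristic $0$. So the complete-intersection and isolated-singularity hypotheses you attribute to that reference play no role.

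Your kernel argument is correct and self-contained, and it uses nothing about $Q$. Since $\mathcal{D}(V)=\bigoplus_\alpha\mathcal{O}_V\partial^\alpha$ and $I\mathcal{D}(V)=\bigoplus_\alpha I\partial^\alpha$, evaluating $D=\sum_\alpha a_\alpha\partial^\alpha$ on monomials gives $a_\beta\in I$ by induction on $|\beta|$. No complement is needed. The one you name is in fact wrong for $Q=z_1z_2+q'$, since $Q$ itself has degree $1$ in $z_1$.

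The real divergence is surjectivity, which you treat as the main obstacle and attack by localizing to $C_0$, gluing via $H^1(V\setminus\{0\},I\mathcal{D}_V)=0$, and extending across $0$. This can be completed for $n\ge 3$. You would need to add the open set $V\setminus C$ with lift $0$ to get a cover of $V\setminus\{0\}$, and a local version of the kernel statement on overlaps. But it is a detour around an argument your kernel computation already contains:
\begin{itemize}
\item Given $D'\in\mathcal{D}^k(C)$, the composite $\theta=D'\circ\pi\colon\mathcal{O}_V\to\mathcal{O}_C$ is a differential operator of order $\le k$ into the $\mathcal{O}_V$-module $\mathcal{O}_C$.
\item Solving triangularly on monomials of degree $\le k$ gives $\bar a_\alpha\in\mathcal{O}_C$ such that $\theta'=\theta-\sum_{|\alpha|\le k}\bar a_\alpha\,\pi\circ\partial^\alpha$ kills all monomials of degree $\le k$.
\item Each $[\theta',z_i]$ then has order $\le k-1$ and kills monomials of degree $\le k-1$, so it vanishes by induction on $k$. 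Hence $\theta'$ is $\mathcal{O}_V$-linear with $\theta'(1)=0$, so $\theta'=0$.
\item Lifting the $\bar a_\alpha$ to $a_\alpha\in\mathcal{O}_V$ gives $D=\sum a_\alpha\partial^\alpha$ with $D(f)|_C=D'(f|_C)$ for all $f$, and $D(I)\subset I$ follows automatically.
\end{itemize}

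What you missed is that you never need to express $D'$ through vector fields tangent to $C$. That is what forced adapted coordinates, gluing and cohomology on you. It suffices to lift the coefficients of the operator $\mathcal{O}_V\to\mathcal{O}_C$. The direct route uses neither normality of $C$, nor $n\ge 3$, nor sheaf cohomology. It proves the statement for arbitrary ideals, which is exactly the generality of the cited result.
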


From now on we  identify $\mathcal{D}(C)$ with $\mathcal{D}(V)_I/I\mathcal{D}(V)$.

\subsection{Fine structure of the algebra $\mathcal{D}(C)$}\label{Sec2.3}
\subsubsection{The $\Z$-grading}
The group $\C^{\times}$ acts via rescaling  on $\mathcal{O}_V$.
This induces an action  on $\mathcal{D}(V)$
and under this action for each $k\in \N_0$,  $\mathcal{D}^k(V)$ is invariant and hence carries a $\Z$-grading: \[\mathcal{D}^k(V)=\oplus_{\ell\in \Z}\mathcal{D}^{k,\ell}(V), \]
with $\mathcal{D}^{k,\ell}(V)=\left\{d\in \mathcal{D}^{k}(V)| \alpha \cdot d= \alpha^{\ell} d, \forall \alpha \in \C^{\times}  \right\}$. 
Moreover, since $C$ is stable under the action of $\C^{\times}$, the same holds for $\mathcal{D}(C)$:
 \[\mathcal{D}^k(C)=\oplus_{\ell\in \Z}\mathcal{D}^{k,\ell}(C). \] 
 For any $k\in \N_0$ and $\ell\in \Z$, we set  $\mathcal{D}^k(V)_I:=\mathcal{D}(V)_I\cap \mathcal{D}^k(V)$ and $\mathcal{D}^{k,\ell}(V)_I:=\mathcal{D}(V)_I\cap \mathcal{D}^{k,\ell}(V)$.  Since $\mathcal{D}^k(V)_I$ is stable under the algebraic action of $\C^{\times}$ we must have
\[\mathcal{D}^k(V)_I=\oplus_{\ell\in \Z}\mathcal{D}^{k,\ell}(V)_I. \]
\subsubsection{The canonical generating Lie subalgebra}\label{222}
Long ago it was discovered by I.M. Gelfand, S.I. Gelfand, and the first author that $\mathcal{D}(C)$ is generated   by the  Lie algebra   
 \[\tilde{\fs}:=\oplus_{\ell+k=1, k\leq 2}\mathcal{D}^{k,\ell}(C)=\mathcal{D}^{0,1}(C)\oplus \mathcal{D}^{1,0}(C) \oplus \mathcal{D}^{2,-1}(C).\]
We let $\fs$ be the derived Lie algebra  $[\tilde{\fs},\tilde{\fs}]$. We will  call the Lie algebra $\fs$ the \textit{quantum dynamical symmetry}. It  can be verified that $\tilde{\fs}=\fs\oplus \C$.

\begin{theorem*}[Compare with Prop. 3.8 and Cor. 4.4 of \cite{LevasseurAndStafford2017}]
 The complex Lie algebra $\fs$
is  isomorphic to $ \mathfrak{so}(n+2,\C)$ and it generates  $\mathcal{D}(C)$ as a unital complex algebra.
   \end{theorem*}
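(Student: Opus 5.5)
The plan has two parts: first realize $\tilde{\fs}$ explicitly through lifts to $\mathcal{D}(V)_I$ and identify $\fs$ with $\mathfrak{so}(n+2,\C)$; then prove generation through the associated graded algebra.

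\textbf{Explicit lifts.} I will use the identification $\mathcal{D}(C)=\mathcal{D}(V)_I/I\mathcal{D}(V)$ from the Proposition above. Fix coordinates with $Q=\sum B_{ij}z_iz_j$, and let $E=\sum z_i\partial_i$ be the Euler operator.

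For $\mathcal{D}^{0,1}(C)$, the elements are the linear functions, giving $V^*$.

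For $\mathcal{D}^{1,0}(C)$, the vector fields tangent to $C$ are spanned by the $\mathfrak{so}(Q)$ fields $z_i\partial_j-z_j\partial_i$ (indices raised with $B$) together with $E$. Together with constants this gives $\mathfrak{so}(n)\oplus\C E\oplus\C$.

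For $\mathcal{D}^{2,-1}(C)$, I will use the operators $\varphi\mapsto$ a modified Laplacian. Concretely, for $v\in V$ set
\[
D_v=\partial_v\Delta_Q\text{-type operator}=(2E+n-2)\partial_v-\ Q^\sharp(v)\,\Delta_Q ,
\]
where $Q^\sharp(v)$ is the linear form $B(v,\cdot)$ and $\Delta_Q$ is the Laplacian of $Q$. This is the standard second-order operator, conformally invariant in the sense of Jakobsen–Vergne and Kostant, that preserves the ideal $I$. I will check $D_v(Qf)\in I$ by a direct commutator computation, using $[\Delta_Q,Q]=4E+2n$ and $[\partial_v,Q]=2Q^\sharp(v)$.

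Next I will show these exhaust each graded piece modulo $I\mathcal{D}(V)$. The tool is the symbol map. An element of $\mathcal{D}^{k,\ell}(C)$ has a principal symbol that is a function on the cotangent variety of $C_0$ which is polynomial of the appropriate homogeneity. For $k+\ell=1$, $k\le 2$, a dimension count of such symbols (they are $O(Q)$-covariant: $V^*$, $\mathfrak{so}(n)\oplus\C$, $V$) gives equality. This yields $\tilde{\fs}=V^*\oplus(\mathfrak{so}(n)\oplus\C E\oplus\C)\oplus V$.

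\textbf{The Lie algebra structure.} $\tilde{\fs}$ is closed under brackets by degree and homogeneity counting. The bracket $[D_v,\varphi]$ lands in degree $1$ and weight $0$, and a computation gives $[D_v,\varphi]=c\,(\varphi(v)(E+\tfrac{n-2}{2})+\text{rotation})$ for a nonzero constant $c$. This is exactly the $\Z$-grading $\fg_{-1}\oplus\fg_0\oplus\fg_1$ of $\mathfrak{so}(n+2)$ with $\fg_0=\mathfrak{so}(n)\oplus\C$, where the central element acts as the shifted Euler operator $E+\tfrac{n-2}{2}$. Hence $\fs=[\tilde{\fs},\tilde{\fs}]$ contains $V^*$, $V$, $\mathfrak{so}(n)$ and $E+\tfrac{n-2}{2}$, but not the constant $1$, so $\tilde{\fs}=\fs\oplus\C$. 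Matching structure constants with the standard three-grading identifies $\fs\simeq\mathfrak{so}(n+2,\C)$. Alternatively, $\fs$ is simple because $\mathfrak{so}(n)$ acts irreducibly on $V$ and $V^*$ (using $n>2$) and the bracket pairing is nondegenerate, which gives a simple graded Lie algebra of that dimension.

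\textbf{Generation.} Let $A$ be the subalgebra generated by $\fs$; I will show $A\supseteq\mathcal{D}^{k}(C)$ by induction on $k$. Two facts are needed:
\begin{itemize}
\item the associated graded $\operatorname{gr}\mathcal{D}(C)$ embeds in the ring of regular functions on $T^*C_0$, which I will take from the Levasseur–Stafford description of $\mathcal{D}(C)$ as the global sections over the regular part, together with $\operatorname{codim}\ge2$ (using $n\ge4$, or $n>3$ in general);
\item those functions are generated by the symbols of $V^*$, $\mathfrak{so}(n)\oplus\C E$, and $V$.
\end{itemize}
Equivalently, $T^*C_0$ maps to the minimal nilpotent orbit closure of $\mathfrak{so}(n+2)$ via the moment map of $\fs$. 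The functions on that closure are generated by linear functions, being a quotient of $S(\mathfrak{so}(n+2))$ by the Joseph ideal's graded version. Normality of the minimal orbit closure, plus the fact that its complement has codimension $\ge2$, gives surjectivity of $S(\fs)\to\operatorname{gr}\mathcal{D}(C)$. A standard filtered lifting argument then gives $U(\fs)\to\mathcal{D}(C)$ surjective.

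\textbf{Main obstacle.} The hard step is this surjectivity at the symbol level: identifying $\operatorname{gr}\mathcal{D}(C)$ with $\mathcal{O}(\overline{\mathcal{O}_{\min}})$. My first attempt would be to cite Levasseur–Stafford (Prop. 3.8 and Cor. 4.4) for this, reducing the argument above to the explicit identification of $\fs$.
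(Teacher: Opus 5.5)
Your proposal is correct and follows essentially the paper's route. You read off $\fs\cong\mathfrak{so}(n+2,\C)$ from the explicit spanning set and the commutation relations of Appendix~\ref{crs} (your $D_v$ is exactly $-\Psi^Q(\tau(v))$), and you defer generation of $\mathcal{D}(C)$ to Levasseur--Stafford, which is all the paper does as well.

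One small repair: closure of $\tilde{\fs}$ under brackets does not follow from degree and homogeneity counting alone, since $[\mathcal{D}^{2,-1}(C),\mathcal{D}^{2,-1}(C)]\subseteq\mathcal{D}^{3,-2}(C)$ lies outside the definition of $\tilde{\fs}$. You therefore need the direct check $[D_v,D_w]=0$, which already holds in $\mathcal{D}(V)$. Also, the vertex of $C$ has codimension $n-1$, so $n\ge 3$ suffices at that step rather than $n\ge 4$.
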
\label{Theorem1} 
For an explicit embedding of  $\mathfrak{so}(n+2,\C)$  in $ \mathcal{D}(C)$   see section 9 of \cite{LevasseurAndStafford2017} and  Proposition \ref{sss232}. Further details can be found in  \cite{Levasseur1986,LevasseurSmithStafford1988,Goncharov1982} and section 1.1 of \cite{KobayashiMano2011}.

\begin{remark*}
   We note that the Euler operator $\sum_ix_i\partial_{x_i}$ belongs to    $\tilde{\fs}$ but not to ${\fs}$. The shifted Euler operator $h:=2\sum_ix_i\partial_{x_i}+(n-2)$ does belong to $\fs$. The scalar shift $(n-2)$ arises from the natural action on half-forms. 
\end{remark*}

\subsubsection{The   isomorphism $\mathcal{D}^{0,1}(C)\longrightarrow \mathcal{D}^{2,-1}(C)$}\label{s224}
In this section we show that the form $Q$ induces  a canonical endomorphism $\Phi^Q$ of the vector space  $\mathcal{D}(C)$.  We focus on  its restriction to the subspace $\mathcal{D}^{2,-1}(C)$ of $\fs$ which will be needed later on when we define the Schr\"odinger family.

We let $B_Q$ be the bilinear symmetric form associated to $Q$. Recall that coordinate functions $\{z_i\}$ on $V$  belong to $\mathcal{O}_C$, and using the embedding   $\mathcal{O}_C\hookrightarrow \mathcal{D}(C)$, we can think of each $z_i$ as a differential operator on the cone. 
\begin{proposition*}
  There exists a linear morphism   $\Phi^Q:\mathcal{D}(C)\longrightarrow \mathcal{D}(C)$ such that
  if the form  $Q$ is given in coordinates by $\sum_{i,j=1}^nB_{ij}z_iz_j $, then 
for any $d\in \mathcal{D}(C)$,
\[\Phi^Q(d)=  \sum_{i,j=1}^nB_{ij}z_idz_j.\]
 Moreover $\Phi^Q$ is $O(Q)$-equivariant. 
\end{proposition*}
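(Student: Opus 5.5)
The plan is to define $\Phi^Q$ intrinsically, without coordinates, so that well-definedness and $O(Q)$-equivariance come out automatically. Then I will check that in coordinates it is given by the stated formula.

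For the intrinsic definition, note that $B_Q$ gives an element of $\operatorname{Sym}^2(V^*)$, namely the tensor $\beta=\sum_{i,j}B_{ij}z_i\otimes z_j\in V^*\otimes V^*$. This tensor does not depend on the choice of basis. Indeed, if $z_i'=\sum_k A_{ik}z_k$ is another coordinate system, then $Q=\sum B'_{ij}z'_iz'_j$ with $B'=(A^{-1})^TBA^{-1}$, so $\sum B'_{ij}z'_i\otimes z'_j=\sum B_{kl}z_k\otimes z_l$. Since $V^*\subset\mathcal{O}_C\subset\mathcal{D}(C)$ by the embedding of Section \ref{s212} (linear functionals restricted to $C$, acting by multiplication), the map
\[
V^*\times\mathcal{D}(C)\times V^*\longrightarrow\mathcal{D}(C),\qquad (\varphi,d,\psi)\longmapsto \varphi\, d\,\psi
\]
is trilinear. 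It therefore induces a linear map $V^*\otimes V^*\to\operatorname{End}_{\C}(\mathcal{D}(C))$, given by $\varphi\otimes\psi\mapsto(d\mapsto\varphi d\psi)$. I then set $\Phi^Q(d)$ to be the image of $\beta$ applied to $d$. Unwinding the definition gives exactly $\Phi^Q(d)=\sum_{i,j}B_{ij}z_i\,d\,z_j$ in any coordinates, which proves existence and basis-independence together.

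For equivariance, $O(Q)$ acts on $V$, hence on $\mathcal{O}_V$. Because it preserves $Q$, it preserves the ideal $I$ and so acts on $\mathcal{O}_C$. It then acts on $\mathcal{D}(C)$ by conjugation, $g\cdot d=g\circ d\circ g^{-1}$, and this is an algebra automorphism. On $V^*\subset\mathcal{O}_C$ this conjugation action agrees with the contragredient action, since $g\circ m_\varphi\circ g^{-1}=m_{g\varphi}$. Consequently $g\cdot(\varphi d\psi)=(g\varphi)(g\cdot d)(g\psi)$, so the trilinear map above is $O(Q)$-equivariant. The tensor $\beta$ is fixed by $O(Q)$, because $O(Q)$ preserves $B_Q$. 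It follows that $g\cdot\Phi^Q(d)=\Phi^Q(g\cdot d)$.

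The only delicate point is bookkeeping: I must make sure the action of $O(Q)$ on $\mathcal{D}(C)$ is the one induced through the identification $\mathcal{D}(C)=\mathcal{D}(V)_I/I\mathcal{D}(V)$. The conjugation action on $\mathcal{D}(V)$ preserves both $\mathcal{D}(V)_I$ and $I\mathcal{D}(V)$, so it descends to the quotient compatibly, and there is no real obstacle. The argument is otherwise formal.
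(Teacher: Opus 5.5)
Your argument is correct and is essentially the paper's proof, which the paper only describes as ``a straightforward calculation.'' Realizing $\Phi^Q$ as the action of the $O(Q)$-invariant tensor $\beta\in V^*\otimes V^*$ through the trilinear map $(\varphi,d,\psi)\mapsto\varphi d\psi$ is exactly that calculation, written out cleanly; note that you implicitly take $B_{ij}$ to be symmetric (the Gram matrix of $B_Q$), and the coordinate formula is well defined only under that assumption.
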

The proof is by a straightforward calculation.

 The restriction of $\Phi^Q$  to $\mathcal{D}^{2,-1}(C)$ is an isomorphism onto $\mathcal{D}^{0,1}(C)$. Before showing this we shall define  a map $\Psi^Q: \mathcal{D}^{0,1}(C)\longrightarrow \mathcal{D}^{2,-1}(C)$
 that will be proportional to the inverse of the mentioned isomorphism. For technical reasons  it  is more convenient to work with this un-normalized version $\Psi^Q$. 

We set $n=\operatorname{dim(V)}$ and  recall that $h=2\sum_ix_i\partial_{x_i}+(n-2) \in \fs$.  We let $\Box_Q$ be  the canonical  second-order differential operator on $V$ corresponding to the symmetric bilinear form $B_Q$. For example, if in coordinates $Q(v)=\sum_{i=1}^nA_iz_i^2$ then $\Box_Q=\sum_{i=1}^nA_i^{-1} (\partial_{z_i})^2$.

The quadratic form $Q$ induces an isomorphism   $\tau:V\longrightarrow V^*$. In this way we can assign to any linear functional $w\in V^*=\mathcal{D}^{0,1}(C)$ a unique vector  $v_w\in V$. We let $\partial_w^Q$   be the directional derivative corresponding to  $v_w$.

We define a linear map $\Psi^Q:V^*\longrightarrow \mathcal{D}^{2,-1}(V)$ by \[\Psi^Q(w)  =w \Box_Q-h\partial_w^Q,\quad  \forall w \in V^*.\]

\begin{lemma*}
 The linear map   $\Psi^Q$ is an isomorphism onto its image. The image  is equal to  $\mathcal{D}^{2,-1}(V)_I$, and  
\[\Phi^Q\circ \Psi^Q=(2-n)\mathbb{I}_{V^*}.\]
\end{lemma*}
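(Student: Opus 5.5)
Throughout I will work in orthonormal coordinates $z_1,\dots,z_n$ on $V$, so that $Q=\sum_i z_i^2$, $\Box_Q=\sum_i\partial_i^2$, $E=\sum_i z_i\partial_i$, $h=2E+(n-2)$, and for $w=\sum_i w_iz_i$ one has $\partial_w^Q=\sum_i w_i\partial_i$. The three claims are (a) $\Psi^Q(w)\in\mathcal{D}^{2,-1}(V)_I$ and $\Psi^Q$ is injective, (b) the image is all of $\mathcal{D}^{2,-1}(V)_I$, and (c) $\Phi^Q\circ\Psi^Q=(2-n)\mathbb{I}$ in $\mathcal{D}(C)=\mathcal{D}(V)_I/I\mathcal{D}(V)$. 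All of them follow from commutator identities in the Weyl algebra. The basic ones are
\[[\Box_Q,Q]=4E+2n,\qquad [\partial_w^Q,Q]=2w,\qquad [h,Q]=4Q,\qquad [h,w]=2w.\]

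\textbf{Step 1: restrictability and injectivity.} It is clear that $\Psi^Q(w)$ has order $\le 2$ and homogeneity $-1$. Using the identities above,
\[[w\Box_Q,Q]=w(4E+2n).\]
For the second term,
\[[h\partial_w^Q,Q]=2hw+4Q\partial_w^Q=w(4E+2n)+4Q\partial_w^Q .\]
Hence $[\Psi^Q(w),Q]=-4Q\partial_w^Q\in I\mathcal{D}(V)$. This gives $\Psi^Q(w)(Qf)=Q\bigl(\Psi^Q(w)-4\partial_w^Q\bigr)f\in I$, so $\Psi^Q(w)(I)\subset I$.

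Injectivity can be read off the principal symbol $w(\xi)\,Q^*(\xi)-2\langle z,\xi\rangle\langle v_w,\xi\rangle$, which is nonzero for $w\ne0$. It will also follow from Step 3, which shows injectivity even after passing to $\mathcal{D}(C)$.

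\textbf{Step 2: the image is all of $\mathcal{D}^{2,-1}(V)_I$.} I expect this to be the main obstacle; the plan is an upper bound $\dim\mathcal{D}^{2,-1}(V)_I\le n$.

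Write a general element as $D=\sum_{i,j}a_{ij}(z)\partial_i\partial_j+\sum_i c_i\partial_i$, with $a_{ij}=a_{ji}$ linear forms and $c_i$ constants. Then
\[[D,Q]=4\sum_j\Bigl(\sum_i a_{ij}z_i\Bigr)\partial_j+2\sum_i a_{ii}+2\sum_i c_iz_i .\]
The condition $D(I)\subset I$ means $[D,Q]f\in I$ for all $f$. Testing on $f=1$ and $f=z_j$ forces two things. First, each quadratic coefficient must be a constant multiple of $Q$, say $\sum_i a_{ij}z_i=\alpha_jQ$. Second, the degree-one zero-order term must vanish, which determines $c$ uniquely from $a$.

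The map $a\mapsto\bigl(\sum_i a_{ij}z_i\bigr)_j$ is injective. Indeed, if it vanishes then the coefficients $a_{ij,k}$ of $a_{ij}=\sum_k a_{ij,k}z_k$ are antisymmetric in $(i,k)$ and symmetric in $(i,j)$, and such a tensor is zero. So $D$ is determined by the vector $\alpha\in\C^n$, giving $\dim\le n$. Combined with the injectivity of $\Psi^Q$ from Step 1, this yields equality.

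Finally, $I\mathcal{D}(V)$ meets $\mathcal{D}^{2,-1}(V)$ only in $0$, since $Q$ has degree $2$ and $\mathcal{D}^{2,-3}(V)=0$. Hence nothing is lost on passing to $\mathcal{D}^{2,-1}(C)$.

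\textbf{Step 3: the composition.} Set $\Psi=\Psi^Q(w)$. Since $\sum_i z_i\Psi z_i=Q\Psi+\sum_i z_i[\Psi,z_i]$ and $Q\Psi\in I\mathcal{D}(V)$, I only need $\sum_i z_i[\Psi,z_i]$. The two commutators are
\[[w\Box_Q,z_i]=2w\partial_i,\qquad [h\partial_w^Q,z_i]=w_ih+2z_i\partial_w^Q .\]
Summing,
\[\sum_i z_i[\Psi,z_i]=2wE-wh-2Q\partial_w^Q\equiv 2wE-w(2E+n-2)=(2-n)w \pmod{I\mathcal{D}(V)}.\]
This gives $\Phi^Q\circ\Psi^Q=(2-n)\mathbb{I}_{V^*}$. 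Since $n>2$, this reproves injectivity of $\Psi^Q$ into $\mathcal{D}(C)$ and shows that $\Phi^Q|_{\mathcal{D}^{2,-1}(C)}$ is an isomorphism onto $V^*$.
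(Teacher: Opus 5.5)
Your proof is correct and follows the paper's route: compute directly that $\Psi^Q(w)$ preserves $I$ and that $\Phi^Q\circ\Psi^Q=(2-n)\mathbb{I}_{V^*}$, then combine this with the bound $\dim\mathcal{D}^{2,-1}(V)_I\le n$, which you actually prove via the symmetric/antisymmetric tensor argument, whereas the paper only asserts it and cites Goncharov. The one slip is a typo in Step 1, where the principal symbol should read $w(z)\,Q^*(\xi)-2\langle z,\xi\rangle\langle v_w,\xi\rangle$; it is harmless because Step 3 gives injectivity anyway.
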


 \begin{proof}
     By direct calculations we see that   $\Psi^Q(V^*)\subseteq \mathcal{D}^{2,-1}(V)_I$  and for every  $w\in V^*$, $\Phi^Q\circ \Psi^Q(w)=(2-n)w$. In addition it is easy to show (and can be deduced from \cite{Goncharov1982}) that $\operatorname{dim}(\mathcal{D}^{2,-1}(V)_I)\leq \operatorname{dim}(V)$. 
 \end{proof}
 We recall that $V^*\simeq \mathcal{D}^{0,1}(C)$ and $\mathcal{D}^{2,-1}(V)_I\simeq \mathcal{D}^{2,-1}(C)$. By abuse of notation we denote the corresponding isomorphism $\mathcal{D}^{0,1}(C)\longrightarrow \mathcal{D}^{2,-1}(C)$ also by 
 $\Psi^Q$.

In Appendix \ref{JM} we relate $\Psi^Q$ to the Jacobson-Morozov theorem.

\subsubsection{The symmetry group}\label{223}
The group $ G_1= O(Q)\times \C^{\times}$
  naturally  acts  on $V$ in a way that  preserves $C$.  As a result   $G_1$ acts on  $\mathcal{D}(V)$  and $\mathcal{D}(C)$.  
 The spaces $\tilde{\fs}$ and $\fs$ are  $G_1$ stable. 
Since $\mathcal{D}^{0,1}(V)_I$, $\mathcal{D}^{1,0}(V)_I$, $\mathcal{D}^{2,-1}(V)_I $ trivially  intersect $I\mathcal{D}(V)$, the map from Proposition \ref{s212} induces a canonical $G_1$-equivariant isomorphism $\mu$ of  complex Lie algebras  from $\tilde{\fs}$ onto its image
\[ \mu(\tilde{\fs}):=\mathcal{D}^{0,1}(V)  \oplus  \mathcal{D}^{1,0}(V)_I \oplus  \mathcal{D}^{2,-1}(V)_I.\]
We will freely use the isomorphism  $\mu$ throughout the text without explicitly mentioning it.

In addition, one can verify the following facts that we will  use later on.
\begin{proposition*} 
As representations of $G_1$ we have the following relations: 
\begin{enumerate}
\item $\mathcal{D}^{0,1}(C)\cong  \mathcal{D}^{0,1}(V)_I=V^*$.  
\item $\mathcal{D}^{1,0}(C) \cong \mathcal{D}^{1,0}(V)_I= \mathfrak{o}(Q)\oplus \C 1 \oplus \C h $.
\item $ \mathcal{D}^{2,-1}(C)\cong \mathcal{D}^{2,-1}(V)_I \cong V $.  
\item $[\mathcal{D}^{0,1}(C),\mathcal{D}^{2,-1}(C)]\cong \mathfrak{o}(Q)\oplus \C h$ 
\end{enumerate}
In particular the decomposition of $\fs$ into irreducibles of  $G_1$ is  given by 
\[\fs \cong   V\oplus   \mathfrak{o}(Q)  \oplus \C      \oplus V^*.\] 
\end{proposition*}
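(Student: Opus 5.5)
The plan is to treat each of the four items through the identification $\mathcal{D}(C)\cong\mathcal{D}(V)_I/I\mathcal{D}(V)$ (Proposition of \S\ref{s212}), together with the isomorphism $\mu$, which lets us compute inside $\mathcal{D}(V)$. All the spaces in question are graded pieces $\mathcal{D}^{k,\ell}(V)_I$ with $k+\ell=1$, $k\le 2$. Each such piece is $G_1$-stable, because $O(Q)$ preserves $I$ and $\C^\times$ acts on $\mathcal{D}^{k,\ell}$ by the scalar $\alpha^\ell$. So the content of each item is a description as an $O(Q)$-module, with $\C^\times$ acting by a fixed character.

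\textbf{Items 1 and 3.} For item 1, every linear function satisfies $\varphi(I)\subset I$, so $\mathcal{D}^{0,1}(V)_I=V^*$. This meets $I\mathcal{D}(V)$ trivially, since elements of $I$ have degree at least $2$ as polynomials. For item 3, the Lemma of \S\ref{s224} gives $\Psi^Q:V^*\xrightarrow{\sim}\mathcal{D}^{2,-1}(V)_I$. One checks directly that $\Psi^Q(w)=w\Box_Q-h\partial^Q_w$ is $O(Q)$-equivariant: $\Box_Q$ and $h$ are $O(Q)$-invariant, and $w\mapsto\partial_w^Q$ is equivariant. Hence $\mathcal{D}^{2,-1}(V)_I\cong V^*\cong V$ as $O(Q)$-modules, using $\tau$. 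With the $\C^\times$-character $\alpha^{-1}$ this is the $G_1$-module $V$, the dual of the character on $V^*$.

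\textbf{Item 2.} An operator in $\mathcal{D}^{1,0}(V)$ has the form $D=\sum a_{ij}z_i\partial_j+c$, that is, an element of $\mathfrak{gl}(V)\oplus\C$. The condition $D(Q)\in I$ reduces by degree to $D(Q)=\lambda Q$. This says that $A=(a_{ij})$ lies in the conformal algebra $\mathfrak{o}(Q)\oplus\C\cdot\mathrm{id}$, so $D\in\mathfrak{o}(Q)\oplus\C E\oplus\C 1$, where $E$ is the Euler operator. Conversely, each such $D$ preserves $I$, because a first-order operator acts on $fQ$ by the Leibniz rule. Replacing $E$ by $h=2E+(n-2)$ gives the stated decomposition. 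Trivial intersection with $I\mathcal{D}(V)$ again follows by polynomial degree.

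\textbf{Item 4 and the decomposition of $\fs$.} The commutator $[w,\Psi^Q(w')]$ lies in $\mathcal{D}^{1,0}(C)$, and is computed from $[w,\Box_Q]=-2\partial^Q_w$, $[w,h]=-2w$ and $[w,\partial^Q_{w'}]=-B(w,w')$. This yields an expression of the shape
\[
-2w\partial^Q_{w'}+2w'\partial^Q_w+B(w,w')h,
\]
up to signs and normalization. Its antisymmetric part spans $\mathfrak{o}(Q)\cong\wedge^2V$ and its symmetric part spans $\C h$, giving $\mathfrak{o}(Q)\oplus\C h$. Since $\fs=[\tilde{\fs},\tilde{\fs}]$ with $\tilde{\fs}$ graded, $\fs$ contains $V^*=[h,V^*]$ and $V=[h,\mathcal{D}^{2,-1}]$, and its degree-zero part is $[\mathfrak{o}(Q),\mathfrak{o}(Q)]+[V^*,V]=\mathfrak{o}(Q)\oplus\C h$. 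The scalar $1$ is central and is not a commutator. The main obstacle is expected to be the sign and constant bookkeeping in this commutator, in particular checking that the $h$-coefficient is nonzero; the rest is formal.
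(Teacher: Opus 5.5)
Your proposal is correct and follows essentially the same route as the paper, which asserts these facts by direct calculation in $\mathcal{D}(V)$. The paper's ingredients are the lift $\mu$, the Lemma on $\Psi^Q$, the description $\mathcal{D}^{1,0}(V)_I=\operatorname{span}_{\C}\{L_{u,v}\}\oplus\C h\oplus\C$, and the commutation relations of Appendix~\ref{crs}. One small correction: the commutator is $[w,\Psi^Q(w')]=2w\partial^Q_{w'}-2w'\partial^Q_w+B(w,w')h$, consistent with $[\Psi^Q(\tau(v)),\tau(u)]=2L_{v,u}-B(v,u)h$, so your antisymmetric part has the opposite sign, though this does not affect the spanning argument.
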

We remark that a description of $ \mathcal{D}(C)$ is essentially  given in \cite{Goncharov1982}. We will also give an explicit description in the following section.

 \subsubsection{A spanning set for $\tilde{\fs}$    }\label{s2.4}
 In this section we specify a spanning set for $\tilde{\fs}$ that can be used to prove several results that are stated in the following sections.

 Since $\Psi^Q: \mathcal{D}^{0,1}(C)\longrightarrow \mathcal{D}^{2,-1}(C)$, which  was introduced in section \ref{s224}, is an isomorphism, clearly \begin{eqnarray}\nonumber
&&\mathcal{D}^{0,1}(C)\cong \mathcal{D}^{0,1}(V)=\{\tau(v)|v\in V \}, \\ \nonumber
&&\mathcal{D}^{2,-1}(C)\cong  \mathcal{D}^{2,-1}(V)_I=\{\Psi^Q(\tau(v))|v\in V\}.
\end{eqnarray}
For $u,v\in V$ we define, 
\[L_{u,v}:=\tau(u)\circ \partial_v-\tau(v)\circ \partial_u\in \mathcal{D}^1(V),  \]
where $\partial_v$ is the directional derivative corresponding to $v$ and $\tau:V\longrightarrow V^*$ is the isomorphism that the   quadratic form $Q$ induces.  Note that $L_{u,v}\neq 0$ if and only if $\{u,v\}$ is linearly independent. By direct calculations we can show that 
\begin{eqnarray}\nonumber
&&\mathcal{D}^{1,0}(C)\cong \mathcal{D}^{1,0}(V)_I=\operatorname{span}_{\C}\{L_{u,v}|u,v\in V \}\oplus \C h \oplus \C.
\end{eqnarray}
We remark that $\operatorname{span}_{\C}\{L_{u,v}|u,v\in V \}$ is canonically identified with the Lie algebra $\mathfrak{o}(Q)$.

Since $\tilde{\fs} = \mathcal{D}^{0,1}(C)\oplus \mathcal{D}^{1,0}(C)\oplus \mathcal{D}^{2,-1}(C)$ we have specified a generating set for $\tilde{\fs}$.  The commutation relations of vectors in the specified spanning set are easy to calculate. They are given in Appendix \ref{crs}.

\subsubsection{Compatibility with real forms}\label{CmpRF}
The results presented in this section (Section \ref{se2}) for complex quadratic spaces hold with appropriate changes to their real forms.  Here we would like to point out  two facts.

If $(\mathcal{V},q)$ is a real form of $(V,Q)$, the algebra of real algebraic differential operators on the corresponding real cone (as a real algebraic variety) is a real form of the algebra $\mathcal{D}(C)$ of complex  algebraic differential operators on the complex cone. This induces a real form $\tilde{\fs}(\R)$   of the complex Lie algebra $\tilde{\fs}\subset \mathcal{D}(C)$. 
If the real form $(\mathcal{V},q)$  has 
signature $(\ell,m)$  the induced real form on $\fs$ is isomorphic to $\mathfrak{o}(\ell+1,m+1)$. This is    explained  in  Appendix \ref{s2.5}.

In section \ref{canreal} we introduce another  real form $\tilde{\fs}^*(\R)$ of $\tilde{\fs}$ which will be the one we shall use most. The two real forms are different but isomorphic. 

 We  remark that a real form of a complex vector  space $V$ (or a quadratic space or a Lie algebra) is the same thing as an antiholomorphic involution $\sigma$ of $V$.

\section{Quadratic spaces and reductive dual pairs}\label{sect3}
In this section we  explain how a pointed real quadratic space gives rise to a reductive dual pair in $\fs$. This will play an important role for the  Schr\"odinger family that is    introduced  in Section \ref{502}. 
\subsubsection{Pointed  Lorentzian quadratic spaces }
Our convention for a Lorentzian quadratic form  on an  $n$-dimensional vector space $\mathcal{V}$ is such that the signature of $q$ is $(n-1,1)$.

\begin{defn*}
\textit{A pointed Lorentzian quadratic space} (PLQS) is  a triple $(\mathcal{V},q,w)$ such that $(\mathcal{V},q)$ is a finite-dimensional (real) Lorentzian quadratic space and $w$ is a linear functional on $\mathcal{V}$ of norm $-1$. That is, under the dual form $q^*$ on $\mathcal{V}^*$, $q^*(w)=-1$.
\end{defn*}

\begin{remark*}
 The form $q$ induces a canonical isomorphism  $\mathcal{V}\cong \mathcal{V}^*$ and in this way the dual space also carries a structure of a PLQS.     
\end{remark*} 

Throughout the text we will also work with the complexification $(V,Q)$ of $(\mathcal{V},q)$.

\subsubsection{The reductive dual pair symmetry of a PLQS}\label{PLQS}
In this section, with a PLQS we associate a reductive dual pair in the quantum dynamical symmetry $\fs$.

Throughout this paragraph, we let $(\mathcal{V},q,w)$ be   an $n$-dimensional  PLQS with $n>2$. 

Howe introduced the notion of a reductive dual pair, see e.g., \cite{MR789290}. Given a reductive Lie algebra $\fg$ and a pair of reductive Lie subalgebras $(\mathfrak{k},\mathfrak{l})$ of $\fg$, the pair is called a \textit{reductive dual pair} if $\mathfrak{k}$ is equal to the centralizer of $\mathfrak{l}$ in $\fg$ and $\mathfrak{l}$ is equal to the centralizer of $\mathfrak{k}$ in $\fg$.

The orthogonal group $O(Q)$ acts on  the dual space $V^*$. 
The stabilizer ${K}_{w}:=\operatorname{stab}_{O(Q)}(w)$ is isomorphic to $O(n-1,\C)$. 
We let $\mathfrak{k}_{w}\subset \mathcal{D}^{1,0}(C)$ be the complex Lie algebra of $K_{w}$.  We
define $\mathfrak{l}_{w}:=\operatorname{Cent}_{\fs}(\mathfrak{k}_{w})$.

We also note that the real Lie algebra $\mathfrak{o}(q)$ is given by  $\operatorname{span}_{\R}\{L_{u,v}|u,v\in \mathcal{V}\}$ and isomorphic to $\mathfrak{so}(n-1,1)$.
 By Proposition \ref{s2.5}  the real form $\fs(\R)$ is isomorphic to $\mathfrak{so}(n,2)$. We 
define $\mathfrak{k}_{w}(\R):=\mathfrak{k}_{w}\cap \fs(\R)$ and $\mathfrak{l}_{w}(\R):=\mathfrak{l}_{w}\cap \fs(\R)$.

\begin{proposition*} 
With the above-mentioned notation.
\begin{enumerate}
    \item The pair $(\mathfrak{k}_{w},\mathfrak{l}_{w})$ is a reductive dual pair in $\fs$. It is isomorphic to $(\mathfrak{o}(n-1,\C),  \mathfrak{sl}_2(\C))$ in $\mathfrak{so}(n+2,\C)$.  
    \item The pair $(\mathfrak{k}_{w}(\R),\mathfrak{l}_{w}(\R))$ is a real reductive dual pair in $\fs(\R)$ that is isomorphic to $(\mathfrak{so}(n-1),\mathfrak{sl}_2(\R))$  in $\mathfrak{so}(n,2)$. 
\end{enumerate}
\end{proposition*}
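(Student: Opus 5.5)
The plan is to identify $\mathfrak{l}_w$ explicitly, using the $G_1$-decomposition $\fs\cong V\oplus \mathfrak{o}(Q)\oplus \C h\oplus V^*$ from Proposition \ref{223}, and then to compute centralizers directly. Since $\mathfrak{k}_w\subset \mathfrak{o}(Q)$ commutes with $h$ and preserves each graded piece, its centralizer $\mathfrak{l}_w$ is the direct sum of the $\mathfrak{k}_w$-invariants in each piece. Write $V^*=\C w\oplus w^{\perp}$; here $w^\perp$ is the standard representation of $K_w\cong O(n-1,\C)$, so the $\mathfrak{k}_w$-invariants in $V^*=\mathcal{D}^{0,1}(C)$ are $\C w$. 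By $O(Q)$-equivariance of $\Psi^Q$ (inherited from that of $\Phi^Q$), the invariants in $\mathcal{D}^{2,-1}(C)$ are $\C\Psi^Q(w)$. In $\mathfrak{o}(Q)\cong\wedge^2 V$ we have $\wedge^2 V=\wedge^2 w^\perp\oplus (v_w\wedge w^\perp)$, and for $n-1\geq 3$ neither summand has nonzero invariants (the first is the adjoint representation of the simple algebra $\mathfrak{so}(n-1)$; for $n=3$ one uses the full group $K_w$, or handles it separately). Hence $\mathfrak{l}_w=\operatorname{span}\{w,h,\Psi^Q(w)\}$.

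Next I would verify from the commutation relations of Appendix \ref{crs} that this span is $\mathfrak{sl}_2(\C)$. Since $w$ has degree $+1$, we get $[h,w]=2w$ and $[h,\Psi^Q(w)]=-2\Psi^Q(w)$, and $[\Psi^Q(w),w]$ lies in $\mathfrak{o}(Q)\oplus\C h$ and is $\mathfrak{k}_w$-invariant, hence equals $c\,h$. Computing $c$ from $\Psi^Q(w)=w\Box_Q-h\partial^Q_w$ gives a multiple of $Q^*(w)$, which is nonzero because $q^*(w)=-1$. Thus we have an $\mathfrak{sl}_2$-triple.

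For the reverse centralizer, I would show $\operatorname{Cent}_{\fs}(\mathfrak{l}_w)=\mathfrak{k}_w$. An element commuting with $h$ has degree $0$, so it lies in $\mathfrak{o}(Q)\oplus\C h$; commuting with $w$ kills the $h$-component (since $[h,w]=2w$) and forces the $\mathfrak{o}(Q)$-part $X$ to satisfy $X\cdot w=0$, i.e.\ $X\in\mathfrak{k}_w$. Conversely $\mathfrak{k}_w$ commutes with $\Psi^Q(w)$ by equivariance. Both algebras are reductive, which proves part (1). Alternatively, one can transport to the standard model $\mathfrak{so}(n+2)=\mathfrak{so}(n-1)\oplus\mathfrak{so}(3)$ block decomposition, where $\mathfrak{so}(3)\cong\mathfrak{sl}_2$.

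For part (2), the real form is compatible: $w$ is real and $\Psi^Q$ is defined over $\R$, so $\mathfrak{l}_w(\R)$ is the real span of $w,h,\Psi^Q(w)$ (up to the conventions of the real structure), and $\mathfrak{k}_w(\R)$ is the stabilizer of a timelike vector in $\mathfrak{so}(n-1,1)$, which is $\mathfrak{so}(n-1)$. Taking centralizers commutes with taking real points, because the real structure preserves all the objects involved. To decide between $\mathfrak{sl}_2(\R)$ and $\mathfrak{su}(2)$, note that $h$ acts on $\mathfrak{l}_w(\R)$ with real eigenvalues $\pm2$, so it is split; hence $\mathfrak{l}_w(\R)\cong\mathfrak{sl}_2(\R)$. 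I expect the main obstacle to be the explicit bracket $[\Psi^Q(w),w]=c\,h$ with $c\neq0$, together with bookkeeping of the real form, including the sign and scaling conventions that the paper fixes in Appendix \ref{s2.5}.
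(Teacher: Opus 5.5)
Your proposal is correct and follows essentially the paper's own proof: the paper likewise uses the commutation relations of Appendix~\ref{crs} to identify $\mathfrak{l}_w=\operatorname{span}_{\C}\{w,h,\Psi^Q(w)\}\cong\mathfrak{sl}_2(\C)$ and $\mathfrak{k}_w=\operatorname{span}_{\C}\{L_{u,v}\mid u,v\in V_w^{\perp}\}$, and declares the real case similar, so your organization by $\operatorname{ad}h$-degree and $\mathfrak{k}_w$-invariants simply supplies the omitted details (one small fix: in the reverse centralizer, the $h$-component vanishes because $[X,w]$ is orthogonal to the non-isotropic $w$ for $X\in\mathfrak{o}(Q)$, so it cannot cancel $2aw$). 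Your hesitation at $n=3$ is justified, but no separate argument can fix that case: there $\mathfrak{k}_w\cong\mathfrak{o}(2,\C)$ is abelian and lies in its own centralizer, so $\operatorname{Cent}_{\fs}(\mathfrak{k}_w)\cong\mathfrak{sl}_2(\C)\oplus\C$, and the statement as written needs $n\geq 4$ (or centralizers taken with respect to the group $K_w$).
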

The proof is by direct calculations and is given in Appendix \ref{l}.

\section{The canonical Hilbert space $\boldsymbol{H}$.}\label{4}
In this section, starting with an $n$-dimensional  PLQS $(\mathcal{V},q,w)$ we construct a canonical Hilbert space $\boldsymbol{H}$.

Throughout  this section we will assume that we are given an $n$-dimensional  PLQS  $(\mathcal{V},q,w)$ with $n$ being an even integer such that $n>2$.

\subsection{The  canonical Hilbert space}\label{sub4.2}
In this section, starting with a PLQS $(\mathcal{V},q,w)$ we introduce a canonical positive density $|\omega|$ on the corresponding  real cone $C_0(\R)$. We define the canonical Hilbert space $\boldsymbol{H}$ of our PLQS to be the corresponding  $L^2$ space.  
 \subsubsection{An orientation of a complex quadratic space}

\begin{defn*}
 Let  $(V,Q)$ be a complex $n$-dimensional   quadratic space.
An \textit{orientation} of    $(V,Q)$ is  a top form  $\Omega$ on $V$  such that for any   set  of $n$ vectors $\{v_1,v_2,...,v_n\}\subset V$, 
\[\left(\Omega(v_1,v_2,...,v_n)\right)^2=\det \left(G\right),\]
where $G$ is the Gram matrix given by  $G_{i,j}=B(v_i,v_j)$, and $B$ is the bilinear symmetric form corresponding to $Q$. 

  An \textit{oriented quadratic space} is a quadratic space with a choice of orientation. 
\end{defn*}
It is easy to show that an orientation exists and is unique up to a sign.

\begin{example*}
If $Q$ is the quadratic form on $\C^n$ that is given in coordinates by $Q=z_1^2+z_2^2+...+z_n^2$,
 the two orientations of $(\C^n,Q)$ are given by 
    \[\pm dz_1\wedge dz_2 \wedge ... \wedge dz_n.\]
\end{example*}
 \subsubsection{An orientation of a real quadratic space}

\begin{defn*}
 An \textit{orientation} 
of  a real quadratic space $(\mathcal{V},q)$ is  an orientation  $\Omega$ of its complexification $(V,Q)$.
 
 An \textit{oriented real quadratic space} is a quadratic space with a choice of orientation (for its complexification). 
\end{defn*}
 
\begin{example*}
If $q$ is the quadratic form on $\R^n$ with signature $(\ell,m)$ that is given in coordinates by $q=x_1^2+...+x_{\ell}^2-x_{\ell+1}^2-...-x_{\ell+m}^2$. 
 The two orientations of $(\R^n,q)$ are given by 
    \[\pm (i)^{m} dx_1\wedge  ... \wedge dx_n.\]
\end{example*}

\subsubsection{The canonical top form on  $C_0$ }
Let $(V,Q,\Omega)$ be an oriented complex quadratic space.

We denote by $C_0$ the variety of smooth points of the corresponding complex cone $C$. Note that $C_0:=\{v\in V| Q(v)=0 \}\setminus\{0\}$. 

\begin{claim*}
    There is a unique algebraic differential $(n-1)$-form $\omega$ on $C_0$ characterized by the following property. 
    For any  extension $\widetilde{\omega}$ of $\omega$ to a rational  $(n-1)$-form on $V$,  the rational function $f$ on $V$ that is defined by $dQ\wedge \widetilde{\omega}=f \Omega$ is  equal to $1$ on $C_0$.
\end{claim*}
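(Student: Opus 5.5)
The plan is the standard Gelfand--Leray (residue) form construction, in two parts. Well-definedness and uniqueness come from showing that the condition determines $\omega$ pointwise on $C_0$. Existence is local: we build $\omega$ on an affine open cover of $C_0$ and check that the pieces agree on overlaps by the uniqueness argument.

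\emph{Uniqueness and independence of the extension.} Fix $p\in C_0$. Since $p\neq 0$ and $B$ is nondegenerate, $dQ_p=2B(p,\cdot)\neq 0$, so $C_0$ is smooth of dimension $n-1$ and $T_pC_0=\ker dQ_p$. Pick $u\in V$ with $dQ_p(u)=1$, and let $v_1,\dots,v_{n-1}$ span $T_pC_0$. Then
\[
(dQ\wedge\widetilde\omega)_p(u,v_1,\dots,v_{n-1})=\widetilde\omega_p(v_1,\dots,v_{n-1}),
\]
because every other term of the wedge expansion contains $dQ_p(v_i)=0$. The right side depends only on the restriction $\omega_p$ of $\widetilde\omega$ to $T_pC_0$. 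Evaluated on the same vectors, $f(p)\Omega_p(u,v_1,\dots,v_{n-1})$ equals $f(p)$ times a nonzero number. Hence the condition $f|_{C_0}=1$ is equivalent to
\[
\omega_p(v_1,\dots,v_{n-1})=\Omega_p(u,v_1,\dots,v_{n-1}).
\]
This fixes $\omega_p$ uniquely. It also shows that the right side is independent of the choice of $u$, since changing $u$ by an element of $\ker dQ_p$ does not change it. In particular the property holds for one extension if and only if it holds for all extensions (where they are regular near $C_0$), so the clause ``for any extension'' is consistent.

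\emph{Existence and algebraicity.} Choose coordinates $z_1,\dots,z_n$ with $\Omega=dz_1\wedge\cdots\wedge dz_n$. On the open set $U_i=\{\partial Q/\partial z_i\neq 0\}$, set
\[
\widetilde\omega_i=\frac{(-1)^{i-1}}{\partial Q/\partial z_i}\,dz_1\wedge\cdots\wedge\widehat{dz_i}\wedge\cdots\wedge dz_n,
\]
a rational form regular on $U_i$. Only the $dz_i$-term of $dQ$ survives in the wedge product, so $dQ\wedge\widetilde\omega_i=\Omega$ exactly. The sets $U_i\cap C_0$ cover $C_0$ because $dQ\neq0$ there. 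By the uniqueness step, the restrictions of the $\widetilde\omega_i$ agree on the overlaps $U_i\cap U_j\cap C_0$. They therefore glue to a regular algebraic $(n-1)$-form $\omega$ on $C_0$.

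Finally, $\omega$ has the required property for an arbitrary rational extension $\widetilde\omega$, at least at points of $C_0$ where $\widetilde\omega$ is regular. Near such a point, $\widetilde\omega-\widetilde\omega_i$ restricts to zero on $C_0$. By the pointwise computation above, this difference then contributes nothing to $f$ on $C_0$, so $f=1$ there. The only delicate point is exactly this: making sure the pointwise argument applies to any rational extension regular near $C_0$. Since $f$ is rational, it suffices to check $f=1$ on a dense open subset of $C_0$ and conclude by continuity.
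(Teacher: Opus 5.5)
Your proposal is correct and follows the paper's approach. The paper builds $\omega$ locally by choosing, near each $p\in C_0$, an $(n-1)$-form $\eta_p$ with $dQ\wedge\eta_p=\Omega$ and restricting it to $C_0$; your forms $\widetilde\omega_i$ on $U_i=\{\partial Q/\partial z_i\neq 0\}$ are a concrete choice of such $\eta_p$, and your pointwise computation showing that $(dQ\wedge\widetilde\omega)_p$ depends only on $\widetilde\omega_p|_{T_pC_0}$ supplies the uniqueness and gluing argument that the paper leaves implicit.
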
 
The form $\omega$ can be  constructed locally as follows.    For every $p\in C_0$, there is an open neighborhood $U_p$ of $p$ in $V$ and an $(n-1)$-form $\eta_p$ on $U_p$ such that $dQ\wedge \eta_p= \Omega$ on $U_p$. On $U_p\cap C_0$, 
$\omega|_{U_p\cap C_0}=\eta_p|_{U_p\cap C_0}$.

Hence an  orientation $\Omega$  on  $(V,Q)$ gives rise to a canonical top form  $\omega$ on $C_0$.   Clearly $\omega$ is $SO(Q)$-invariant.

\subsubsection{The canonical density on a  Lorentzian space}\label{423}
In this section we show that a Lorentzian quadratic space carries a canonical density. 
 
We let $(\mathcal{V},q)$ be a real Lorentzian oriented  quadratic space with complexification $(V,Q)$ and orientation $\Omega$.
The real cone $C_0(\R):=C_0\cap \mathcal{V}$ is a smooth manifold. Restriction of the canonical form  $\omega$ to $C_0(\R)$ gives a complex-valued smooth $(n-1)$-form. 
  
Taking the absolute value defines a canonical positive density $|\omega|$ on $C_0(\R)$.  It is independent of the choice of orientation.

Assuming that in terms of local coordinates  $\{z_1,z_2,...,z_n\}$ on $V$ the form $Q$ is given by ${Q}=z_1^2+...+z_{n-1}^2-z_n^2$, 
then the form $\omega$  is given on the open set $U=\{p\in C_0|z_{n}(p)\neq 0 \}$ by 
\[\omega = \frac{\pm i}{2z_n} dz_1\wedge ...\wedge dz_{n-1}.\]

The real cone $C_0(\R)$   has a decomposition into connected components 
\[C_0(\R)=C_+(\R)\sqcup C_-(\R), \]
which is determined by the sign of $z_n$.

On   each of the half cones $C_+(\R),C_-(\R)$ the canonical positive density $|\omega|$   is given by 
\[ |\omega|=  \frac{1}{2|z_{n}|} |dz_1\wedge ...\wedge dz_{n-1}|.\]
 We can express $z_n$ on $C_{\pm}(\R)$ as $\pm\sqrt{z_1^2+...+z_{n-1}^2}$.

\subsubsection{The  canonical Hilbert space $\boldsymbol{H}$}\label{canhil}

We define the canonical  Hilbert space $\boldsymbol{H}$ associated with the PLQS $(\mathcal{V},q,w)$ to be  the  $L^2$ space corresponding to ${C}_{0}(\R)$ with the density  $|\omega|$, explicitly $\boldsymbol{H}=L^2(C_0(\R),|\omega|)$. The Hilbert space does not depend on the orientation of $(V,Q)$.  

The decomposition $C_0(\R)=C_+(\R)\sqcup C_-(\R)$ induces a decomposition of Hilbert spaces 
\begin{eqnarray}
  &&\boldsymbol{H}= \boldsymbol{H}_{+} \oplus \boldsymbol{H}_-.
 \end{eqnarray}

\section{The  distinguished self-adjoint action of the algebra $\mathcal{D}(C)$ on $\boldsymbol{H}$}\label{ddd}
In this section we describe  a distinguished "Schwartz" subspace $\mathcal{S}(\boldsymbol{H})$ of $\boldsymbol{H}$ that carries    a self-adjoint action of the algebra $\mathcal{D}(C)$. There are different methods to construct this Schwartz subspace. We first construct the  group $G=G(\mathcal{V})$ of quantum symmetries of our Lorentzian quadratic space $(\mathcal{V},q)$.  This group is an extension of $O(\mathcal{V})$.   We describe a canonical extension of the representation of $O(\mathcal{V})$ on    $\boldsymbol{H}$ to a unitary representation $\pi$ of $G$ on $\boldsymbol{H}$.
We construct 
$\mathcal{S}(\boldsymbol{H})$
as the subspace  of smooth vectors for $\pi$,   which we think of as a suitable  Schwartz subspace of $\boldsymbol{H}$.

In Subsection \ref{sub4.1}  we start by considering a general notion of an action of a $*$-algebra on a Hilbert space and in particular symmetric and   self-adjoint actions. 

In Subsection \ref{sub4.3} we  introduce the intermediate subspace  $C_c^{\infty}(C_0(\R))$ of smooth compactly supported functions on the cone. This subspace is the domain of  a symmetric action of the algebra $\mathcal{D}(C)$.

In Subsection \ref{seext} we recall that $\boldsymbol{H}$ carries a distinguished unitary irreducible representation  $\pi$ of the group $G$. We show that the canonical subspace $\mathcal{S}(\boldsymbol{H})$  consisting of smooth vectors for $\pi$  carries a self-adjoint action of  $\mathcal{D}(C)$ which is an extension of the above-mentioned symmetric action.

Throughout  this section we will assume that we are given an $n$-dimensional  PLQS  $(\mathcal{V},q,w)$ with $n$ being an even integer such that $n>2$.
\subsection{Action of a $*$-algebra on a Hilbert space}\label{sub4.1} 
In this section we recall the notion of a partial  action of a $*$-algebra on a Hilbert space and discuss some of its  properties.  We essentially follow \cite{MR283580}  (also see \cite{MR333743,MR3722559}).

\subsubsection{A partial action of a $*$-algebra}\label{dense}
Recall that   a complex $*$-algebra $A$ is an associative unital algebra over $\C$ that is equipped with  an involution $*:A\longrightarrow A$ which is anti-linear  anti-homomorphism. Explicitly
$\forall a,b\in A$ and $\alpha\in \C$,
\begin{eqnarray}\nonumber
&&(a^*)^*=a, (a+b)^*=a^*+b^*, (ab)^*=b^*a^*, 1^*=1, (\alpha a)^*=\overline{\alpha}a^*.
\end{eqnarray}
The following definition is a generalization of the theory   of unbounded operators on a Hilbert space.

Let $A$ be a complex $*$-algebra  and $\boldsymbol{H}$  a complex Hilbert space.

\begin{defn*}
A \textit{partial action} of  $A$  on   $\boldsymbol{H}$ is a pair $(\rho,\mathcal{W})$ where 
$\mathcal{W}$ is a linear subspace  of $\boldsymbol{H}$, and 
$\rho: A\longrightarrow \operatorname{End}(\mathcal{W})$ is an action of the unital associative algebra $A$ on   $\mathcal{W}$.

We  call $\mathcal{W}$ the \textit{domain} of the action.

For two partial actions $(\rho,\mathcal{W})$ and $(\rho',\mathcal{W}')$ of the same $*$-algebra on the same Hilbert space, we  write $\rho\preceq \rho'$ whenever $\rho'$ is an extension of $\rho$.
\end{defn*}

\subsubsection{A dense action of a $*$-algebra and its adjoint}
A partial action $(\rho,\mathcal{W})$ is called a \textit{dense action} if the domain $\mathcal{W}$ is dense in $\boldsymbol{H}$. 
For such a dense action, we define another partial action  of $A$ on $\boldsymbol{H}$,  the \textit{adjoint  action} $(\rho^{*}, \mathcal{W}^{*})$ as follows. 
The  domain $\mathcal{W}^{*}$
consists of all $v\in \boldsymbol{H}$ for which for every $a\in A$, the linear functional on $\mathcal{W}$ that is given by
\[x\longmapsto \langle \rho(a)x,v\rangle\]
is bounded by $C(a,v)\| x\|$ for some positive constant  $C(a,v)$. The adjoint action  $\rho^{*}$ is uniquely defined  by 
\[\langle \rho^{*}(a)v, x\rangle=\langle v, \rho(a^*)x\rangle\]
for every $a\in A$, $v\in \mathcal{W}^{*}$ and $x\in \mathcal{W}$. 
Obviously  $\rho\preceq \rho'$ implies that $(\rho')^{*}\preceq \rho^{*}$.

\subsubsection{Symmetric, self-adjoint and essentially self-adjoint actions}\label{413}

\begin{defn*} \hspace{3cm}
\begin{enumerate}
    \item  A dense action $\rho$ of $A$  on a complex Hilbert space $\boldsymbol{H}$ is \textit{symmetric} if its adjoint $\rho^{*}$ is an extension of  $\rho$, that is, if $\rho \preceq \rho^{*}$.  In other words  this means that 
\[\langle \rho(a)x,v\rangle=\langle x,\rho (a^*)v\rangle,\]
for   $a\in A$, $v,x\in \mathcal{W}$.
\item A symmetric dense action $\rho$  is \textit{self-adjoint} if it is equal to its adjoint, that is, if  
$\rho = \rho^{*}$.
\item A symmetric dense action $\rho$  is \textit{essentially self-adjoint} if $\rho^{*}$ is self-adjoint.
\end{enumerate}
 
\end{defn*}

\begin{proposition*}
     Let   $\rho$ be a symmetric dense action of $A$  on a complex Hilbert space $\boldsymbol{H}$.   The action $\rho$ is essentially self-adjoint if and only if its adjoint is symmetric.
     Moreover, in that case for every symmetric extension $\rho'$ of $\rho$ we have $(\rho')^{*}=\rho^{*}$.
\end{proposition*}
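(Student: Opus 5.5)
We work purely with the definitions of Subsection \ref{413} and one general observation. Since the definition of the adjoint action involves only the domain and $\rho(a^*)$, $\rho\preceq\rho'$ implies $(\rho')^{*}\preceq\rho^{*}$.

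\textbf{First step: a symmetric extension sits between $\rho'$ and $\rho^{*}$.} Let $\rho'$ be any symmetric extension of $\rho$. Then $\rho\preceq\rho'\preceq(\rho')^{*}\preceq\rho^{*}$. The middle relation is the symmetry of $\rho'$, and the last is the general observation applied to $\rho\preceq\rho'$.

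\textbf{Second step: the forward implication.} If $\rho$ is essentially self-adjoint, then $\rho^{*}$ is self-adjoint by definition. In particular $\rho^{*}$ is symmetric.

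\textbf{Third step: the converse.} Assume $\rho^{*}$ is symmetric. It is dense, because its domain $\mathcal{W}^{*}$ contains the dense domain $\mathcal{W}$ (as $\rho\preceq\rho^{*}$). Applying the first step with $\rho'=\rho^{*}$ gives
\[\rho^{*}\preceq(\rho^{*})^{*}\preceq\rho^{*}.\]
A partial action extending another with the same domain coincides with it, so $(\rho^{*})^{*}=\rho^{*}$. Hence $\rho^{*}$ is self-adjoint, i.e.\ $\rho$ is essentially self-adjoint.

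\textbf{Fourth step: the "moreover" part.} Let $\rho$ be essentially self-adjoint and $\rho'$ a symmetric extension of $\rho$. From $\rho'\preceq\rho^{*}$ in the first step, the general observation gives $\rho^{*}=(\rho^{*})^{*}\preceq(\rho')^{*}$. The first step also gives $(\rho')^{*}\preceq\rho^{*}$. Together these yield $(\rho')^{*}=\rho^{*}$.

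\textbf{Where care is needed.} The argument is routine; the only point to watch is that each adjoint is taken of a dense action. This holds because every action involved extends $\rho$, whose domain is dense.
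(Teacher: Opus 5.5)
Your proof is correct: the chain $\rho\preceq\rho'\preceq(\rho')^{*}\preceq\rho^{*}$ holds for any symmetric extension $\rho'$, and combined with the order-reversing property of adjoints and the antisymmetry of $\preceq$ it gives both the equivalence and the ``moreover'' statement. The paper omits the proof as straightforward, and your argument is the direct verification from the definitions that this remark refers to, including the density check needed for each adjoint to be defined.
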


The proof is straightforward.

\subsubsection{An important  example of a self-adjoint action}\label{exa}

A unitary representation of a Lie group $G$ gives rise to a dense self-adjoint action of the universal enveloping algebra  $A=\mathcal{U}(\fg)$ where $\fg$ is the complexification of the Lie algebra $\operatorname{Lie}(G)$.

Let 
$(\pi,G,\boldsymbol{H})$  
be a unitary representation of a Lie group $G$ on a Hilbert space $\boldsymbol{H}$.  Let $\boldsymbol{H}^{\infty}$ be the subspace of  smooth vectors in $\boldsymbol{H}$ (recall that a vector is smooth if the  corresponding orbit map is smooth). The universal enveloping algebra $\mathcal{U}(\fg)$  is a complex $*$-algebra with respect to the unique involution $*:\mathcal{U}(\fg)\longrightarrow \mathcal{U}(\fg)$ that is minus the identity on  $\operatorname{Lie}(G)$.
\begin{claim*}
    The natural action $\pi^{\infty}$ of $\mathcal{U}(\fg)$ on $\boldsymbol{H}^{\infty}$ is a  dense self-adjoint action.
\end{claim*}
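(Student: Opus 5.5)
We need to prove that the natural action $\pi^{\infty}$ of $\mathcal{U}(\fg)$ on $\boldsymbol{H}^{\infty}$ is a dense self-adjoint action. The proof splits into three steps: density, symmetry, and the reverse inclusion $\mathcal{W}^{*}\subseteq\boldsymbol{H}^{\infty}$.

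\textbf{Density.} By G\aa rding's theorem, the vectors $\pi(f)v=\int_G f(g)\pi(g)v\,dg$ with $f\in C_c^{\infty}(G)$ and $v\in\boldsymbol{H}$ are smooth. Taking $f$ to run through an approximate identity shows they are dense in $\boldsymbol{H}$, so $\boldsymbol{H}^{\infty}$ is dense.

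\textbf{Symmetry.} For $X\in\operatorname{Lie}(G)$ and $x,v\in\boldsymbol{H}^{\infty}$, differentiate the identity $\langle\pi(\exp tX)x,v\rangle=\langle x,\pi(\exp(-tX))v\rangle$ at $t=0$. This gives $\langle\pi^{\infty}(X)x,v\rangle=\langle x,\pi^{\infty}(-X)v\rangle=\langle x,\pi^{\infty}(X^*)v\rangle$. Since both sides are compatible with products and complex conjugate-linear extension, the identity extends to all of $\mathcal{U}(\fg)$. Hence $\pi^{\infty}\preceq(\pi^{\infty})^{*}$.

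\textbf{Self-adjointness.} This is the main point: we must show that the adjoint domain $\mathcal{W}^{*}$ is contained in $\boldsymbol{H}^{\infty}$. Let $v\in\mathcal{W}^{*}$. For each $a\in\mathcal{U}(\fg)$ the functional $x\mapsto\langle\pi^{\infty}(a)x,v\rangle$ is bounded, so $v$ has weak derivatives of all orders along $G$. More precisely, the matrix coefficient map $g\mapsto\pi(g)v$ has distributional derivatives $\pi^{*}(a)v$ lying in $\boldsymbol{H}$.

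I would pick a basis $X_1,\dots,X_m$ of $\operatorname{Lie}(G)$ and form the elliptic element $\Delta=1-\sum X_i^2$ (Nelson's Laplacian). The conclusion then follows from one of two equivalent results:
\begin{itemize}
\item Goodman's theorem: $\boldsymbol{H}^{\infty}=\bigcap_k\operatorname{Dom}(\overline{\pi(\Delta)}^{\,k})$.
\item The Nelson--Stinespring result that $\pi^{\infty}(\Delta)$ is essentially self-adjoint on $\boldsymbol{H}^{\infty}$, with closure equal to the adjoint.
\end{itemize}
With either one in hand, $v\in\mathcal{W}^{*}$ gives $(\pi^{\infty}(\Delta)^k)^{*}v\in\boldsymbol{H}$ for all $k$. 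Because the closure of $\pi^{\infty}(\Delta)^k$ is self-adjoint, $v$ lies in its domain for every $k$, and therefore $v\in\boldsymbol{H}^{\infty}$.

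Finally, on $\mathcal{W}^{*}=\boldsymbol{H}^{\infty}$ the symmetry step shows that $(\pi^{\infty})^{*}$ agrees with $\pi^{\infty}$, so $\pi^{\infty}=(\pi^{\infty})^{*}$.

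An alternative route for the last step is local elliptic regularity on $G$. Apply $\pi(f)$-convolutions to $v$ and use that a $\boldsymbol{H}$-valued distribution on $G$ all of whose $\mathcal{U}(\fg)$-derivatives are $\boldsymbol{H}$-valued $L^2_{loc}$ functions is smooth, by Sobolev embedding. This shows that the orbit map $g\mapsto\pi(g)v$ is smooth.

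The key obstacle is this last step, establishing $\mathcal{W}^{*}\subseteq\boldsymbol{H}^{\infty}$ via Nelson/Goodman or elliptic regularity. Density and symmetry are formal.
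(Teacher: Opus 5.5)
Your argument is correct in outline, but for the key inclusion $\mathcal{W}^{*}\subseteq\boldsymbol{H}^{\infty}$ it takes a different route from the paper.

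\textbf{The paper's route.} The paper uses no Laplacian and no spectral theory. It defines weak Sobolev spaces $\boldsymbol{H}^n$: the vectors $v$ for which $X\otimes w\mapsto\langle\pi(X)w,v\rangle$ is $\sigma_n$-bounded on $\mathcal{U}^n(\fg)\otimes\boldsymbol{H}^{\infty}$. It observes that $(\boldsymbol{H}^{\infty})^{*}=\bigcap_n\boldsymbol{H}^n$ essentially by definition, and checks that each $\boldsymbol{H}^n$ is a Hilbert space with a continuous $G$-action. It then mollifies. For $v\in\bigcap_n\boldsymbol{H}^n$, the smooth vectors $\pi(\mu_n)v$ converge to $v$ in every $\tau_m$. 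So they are Cauchy in the Fr\'echet space $\boldsymbol{H}^{\infty}$, and completeness of $\boldsymbol{H}^{\infty}$ (Bernstein--Kr\"otz) gives $v\in\boldsymbol{H}^{\infty}$. This is a self-contained ``weak equals strong'' argument of Friedrichs type.

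\textbf{Your routes.} Your main route controls all Sobolev norms by the single elliptic element $\Delta$ and delegates the analysis to Nelson and Goodman. It is shorter but rests on deeper theorems. Your alternative, regularity of the orbit map, is closer in spirit to the paper and can be made quite elementary. Pairing with $\boldsymbol{H}^{\infty}$ shows that $t\mapsto\pi(g\exp tX)v$ is strongly differentiable, with continuous derivative $\pi(g\exp tX)(\pi^{\infty})^{*}(X)v$. Since $(\pi^{\infty})^{*}(X)v$ again lies in $\mathcal{W}^{*}$, iterating gives smoothness without any Sobolev embedding. You also verify density and symmetry explicitly, which the paper leaves implicit.

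\textbf{One correction.} Your two bullets are not equivalent, and the Nelson--Stinespring bullet alone does not finish the proof.
\begin{itemize}
\item Essential self-adjointness of $\pi^{\infty}(\Delta)$ on $\boldsymbol{H}^{\infty}$ does not by itself make the closure of $\pi^{\infty}(\Delta)^k$ self-adjoint. A core of a self-adjoint $A$ that is invariant under $A$ need not be a core of $A^2$ (polynomials in $L^2$ of a suitable measure built from an N-extremal solution of an indeterminate moment problem give a counterexample).
\item You can avoid this by induction. Let $A$ be the closure of $\pi^{\infty}(\Delta)$, which by Nelson equals its Hilbert-space adjoint. Then $v\in\mathcal{W}^{*}$ gives $v\in\operatorname{Dom}(A)$ with $Av=(\pi^{\infty})^{*}(\Delta)v\in\mathcal{W}^{*}$, hence $v\in\bigcap_k\operatorname{Dom}(A^k)$.
\item Even so, concluding $v\in\boldsymbol{H}^{\infty}$ still needs Goodman's inclusion $\bigcap_k\operatorname{Dom}(A^k)\subseteq\boldsymbol{H}^{\infty}$.
\end{itemize}
So it is Goodman's theorem that makes your argument work. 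It also yields $\boldsymbol{H}^{\infty}=\bigcap_k\operatorname{Dom}(A^k)$, which is a core for every $A^k$, and that justifies your claim that the closure of $\pi^{\infty}(\Delta)^k$ is self-adjoint.
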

We will prove this claim in Appendix \ref{sv}. 
 Further  information on the underlying smooth space of a representation of a Lie group can be found in \cite{BK2014}.

\subsection{The canonical symmetric  action of $\mathcal{D}(C)$ on $\boldsymbol{H}$}\label{sub4.3}
 As a first step for  constructing   the Schwartz subspace $\mathcal{S}(\boldsymbol{H})$, in  this subsection  we    introduce a canonical symmetric action of the algebra $\mathcal{D}(C)$ on $\boldsymbol{H}$ with the domain of compactly supported smooth functions on the cone.

\subsubsection{The symmetric action}

The   algebra $\mathcal{D}(C)$  naturally acts on $C^{\infty}(C_0(\R))$, the space of smooth  complex-valued  functions on $C_0(\R)$. We denote this action by $\rho^{\infty}$.
 
The subspace  $C_c^{\infty}(C_0(\R))$  of     compactly supported functions   is a  dense $\rho^{\infty}$-stable   subspace  of  $C^{\infty}(C_0(\R))$.

We  denote the action of the algebra  $\mathcal{D}(C)$ on $C_c^{\infty}(C_0(\R))$  by $\rho_c$. 
\begin{claim*}
For every $d\in \mathcal{D}(C_0)$ there is a unique $d^*\in \mathcal{D}(C_0)$ such that 
  \begin{equation}\nonumber
 \langle \rho_c(d)f_1,f_2 \rangle=\langle f_1,\rho_c(d^*)  f_2 \rangle , \quad \forall f_1,f_2\in C^{\infty}_c(C_0(\R)). 
\end{equation}   
\end{claim*}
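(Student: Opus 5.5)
Uniqueness is immediate. If $d^*$ and $d'$ both satisfy the identity, then $\rho_c(d^*-d')f_2$ is orthogonal to the dense subspace $C_c^\infty(C_0(\R))$ for every $f_2$, so it vanishes. A differential operator on $C_0$ that kills every compactly supported smooth function on the Zariski-dense real locus $C_0(\R)$ is zero. To see this, apply it to cutoffs times polynomials: its coefficients vanish on $C_0(\R)$, hence on $C_0$. The content is therefore existence, that is, that the formal adjoint of an algebraic operator with respect to $|\omega|$ is again algebraic. Here the formal adjoint is $d^* = \overline{d^t}$, where $d^t$ is the transpose with respect to the density and the bar denotes coefficientwise complex conjugation using the real structure.

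\textbf{Reduction to generators.} The map $d\mapsto d^*$, where it is defined, is an antilinear anti-homomorphism. If $d_1^*,d_2^*$ exist, then $(d_1d_2)^*=d_2^*d_1^*$ and $(\alpha d_1+d_2)^*=\bar\alpha d_1^*+d_2^*$ satisfy the identity, because $\rho_c$ preserves $C_c^\infty(C_0(\R))$. By the Theorem of Section \ref{222}, $\fs$ together with $1$ generates $\mathcal{D}(C)$. If one reads $\mathcal{D}(C_0)$ literally as operators on the smooth variety $C_0$, one instead uses generation by functions and vector fields locally. Either way it suffices to produce adjoints for a spanning set of $\tilde{\fs}$, namely the set of Section \ref{s2.4}: $\tau(v)$, $L_{u,v}$, $h$, $1$, and $\Psi^Q(\tau(v))$ for real $u,v\in\mathcal{V}$.

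\textbf{The generators.} For the degree zero pieces, multiplication by a real linear function $\tau(v)$ is its own adjoint. For $L_{u,v}$, note that the vector field $L_{u,v}$ generates the action of $O(q)$, which preserves $|\omega|$ because $\omega$ is $SO(Q)$-invariant. So $L_{u,v}$ is divergence-free for $|\omega|$, and integration by parts gives $L_{u,v}^*=-L_{u,v}$. For the Euler field $E=\sum x_i\partial_{x_i}$, the dilation $t\cdot$ scales $\omega$ by $t^{n-2}$, since $dQ\wedge\omega=\Omega$ with $Q$ of weight $2$ and $\Omega$ of weight $n$. Hence $\mathrm{div}_{|\omega|}E=n-2$, so $E^*=-E-(n-2)$ and $h^*=-h$.

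The degree-two operators $\Psi^Q(\tau(v))$ are the main step. A direct computation in the coordinates of Section \ref{423} is possible but messy. Instead I would argue structurally. The formal adjoint $D=\Psi^Q(\tau(v))^*$ exists as a smooth differential operator on $C_0(\R)$ of order at most $2$. Its coefficients are real-analytic, and they are algebraic because $|\omega|$ is locally $\frac{1}{2|z_n|}|dz_1\cdots dz_{n-1}|$ with algebraic density on each half-cone. Moreover $D$ is homogeneous of degree $-1$, by the scaling computed above. It is also $O(q)$-equivariant in $v$, since $\Psi^Q$ is equivariant and $|\omega|$ is invariant. In addition, $D$ must preserve the algebraic regular functions, and one checks it is an algebraic differential operator on $C_0$. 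The Proposition of Section \ref{223} then shows that $\mathcal{D}^{2,-1}(C)\cong V$ as a $G_1$-module, so equivariance forces $D=c\,\Psi^Q(\tau(v))$ for a scalar $c$, apart from a possible correction in lower filtration of the same homogeneity. Finally, I would determine $c$, which I expect to be $\pm1$, by comparing principal symbols: the symbol of $d^*$ is the conjugate of the symbol of $d$ times $(-1)^{\mathrm{ord}}$, giving $c=1$.

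The main obstacle is justifying that $D$ lies in $\mathcal{D}(C)$, rather than being merely a smooth operator on $C_0(\R)$. If the structural argument fails, the fallback is the explicit coordinate computation. Using $\omega=\frac{\pm i}{2z_n}dz_1\wedge\cdots\wedge dz_{n-1}$, one computes the transpose of $w\Box_Q-h\partial_w^Q$ directly and verifies that the resulting operator is restrictable.
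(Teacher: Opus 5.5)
Your uniqueness argument, the reduction to generators of $\mathcal{D}(C)=\mathcal{D}(C_0)$, and the adjoints $\tau(v)^*=\tau(v)$, $L_{u,v}^*=-L_{u,v}$, $h^*=-h$ are correct. The gap is where you yourself locate it. For $\Psi^Q(\tau(v))$, which is the only generator where existence has real content, you never show that the formal adjoint $D$ belongs to $\mathcal{D}(C)$. Your Schur argument through $\mathcal{D}^{2,-1}(C)\cong V$ cannot begin until you have this.

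The reason you give, that the density is $\frac{1}{2|z_n|}|dz_1\wedge\cdots\wedge dz_{n-1}|$, only shows that $D$ is regular on $C_0\cap\{z_n\neq 0\}$. The complex divisor $C_0\cap\{z_n=0\}$ contains no real points. So smoothness of $D$ on $C_0(\R)$ gives no control over poles along it. To close this you would have to repeat the computation on the charts $\{\ell\neq 0\}$ for all real timelike $\ell$ and glue by uniqueness. That chart-and-glue argument works for an arbitrary $d$, so once it is carried out the reduction to $\tilde{\fs}$ is unnecessary, and so is the need to handle a second-order generator.

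This is essentially the paper's route, which you mention in passing and then abandon. The paper works on the dense affine opens $U_P=C\setminus\{P=0\}$, with $P$ real and $P(0)=0$. These opens are smooth, so $\mathcal{D}(U_P)$ is generated by $\mathcal{O}(U_P)$ and the fields $L_{u,v}$, which span $T_pC_0=p^{\perp}$ at every point. The adjoints of these generators are evident: $\overline{f}$, which is regular on $U_P$ because $P$ is real, and $-L_{u,v}$. The local adjoints then glue by uniqueness.

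Alternatively, your structural idea can be made rigorous at the smooth level without knowing algebraicity in advance.
\begin{itemize}
\item Take formal adjoints in $[\Psi^Q(\tau(v)),\tau(u)]=2L_{v,u}-B(v,u)h$, using $L^*=-L$ and $h^*=-h$. This gives $[D-\Psi^Q(\tau(v)),\tau(u)]=0$ for all real $u$.
\item Hence $D-\Psi^Q(\tau(v))$ is multiplication by a smooth function $g_v$.
\item Adjoining $[h,\Psi^Q(\tau(v))]=-2\Psi^Q(\tau(v))$ shows that $g_v$ is homogeneous of degree $-1$. Unitarity of the $SO_0(q)$-action shows that $g_v$ is equivariant in $v$.
\item The stabilizer of $p\in C_0(\R)$ fixes only the line through $\tau(p)$, and $SO_0(q)$ is transitive on each half-cone. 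Together these force $g_v(p)=c_{\pm}B(v,p)$, which has degree $+1$. Hence $g_v=0$.
\end{itemize}
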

\begin{proof}
    Differential operators on $C_0$ are described by their restrictions to affine
open subsets. We will construct a family $\{U_P\}$ of such subsets with the following
properties
\begin{enumerate}
    \item Every $U_P$ is a dense affine open subset defined over $\R$.
    \item The family $\{U_P\}$  is closed under intersections and covers $C_0$.
    \item For every set $U_P$,  the algebra of differential operators $\mathcal{D}(U_P)$ satisfies the following condition; 
    
    $(*)$ For every $d\in \mathcal{D}(U_P)$ there is a unique $d^*\in \mathcal{D}(U_P)$ such that 
  \begin{equation}\nonumber
 \langle \rho_c(d)f_1,f_2 \rangle=\langle f_1,\rho_c(d^*)  f_2 \rangle , \quad \forall f_1,f_2\in C^{\infty}_c(U_P(\R)). 
\end{equation}  
\end{enumerate}

Since $\mathcal{D}(C_0)$   can be glued  out of the  algebras  $\{\mathcal{D}(U_P)\}$,  a standard gluing argument implies the claim.

Let $P$ be a non-zero polynomial on $C$ defined over $\R$ such that $P(0) = 0$. We
define the set  $U_P$ by $U_P=C\setminus \{\text{zeroes of }P\}$.
Conditions (1.) and (2.) are obviously satisfied. Uniqueness in condition (3.) follows from the fact that 
$U_P(\R)$ is Zariski dense
in $U_P$. Existence of an operator  $d^*$ follows from the fact that the algebra $\mathcal{D}(U_P)$
is generated by vector fields given by elements in $\operatorname{Lie}(O(q))$  and by regular  functions
 $\mathcal{O}(U_P)$. For all such  generators $d$ the existence of the operator $d^*$ is clear.
\end{proof}

\begin{remark*}
Note that since $\mathcal{D}(C)=\mathcal{D}(C_0)$ the claim says that there is a unique $*$-algebra structure on
$\mathcal{D}(C)$ such that the action of $\mathcal{D}(C)$ on $C^{\infty}_c(C_0(\R))$ is a symmetric dense action
$\rho_c$ of $\mathcal{D}(C)$ on $\boldsymbol{H}$.
\end{remark*}

\subsubsection{The induced real structure on the Lie algebra  $\fs$ }\label{canreal}
 
In this subsection we explain how the symmetric action of  $\mathcal{D}(C)$ induces a real structure on $\fs$ which has a natural reductive dual pair.

 The Lie algebra $\tilde{\fs}$ is stable under the involution $*$. On this complex Lie algebra $\tilde{\fs}$ and its subalgebra $\fs$,  the  negative of the involution $*$ defines an antiholomorphic involution 
 \[d\longmapsto -d^*.\]
The fixed-point subalgebra $\fs^{*}(\R):=\{d\in \fs|d^*=-d\}$ is a real form of $\fs$ different from the real
form $\fs(\R)$ introduced in Subsection \ref{CmpRF}.
We will show that these two real forms are isomorphic, see details in appendix 
\ref{s2.5} and appendix \ref{a823}.

Among these two real forms, the main one that we use in this paper is 
$\fs^*(\R)$.

 \subsection{The distinguished self-adjoint extension }\label{seext}
In this Subsection our goal is to describe a canonical self-adjoint extension  $(\rho,\mathcal{S}(\boldsymbol{H}))$
 of the symmetric action $(\rho_c,C_c^{\infty}(C_0(\R)))$. Namely, we describe a subspace  $\mathcal{S}(\boldsymbol{H})\subset \boldsymbol{H}$  in-between $C_c^{\infty}(C_0(\R))$ and $C^{\infty}(C_0(\R))$, invariant with respect to the action of the algebra $\mathcal{D}(C)$, such that the action $(\rho,\mathcal{S}(\boldsymbol{H}))$ is self-adjoint. 

   There are several descriptions of this space. We will present a construction that
   realizes $\mathcal{S}(\boldsymbol{H})$  as the space of smooth vectors  in a representation of a “large group” $G=G(\mathcal{V})\supset O(\mathcal{V})$  on the  Hilbert space $\boldsymbol{H}$.

\subsubsection{A canonical extension of the representation of $O(\mathcal{V})$ on  $\boldsymbol{H}$}\label{dis}
In this section we construct the group $G=G(\mathcal{V})$ of quantum symmetries of $(\mathcal{V},q)$ which is a group  extension  of $O(\mathcal{V})$. We describe a distinguished representation  $\pi$ of $G$  on the canonical Hilbert space  $\boldsymbol{H}$.

\textbf{The extension $G$ of $O(\mathcal{V})$:} Let  $(\R^{1,1}=\R^2,q_{1,1})$ be the standard hyperbolic plane with the quadratic form $q_{1,1}(x,y)=2xy$. This is a    two-dimensional Lorentzian quadratic space. The standard basis $v_1:=(1,0)$, $v_2:=(0,1)$  consists  of isotropic vectors satisfying $B_{1,1}(v_1,v_2)=1$, where $B_{1,1}$ is the corresponding symmetric bilinear form.

We set $\widetilde{\mathcal{V}}:=\mathcal{V}\oplus \R^{1,1}$ and $\widetilde{q}=q\oplus q_{1,1}$ and we define the quantum symmetry group $G=G((\mathcal{V},q))$ of $(\mathcal{V},q)$ to be  the orthogonal group $G((\mathcal{V},q))=O(\widetilde{\mathcal{V}},\widetilde{q})$. We denote by $L$ the canonical copy of $O(\mathcal{V})$ in $G$ consisting of all elements that act as the identity map on $\R^{1,1}\subset \widetilde{\mathcal{V}} $,  and call it \textit{the Lorentz subgroup of $G$}.

We define \textit{the Poincar\'e subgroup $P=P_{v_1}$ of $G$} to be the stabilizer of $v_1$ in $G$.   
Clearly 
\[L\subset P \subset G .\]

Our next goal is to show that there exists a canonical extension of the representation of $L$ on $\boldsymbol{H}$ to a representation of $G$.

We will  do it in two steps. First we extend it to a representation of $P$, and then to a representation of  $G$. 

 \textbf{The extension of the representation to $P$:}
We let $N$ be the unipotent radical of 
${P}$. It follows that ${P}=LN$ as a semidirect product with $N$ a normal subgroup isomorphic to $\mathcal{V}^*$.  Explicitly, an $L$-equivariant isomorphism of abelian groups  $f_{v_2}:N\longrightarrow \mathcal{V}^*$ is given  by 
 \[\left(f_{v_2}(n)\right)(v):=B(nv_2,v) ,\]
for any $n\in N$ and $v\in \mathcal{V}$, and 
where $nv_2$ stands for the application of the linear transformation $n$ to $v_2$.

 The  representation of $L$ on  $\boldsymbol{H}$    extends to a representation $\pi=\pi_n$ of $P$, by defining $\pi(n)$ for $n\in N$ to be the  operator of multiplication by the function on $C_0(\R)$ given by  $v\longmapsto e^{iB(nv_2,v)}$.

 By Mackey theory this unitary representation of $P$ is irreducible,  see also \cite[Prop. 3.3]{KOBAYASHI2003551}.
 
 \textbf{The extension of the representation to $G$:} The extension of $\pi_{n}=\pi:{P}\longrightarrow \mathcal{U}\left(\boldsymbol{H})\right)$ to a representation of $G$ is based on the fact that $G$ has a distinguished representation $\widetilde{\pi}$ of minimal Gelfand-Kirillov dimension, namely the local theta lift of the trivial representation of $SL_2(\R)$, see  \cite{MR1457244}.  Further info on the  theta correspondence can be found in \cite{MR546602,MR985172,MR986027}.

\begin{theorem*}
For any $n\in 2\N$ with $n\geq 4$, the representation 
$\pi=\pi_{n}:{P}\longrightarrow \mathcal{U}\left(\boldsymbol{H}\right)$
can be extended uniquely  to a unitary irreducible representation of $G$ that is isomorphic to $\widetilde{\pi}_{n}$.
\end{theorem*}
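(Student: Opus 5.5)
We split the statement into existence and uniqueness. For existence we use the standard realization of the minimal representation $\widetilde{\pi}_n$ of $G=O(n,2)$ (for even $n\geq 4$) in its "$L^2$-model" on the isotropic cone. This is the construction of Kobayashi--Ørsted and Kobayashi--Mano, see \cite{KOBAYASHI2003551,KobayashiMano2011}. In that model $\widetilde{\pi}_n$ acts on $L^2(C_0(\R),|\omega|)$, where $C_0(\R)$ is the cone of $\mathcal{V}^*\cong\mathcal{V}$. Restricted to the maximal parabolic $P=LN$, it is exactly our $\pi_n$: $L\simeq O(\mathcal{V})$ acts geometrically, with no twist because $|\omega|$ is the invariant density, and $N\simeq\mathcal{V}^*$ acts by the characters $e^{iB(nv_2,v)}$.

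To make this identification precise, we match normalizations in three places:
\begin{itemize}
\item the unitary structure, via the invariant density $|\omega|$ of Subsection \ref{423};
\item the identification $f_{v_2}:N\to\mathcal{V}^*$;
\item the choice of the Weyl-type element $w_0\in G$ exchanging $v_1$ and $v_2$, which acts by the Fourier--Hankel-type unitary operator $\mathcal{F}_C$ on the cone.
\end{itemize}
Since $G$ is generated by $P$ and $w_0$, this yields a unitary representation of $G$ on $\boldsymbol{H}$ that extends $\pi_n$ and is isomorphic to $\widetilde{\pi}_n$. Its irreducibility is inherited from the irreducibility of $\pi_n|_P$ (Mackey theory). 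We therefore take existence from the literature, after checking the normalizations.

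For uniqueness, suppose $\pi'$ and $\pi''$ are two unitary representations of $G$ on $\boldsymbol{H}$, both extending $\pi_n$. Then $\pi'(g)$ and $\pi''(g)$ agree for $g\in P$, so it suffices to show $\pi'(w_0)=\pi''(w_0)$. We argue through the Lie algebra. Both extensions are isomorphic to $\widetilde{\pi}_n$, so by Subsection \ref{exa} each has a space of smooth vectors carrying a self-adjoint action of $\mathcal{U}(\fg)$. On $C_c^{\infty}(C_0(\R))$, the derived actions of $\mathfrak{p}=\operatorname{Lie}(P)$ coincide. For $\pi$ they are first-order operators and multiplication by linear functionals, that is, $\mathcal{D}^{1,0}(C)\oplus\mathcal{D}^{0,1}(C)$.

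The complementary nilradical $\bar{\mathfrak{n}}\simeq\mathcal{V}$ must act by operators $X$ satisfying the bracket relations $[\mathfrak{l},X]$, $[\mathfrak{n},X]\subset\mathfrak{p}$ with the prescribed values. We show these relations force $X$ to equal the second-order operator $\Psi^Q(\tau(v))\in\mathcal{D}^{2,-1}(C)$ (up to $i$ and real scalars), acting as a distribution kernel. The argument runs as follows. Commutation with multiplication by all $e^{iB(nv_2,\cdot)}$ determines $X$ up to a multiplication operator. Equivariance under the dilations in $L\times\R^\times$ and under $O(\mathcal{V})$ then fixes that multiplication operator as a $c\cdot\tau(v)/Q$-type term, which is not defined on the cone. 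So the only freedom is a scalar, and the full bracket $[\mathfrak{n},\bar{\mathfrak{n}}]$ pins the scalar down, in line with the uniqueness in Lemma \ref{s224}. The two representations then have the same derived action on a common core, namely the $K$-finite or smooth vectors, which we check contains $C_c^\infty$ densely in the graph norm. Hence $\pi'=\pi''$ on $G_0$. On the remaining components, elements of $L$ already reach every component of $G$ except possibly the one containing $w_0$, and that one is handled identically.

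The main obstacle is this uniqueness step. We need to rule out an exotic unitary extension whose infinitesimal action of $\bar{\mathfrak{n}}$ differs, and we need to justify passing from "equal on $C_c^\infty$" to "equal as unitary groups", which requires an essential self-adjointness or core argument. Our first approach is to avoid these analytic issues by working at the level of $P$-intertwiners: any $U$ intertwining $\pi'|_P$ and $\pi''|_P$ is a scalar by Schur's lemma applied to the irreducible $\pi_n|_P$. The representations $\pi'$ and $\pi''$ are isomorphic, both being $\cong\widetilde{\pi}_n$, so let $U$ be an isomorphism $\pi'\to\pi''$. Then $U$ in particular commutes with $\pi_n(P)$, hence is a scalar, and so $\pi'=U^{-1}\pi''U=\pi''$. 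This Schur argument gives uniqueness cleanly, so the real remaining work is the normalization matching in the existence step.
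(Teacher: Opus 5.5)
Your proposal is correct and matches the paper's argument. Existence is taken from Kobayashi--{\O}rsted's $L^2$-model \cite[Thm.~3]{KOBAYASHI2003551}, and uniqueness is exactly your final Schur argument: an intertwiner between two extensions commutes with the irreducible $\pi_n(P)$, so it is a scalar. The Lie-algebra route you sketch first is unnecessary and is also inaccurate as written: $\bar{\mathfrak{n}}$ does not commute with $\mathfrak{n}$, so commuting with the characters $e^{iB(nv_2,\cdot)}$ does not reduce $X$ to a multiplication operator. You should drop it.
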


\begin{proof}

Uniqueness  follows from irreducibility of $\pi:{P}\longrightarrow \mathcal{U}\left(\boldsymbol{H}\right)$ together with Schur's lemma.
   
       The  hard part of the proof  is to show the  existence of such an  extension. This is proven in \cite[Thm. 3]{KOBAYASHI2003551}. 
\end{proof}

 
\begin{remark*}
1. The existence of the extension  of  $\pi$ from ${P}$ to $G$ is highly non-trivial. One piece of evidence for
this is that in the action of the element $-I\in G$ there is a non-obvious sign-rule. Explicitly, 
 the action  is given by 
 $$\pi(-I)=(-1)^{(n-2)/2}.$$

2. For $n>4$ the representation $\widetilde{\pi}_n$ is the unique minimal representation of $G$, see \cite{KobayashiMano2011}. 

3. In this paper we shall only use  the Schr\"{o}dinger model of $\widetilde{\pi}$ which is realized on the canonical Hilbert space $\boldsymbol{H}$. 
Other important realizations of $\widetilde{\pi}$ exist. Several of them are constructed  in \cite{KobayashiMano2011}. Also see  \cite{MR2020550,MR1108044,MR1172839}.
\end{remark*}

\subsubsection{The smooth  subspace of $\pi$ as a self-adjoint extension }\label{542}

We let $\pi:G\longrightarrow \mathcal{U}(\boldsymbol{H})$ be the distinguished unitary irreducible representation on the canonical Hilbert space that was discussed above in Subsection \ref{dis}, and let $\mathcal{S}(\boldsymbol{H})$ 
be the canonical subspace of smooth vectors of $\pi$ (the Schwartz space of $\boldsymbol{H}$). In Example \ref{exa}
 this space was denoted by $\boldsymbol{H}^{\infty}$.
 \begin{proposition*}
    \begin{enumerate}
        \item $  C^{\infty}_c(C_0(\R)) \subseteq \mathcal{S}(\boldsymbol{H}) \subseteq C^{\infty}(C_0(\R)) $.
        \item $\mathcal{S}(\boldsymbol{H}) $ is invariant under the action $\rho^{\infty}$ of $\mathcal{D}(C)$.
        \item The action $\rho=\rho^{\infty}$  of the algebra $\mathcal{D}(C)$ on $\mathcal{S}(\boldsymbol{H})$ is self-adjoint. 
    \end{enumerate}
\end{proposition*}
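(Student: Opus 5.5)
The plan is to compare the differential action $d\pi$ of $\fg=\operatorname{Lie}(G)_\C$ on smooth vectors with the action of $\fs\subset\mathcal{D}(C)$, and then to use the Claim of Example \ref{exa}, which says that $\pi^{\infty}$ on $\boldsymbol{H}^{\infty}=\mathcal{S}(\boldsymbol{H})$ is self-adjoint.

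\textbf{Step 1: matching the Lie algebras.} First I would show that there is a Lie algebra isomorphism $\fg\cong\fs$ intertwining the two actions. I check this on $\mathfrak{p}=\operatorname{Lie}(P)=\operatorname{Lie}(L)\oplus\operatorname{Lie}(N)$.
\begin{itemize}
\item On $\operatorname{Lie}(N)\cong\mathcal{V}^*$, $d\pi$ acts by multiplication by $i\varphi$. This is the component $\mathcal{D}^{0,1}(C)$.
\item On $\operatorname{Lie}(L)=\mathfrak{o}(q)$, $d\pi$ acts by the vector fields $L_{u,v}$.
\end{itemize}
So $d\pi|_{\mathfrak{p}}$ agrees with $\rho^\infty$ on $\mathcal{D}^{0,1}(C)\oplus\mathfrak{o}(Q)$, at least on $C_c^\infty(C_0(\R))$, which consists of smooth vectors for $P$. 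For the rest of $\fg$ (the $h$ direction and the opposite unipotent radical) I would use the explicit Schrödinger-model formulas of \cite{KOBAYASHI2003551}. These give the opposite nilradical as the second-order Bessel-type operators, which are exactly $\Psi^Q(\mathcal{D}^{0,1}(C))=\mathcal{D}^{2,-1}(C)$ up to the factor $i$, and they give $h$ as the shifted Euler operator.

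An alternative to quoting these formulas: the operators $d\pi(X)$ for $X$ in the opposite nilradical are determined by $L$-equivariance and by their brackets with $\operatorname{Lie}(N)$ (multiplication operators). A differential operator commuting appropriately with multiplication by $V^*$ must then lie in $\mathcal{D}^{2,-1}(C)\cong V$. Step 1 is the main obstacle, since it is the only point where the non-trivial extension theorem enters. I would handle it by working on $C_c^\infty(C_0(\R))$, where $\rho^\infty(\fs)$ is manifestly defined, and by identifying $d\pi$ in the distribution sense.

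\textbf{Step 2: part (1).} Smooth vectors of $\pi$ are in particular smooth vectors for the action of $P$ and $L$. Applying $d\pi(\fs)$ shows that $\mathcal{U}(\fs)v\subset\boldsymbol{H}$ for every $v\in\mathcal{S}(\boldsymbol{H})$. Since $\fs$ contains $\mathfrak{o}(q)$ and the Euler operator, and these span all tangent directions on $C_0(\R)$, local elliptic regularity (Sobolev embedding on compact pieces of $C_0(\R)$) gives $v\in C^\infty(C_0(\R))$. Conversely, for $f\in C_c^\infty(C_0(\R))$, every $\rho^\infty(u)f$ with $u\in\mathcal{U}(\fs)$ is again in $C_c^\infty\subset\boldsymbol{H}$. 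The operators $d\pi(u)$ are then closable with $f$ in the domain of all of them, and a standard criterion (Goodman/Nelson: vectors lying in the domain of all powers of the Laplacian $\Delta$ of $G$ are smooth) shows $f\in\mathcal{S}(\boldsymbol{H})$.

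\textbf{Step 3: parts (2) and (3).} Since $\fs$ generates $\mathcal{D}(C)$ by the theorem of Section \ref{222}, the action $\rho^\infty$ of $\mathcal{D}(C)$ on $\mathcal{S}(\boldsymbol{H})$ factors through $d\pi(\mathcal{U}(\fg))$. This gives the invariance in (2). The $*$-structures also match: the involution $d\mapsto d^*$ of the Remark in Subsection \ref{sub4.3} equals $-1$ on the real form $\fs^*(\R)$, and by unitarity this is $\operatorname{Lie}(G)$. The actions agree on $C_c^\infty$ and are symmetric there, so the two $*$-structures coincide on generators.

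Finally, the surjection $\mathcal{U}(\fg)\to\mathcal{D}(C)$ is a $*$-map. Self-adjointness of $\pi^\infty$ on $\boldsymbol{H}^\infty$ then transfers verbatim: the adjoint domain of $\rho$ is defined using the same family of operators, so $\rho^*=\pi^{\infty *}=\pi^\infty=\rho$, which is (3).
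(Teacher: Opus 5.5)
Your treatment of parts (2) and (3) is the paper's argument. The explicit Schr\"odinger-model formulas show that $d\pi=\rho^\infty\circ\psi$ for a Lie algebra isomorphism $\psi:\fg\to\fs$. The induced map $\mathcal{U}(\fg)\to\mathcal{D}(C)$ is a surjective $*$-morphism, surjective because $\fs$ generates $\mathcal{D}(C)$. This transfers invariance, and the self-adjointness of $\pi^\infty$ from Example \ref{exa}, to $\rho$.

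The real difference is part (1). The paper simply cites \cite[Sec.~2.5]{KobayashiMano2011}; you prove it.

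Your proof of $\mathcal{S}(\boldsymbol{H})\subseteq C^\infty(C_0(\R))$ is fine. In fact $\mathfrak{o}(q)$ alone suffices: $SO_0(q)$ is transitive on each half-cone and $L$ acts geometrically, so this inclusion does not even need the extension theorem.

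For $C_c^\infty(C_0(\R))\subseteq\mathcal{S}(\boldsymbol{H})$, your phrase ``closable with $f$ in the domain'' hides the real point.
\begin{itemize}
\item You need $d\pi(X)=\rho^\infty(\psi(X))$ on all of $\mathcal{S}(\boldsymbol{H})$, in the distribution sense. Knowing it only on $C_c^\infty$ would be circular.
\item Then, cutting off $g$ near $\operatorname{supp} f$ and integrating by parts, $\langle\pi^\infty(u)g,f\rangle=\langle g,\rho^\infty(\psi(u)^*)f\rangle$ for every smooth vector $g$.
\item So $f$ lies in the adjoint domain $(\boldsymbol{H}^\infty)^*$, which equals $\boldsymbol{H}^\infty$ by the Claim of Example \ref{exa}.
\end{itemize}
This makes the Goodman--Nelson criterion unnecessary. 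Alternatively, the same identity combined with Nelson's essential self-adjointness of $d\pi(\Delta)$ gives exactly what your criterion requires. Your route derives (1) from the explicit formulas plus the paper's own appendix, at the cost of length; the paper's citation is shorter.

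Finally, your ``alternative'' to quoting the formulas in Step 1 presupposes that $d\pi(\bar X)$, for $\bar X$ in the opposite nilradical, is a differential operator. That is precisely what the formulas supply, so, as in the paper, they remain the essential input.
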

Note that in particular the proposition says that the action of the algebra $\mathcal{D}(C)$ on $\mathcal{S}(\boldsymbol{H})$ is a self-adjoint extension of the action on $C^{\infty}_c(C_0(\R))$. 
\begin{proof}
    The first part of the proposition is proved in    \cite[Sec. 2.5]{KobayashiMano2011}. From the explicit formulas by which $\operatorname{Lie}(G)$ acts on $\mathcal{S}(\boldsymbol{H})$ via differential operators on $C_0(\R)$ see,
e.g., \cite[Sec. 2.3]{KobayashiMano2011}, it is evident 
 that the self-adjoint  action $\pi^{\infty}:\mathcal{U}(\fg)\longrightarrow \operatorname{End}(\mathcal{S}(\boldsymbol{H}))$ where $\fg$ is the complexification of the Lie algebra $\operatorname{Lie}(G)$ (as in  Subsection \ref{exa} ) 
 factors through a quotient surjective map $j:\mathcal{U}(\fg)  \twoheadrightarrow  \mathcal{D}(C)$.  
 As a result we obtain an action of $\mathcal{D}(C)$ on $\mathcal{S}(\boldsymbol{H})$ and in particular $\mathcal{S}(\boldsymbol{H})$ is invariant under the action $\rho^{\infty}$. We denote the obtained action on $\mathcal{S}(\boldsymbol{H})$  by $\rho$.

Since $j$  is a morphism of $*$-algebras and $\pi^{\infty}$ is self-adjoint, $\rho$ is self-adjoint.
\end{proof}
\begin{remark*}
 1. It is easy to directly calculate $j$ on $\operatorname{Lie}({P})$. In addition, the map $j$ is $(O(Q)\times\C^{\times})$-equivariant. Using Proposition \ref{sss232}  it follows that there is a unique $(O(Q)\times\C^{\times})$-equivariant extension of $j|_{\operatorname{Lie}({P})}$ to the complexification of $\operatorname{Lie}(G)$. In addition $j(\operatorname{Lie}(G))$ is equal to the real form $\fs^*(\R)$ introduced in Subsection  \ref{canreal}.

 2. The decomposition  $\boldsymbol{H}= \boldsymbol{H}_{+} \oplus \boldsymbol{H}_-$ is stable under the  connected component of the identity $G_0\subset G$,  see \cite[p.19]{KobayashiMano2011}. This implies that $\mathcal{S}(\boldsymbol{H})= \mathcal{S}(\boldsymbol{H})_{+} \oplus \mathcal{S}(\boldsymbol{H})_-$, where $\mathcal{S}(\boldsymbol{H})_{\pm}:=\boldsymbol{H}_{\pm}\cap \mathcal{S}(\boldsymbol{H})$ and moreover   $\mathcal{S}(\boldsymbol{H})_{\pm}$ are stable under $\rho$. In addition, letting $\mathcal{S}(\boldsymbol{H}_{\pm})$ be the subspace of smooth vectors in  representation $\boldsymbol{H}_{\pm}$ of $G_0$, we have  $\mathcal{S}(\boldsymbol{H}_{\pm})=\mathcal{S}(\boldsymbol{H})_{\pm}$.
\end{remark*}

\section{The model}\label{se5}
In this section we suggest a new model for the quantum-mechanical system of the hydrogen atom. 
\subsection{The main ingredients}
In physics, a quantum-mechanical system such as the hydrogen atom is  described by a Hilbert space and a distinguished densely defined (unbounded) self-adjoint operator called the Schr\"odinger operator. The spectrum of the Schr\"odinger operator and the expansion of a given state in terms of  eigenstates of the Schr\"odinger operator determine the time-evolution of the system via the Schr\"odinger equation. 
Other physically relevant (measurable) quantities are given via densely defined (unbounded) symmetric operators (these are called observables).

In the following subsections we list the four ingredients of our suggested model.  Our model is associated with 
an $n$-dimensional PLQS $(\mathcal{V},q,w)$,  and a real positive number $\kappa$.  We 
fix  such $(\mathcal{V},q,w)$  and $\kappa$ throughout this section.  
\subsubsection{The underlying  Hilbert space}
The first ingredient is the canonical Hilbert space associated with the $n$-dimensional  PLQS  $(\mathcal{V},q,w)$
\[\boldsymbol{H}=L^2({C}_{0}(\R),|\omega|),\] 
 as in Section \ref{canhil}.

 \subsubsection{The algebra of operators}
 The second ingredient is  
 the $*$-algebra of differential operators on the cone,
\[\mathcal{D}(C).\]

\subsubsection{The self-adjoint dense  action}
The third ingredient is
 the self-adjoint dense  action $\rho$ of the $*$-algebra $\mathcal{D}(C)$ on $\boldsymbol{H}$ with domain   $\mathcal{S}(\boldsymbol{H})$.

 Note that the algebra $\mathcal{D}(C)$ and its  canonical self-adjoint action $(\rho,\mathcal{S}(\boldsymbol{H}))$ encode the $G\simeq O(n,2)$ symmetry of the system. 

    The   subspace $\mathcal{S}(\boldsymbol{H})$ replaces the boundary conditions  that are typically used  in physics for the quantum-mechanical system of the hydrogen atom.

\subsubsection{The Schr\"odinger family}\label{502} 
The fourth ingredient is a one-parameter family  in $\mathcal{D}(C)$ called the Schr\"o\-dinger family.

Recall that with the   linear functional $w$ given in our PLQS $(\mathcal{V},q,w)$ we associate a second-order differential operator on the cone $\Psi^Q(w)$. 
 
The operators 
\[e=e_w:=iw,\quad h, \quad  f=f_w:=i\Psi^Q(w),  \] 
form  an $\mathfrak{sl}_2$-triple in $\fs\subset \mathcal{D}(C)$ satisfying the standard $\mathfrak{sl}_2$ commutation relations:
 \begin{eqnarray}\nonumber
&& [e,f]= h,\hspace{1mm} [h,f]=-2f,\hspace{1mm}  [h,e]=2e.
\end{eqnarray}

\begin{defn*}\label{SchFam}
    Under our assumptions of a fixed PLQS    $(\mathcal{V},q,w)$,   and a fixed $\kappa \in \R$, for every $\lambda\in \C$, we define a second-order differential operator on the cone
\[\boxed{S_{(q,w,\kappa)}(\lambda)=S_{\kappa}(\lambda):=\kappa -i\left(\lambda e+f \right)  .}  \]
We call $S_{(q,w,\kappa)}$  \textbf{the  Schr\"odinger family associated with} $(\mathcal{V},q,w,\kappa)$.
\end{defn*}
\begin{remark*}
 For every $\lambda\in \R$,
 $S_{(q,w,\kappa)}(\lambda)^*=S_{(q,w,\kappa)}(\lambda)$.
\end{remark*}

Similarly,  for any $\mathfrak{sl}_2$-triple $\{E,H,F\}$ in any Lie algebra $\fg$ we define a one-parameter family in $\mathcal{U}(\fg)$ (\textbf{the abstract  Schr\"odinger family}) by 
\[S^{Ab}_{\kappa}(\lambda):=  \kappa -i\left( \lambda  E+F\right), \quad \lambda\in \C.\]

\subsection{Motivation  for the definition  of  the Schr\"odinger  family}\label{physmot} 
In this paragraph we compare our Schr\"{o}dinger family as a family of differential operators on the upper  half cone  with the Schr\"{o}dinger operator of the hydrogen atom in physics.

In Physics, the Schr\"{o}dinger operator    $H_{phys}$ on $\R^{3}\setminus \{0\}$ is given by 
\[{ H_{phys}=-\Delta-\frac{\kappa}{r} } \] 
where $\Delta$ is the Laplacian, $r$ the radius function and $\kappa$ is a positive  constant.  The  corresponding  time-independent  {Schr\"{o}dinger }    equation on $\R^{3}\setminus \{0\}$ is given by 
\[{ H_{phys}\psi =E \psi , } \]
where $E$ is the energy (the eigenvalue of  $H_{phys}$).

Comparing the above with our  suggested model we can choose  the $4$-dimensional PLQS in which $\mathcal{V}=\R^4$, $q(x,y,z,w)=x^2+y^2+z^2-w^2$,   and the fixed linear functional in our  PLQS is the linear functional $w$ (determined by the coordinate $w$).

Projection from the 
upper half cone
 \[{C}_+(\R)=\{(x,y,z,w)\in C_0(\R)| w>0 \} \]
 onto $\R^3\setminus \{0\} $
is a diffeomorphism that enables us to express the Schr\"{o}dinger family  $S_{(q,w,\kappa)}(\lambda=E)$  in terms of  the Cartesian coordinates $x,y,z$, as 
\[ r\Delta  + \kappa +\lambda r=r\left(-H_{phys}+E\right).   \]
That is, on the space of smooth functions  on $\R^3\setminus\{0\}$,
the operator $S_{(q,w,\kappa)}(E)$   is proportional to   $(H_{phys}-E)$ and both have the same kernel. 

\section{Spectrum of the Schr\"odinger family}\label{se6} 
In this section we discuss the  spectrum of a family of operators on a topological  vector space. 

We compare   the  spectrum of the Schr\"{o}dinger family in $\mathcal{S}(\boldsymbol{H})$  with the spectrum of the Schr\"{o}dinger operator of the hydrogen atom in physics. We show that the latter coincides with the spectrum of the Schr\"{o}dinger family in $\mathcal{S}(\boldsymbol{H})_+$ the subspace   consisting of  functions that are supported on the "positive half cone".
\subsection{Invertibility and spectrum}
In this subsection we define the spectrum of a family of continuous operators on a topological vector space. 

Our goal is to calculate the spectrum of the Schr\"{o}dinger family in $\mathcal{S}(\boldsymbol{H})$. In this setting the  Schr\"{o}dinger family acts as a family of operators  within a representation of $SL_2(\R)$ that commute with $SO(n)$.  We show that it is enough to calculate the spectrum within each $SO(n)$-isotypic component.
 
\subsubsection{The spectrum of a family of operators}
In functional analysis, given  a continuous operator $T$ on a dense linear subspace $\boldsymbol{W}$ of a topological vector space $\boldsymbol{H}$ (typically $\boldsymbol{H}$ is a Hilbert space) the spectrum of $T$ is defined as  the collection of all  $\lambda\in \C$ for which the operator $T-\lambda :\boldsymbol{W}\longrightarrow \boldsymbol{H}$  does not have a continuous inverse, see e.g., Section VI3 of \cite{reed}. In physics, given an operator $T$ as above, one is usually   interested in a spectral decomposition of $\boldsymbol{H}$ with respect to $T$. Sometimes there are not enough eigenvectors in $\boldsymbol{H}$ for a spectral decomposition and a natural extension is the dual space. This motivates the following definition.

Let $\boldsymbol{W}$ be a complex topological vector space. We denote by $\boldsymbol{W}'$ the continuous complex-linear dual of $\boldsymbol{W}$ and define the \textit{ Hermitian dual} of $\boldsymbol{W}$ by
\[
\boldsymbol{W}^{\dagger}:=\overline{\boldsymbol{W}'}.
\]
Equivalently, $\boldsymbol{W}^{\dagger}$ is the space of continuous conjugate-linear functionals on $\boldsymbol{W}$. If $T:\boldsymbol{W}\longrightarrow \boldsymbol{W}$ is continuous, we write
\[
T^{\dagger}:=\overline{T'}:\boldsymbol{W}^{\dagger}\longrightarrow \boldsymbol{W}^{\dagger}
\]
for the Hermitian adjoint of $T$, where $T'$ is the transpose on $\boldsymbol{W}'$. In the realization of $\boldsymbol{W}^{\dagger}$ as conjugate-linear functionals this means
\[
(T^{\dagger}\eta)(w)=\eta(Tw),\qquad \eta\in \boldsymbol{W}^{\dagger},\quad w\in \boldsymbol{W}.
\]

Let $\boldsymbol{W}$ be a topological vector space. Consider a family $\mathcal{F}$ of  continuous  operators 
on $\boldsymbol{W}$  parametrized by  a set $\Lambda$ (i.e., a map $\mathcal{F}:\Lambda \longrightarrow  \operatorname{End}(\boldsymbol{W})$).

\begin{defn*}
    Let  $\mathcal{F}:\Lambda \longrightarrow  \operatorname{End}(\boldsymbol{W})$ be a family of continuous linear operators.  For every $\lambda\in \Lambda$ we define the $\lambda$-solution space (of $\mathcal{F}$) in $\boldsymbol{W}$ to be the kernel of the Hermitian adjoint $\mathcal{F}(\lambda)^{\dagger}:\boldsymbol{W}^{\dagger}\longrightarrow \boldsymbol{W}^{\dagger}$. We denote this space by 
    $\operatorname{Ker}(\mathcal{F}(\lambda);\boldsymbol{W}^{\dagger})$. 

    We  define $\operatorname{Spec}_{\boldsymbol{W}}\{\mathcal{F}\},$  the spectrum of the family  $\mathcal{F}$, to be the collection of all $\lambda\in \Lambda$ with nonzero solution space. That is,    
\[\operatorname{Spec}_{\boldsymbol{W}}\{\mathcal{F}\}=\{\lambda \in \Lambda|\operatorname{Ker}(\mathcal{F}(\lambda);\boldsymbol{W}^{\dagger})\neq 0 \}.\]
 
\end{defn*}

\begin{example*} 
For a given continuous operator $T$ on a topological vector space $\boldsymbol{W}$, we obtain a family  $\mathcal{F}:\C\longrightarrow \operatorname{End}(\boldsymbol{W})$ via $\mathcal{F}(\lambda)=T-\lambda$. Then
\[
\operatorname{Ker}(\mathcal{F}(\lambda);\boldsymbol{W}^{\dagger})
=\{\eta\in \boldsymbol{W}^{\dagger}\mid T^{\dagger}\eta=\overline{\lambda}\eta\}.
\]
Thus $\operatorname{Spec}_{\boldsymbol{W}}\{\mathcal{F}\}$ is the complex conjugate of the point spectrum of $T^{\dagger}$.

When 
$\boldsymbol{W}=C^{\infty}_c(\R)$,
the space of  smooth compactly supported functions on the line equipped with the standard topology  and $T=\frac{d}{dx}$, one has $\operatorname{Spec}_{\boldsymbol{W}}\{\mathcal{F}\}=\C$. In this case $\operatorname{Ker}(\mathcal{F}(\lambda);\boldsymbol{W}^{\dagger})$ is spanned by the element of
\[
\mathcal{D}^{\dagger}(\R):=(C_c^{\infty}(\R))^{\dagger}=\overline{\mathcal{D}'(\R)}
\]
represented by the locally integrable function $u(x)=e^{-\overline{\lambda}x}$, namely by the conjugate-linear functional
\[
f\longmapsto \int_{\R}\overline{f(x)}e^{-\overline{\lambda}x}dx.
\]
Finally, when $\boldsymbol{W}$ is the space of Schwartz class functions on $\R$ and $T=\frac{d}{dx}$, the spectrum reduces to $i\R$.
\end{example*}

\subsubsection{The spectrum of the Schr\"{o}dinger family   within a representation of $SL_2(\R)$ }\label{333}
For any smooth representation  $(\pi,\boldsymbol{W})$ of $SL_2(\R)$,  $\pi(S^{Ab}_{\kappa})$ is a family of operators in $\operatorname{End}(\boldsymbol{W})$ parameterized by $\C$.

The following  lemma will be needed later on. We omit its simple proof. 
\begin{lemma*}\label{l111}
Let  $(\pi,\boldsymbol{W})$ be a smooth representation  of $SL_2(\R)$ and assume that $\kappa>0$.
Then 
\[\operatorname{Spec}_{\boldsymbol{W}}\{\pi(S^{Ab}_{\kappa})\}=\kappa^2\left(\operatorname{Spec}_{\boldsymbol{W}}\{\pi(S^{Ab}_{1})\}\right).\]
\end{lemma*}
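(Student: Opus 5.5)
The plan is to rescale by the one-parameter subgroup generated by $H$. Conjugating by $\exp(tH)$ multiplies $E$ and $F$ by reciprocal factors, so it carries $S^{Ab}_{\kappa}(\lambda)$ to a scalar multiple of $S^{Ab}_{1}(\lambda/\kappa^{2})$. The solution spaces are then transported along an invertible map, and the spectra correspond.

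\textbf{Step 1 (algebraic identity).} Let $\{E,H,F\}$ be the standard $\mathfrak{sl}_2$-triple in the complexified Lie algebra of $SL_2(\R)$. From $[H,E]=2E$ and $[H,F]=-2F$ we get, for $g_t:=\exp(tH)\in SL_2(\R)$ with $t\in\R$,
\[
\operatorname{Ad}(g_t)E=e^{2t}E,\qquad \operatorname{Ad}(g_t)F=e^{-2t}F .
\]
Since $\kappa>0$, we may choose the real number $t=-\tfrac12\log\kappa$, so that $e^{-2t}=\kappa$ and $e^{2t}=\kappa^{-1}$. Applying $\operatorname{Ad}(g_t)$, extended to $\mathcal{U}(\fg)$, to the abstract Schr\"odinger family gives
\[
\operatorname{Ad}(g_t)\big(S^{Ab}_{\kappa}(\lambda)\big)=\kappa-i\left(\lambda\kappa^{-1}E+\kappa F\right)=\kappa\, S^{Ab}_{1}(\lambda/\kappa^{2}) .
\]

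\textbf{Step 2 (transport to the representation).} In the smooth representation $(\pi,\boldsymbol{W})$ the operator $\pi(g_t)$ is a continuous linear automorphism of $\boldsymbol{W}$, with continuous inverse $\pi(g_{-t})$. It satisfies $\pi(g_t)\pi(X)\pi(g_t)^{-1}=\pi(\operatorname{Ad}(g_t)X)$ for $X\in\mathcal{U}(\fg)$. Hence
\[
\pi(S^{Ab}_{\kappa}(\lambda))=\kappa\,\pi(g_t)^{-1}\,\pi\big(S^{Ab}_{1}(\lambda/\kappa^{2})\big)\,\pi(g_t).
\]
Taking Hermitian adjoints on $\boldsymbol{W}^{\dagger}$ reverses the order of the factors, and $\pi(g_t)^{\dagger}$ is a linear automorphism of $\boldsymbol{W}^{\dagger}$ with inverse $(\pi(g_t)^{-1})^{\dagger}$. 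Since $\kappa\neq 0$, the map $\pi(g_t)^{\dagger}$ sends $\operatorname{Ker}(\pi(S^{Ab}_{1}(\lambda/\kappa^2));\boldsymbol{W}^{\dagger})$ isomorphically onto $\operatorname{Ker}(\pi(S^{Ab}_{\kappa}(\lambda));\boldsymbol{W}^{\dagger})$. Concretely, if $\eta$ lies in the first kernel, then for every $w\in\boldsymbol{W}$,
\[
\big(\pi(g_t)^{\dagger}\eta\big)\big(\pi(S^{Ab}_{\kappa}(\lambda))w\big)=\kappa\,\eta\big(\pi(S^{Ab}_{1}(\lambda/\kappa^2))\pi(g_t)w\big)=0 ,
\]
and the reverse inclusion follows in the same way using $\pi(g_t)^{-1}$.

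\textbf{Step 3 (conclusion).} By Step 2, $\lambda\in\operatorname{Spec}_{\boldsymbol{W}}\{\pi(S^{Ab}_{\kappa})\}$ if and only if $\lambda/\kappa^{2}\in\operatorname{Spec}_{\boldsymbol{W}}\{\pi(S^{Ab}_{1})\}$. This is exactly the claimed equality
\[
\operatorname{Spec}_{\boldsymbol{W}}\{\pi(S^{Ab}_{\kappa})\}=\kappa^2\left(\operatorname{Spec}_{\boldsymbol{W}}\{\pi(S^{Ab}_{1})\}\right).
\]

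The only delicate point is that $g_t$ must be a real group element, so that $\pi(g_t)$ is defined on $\boldsymbol{W}$ and continuous there. This is where $\kappa>0$ is used: it makes $t=-\tfrac12\log\kappa$ real. For $\kappa<0$ one would need $e^{-2t}<0$, which no real $t$ achieves, and the argument would not apply.
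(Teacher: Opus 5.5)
Your proof is correct: conjugating by the real torus element $g_t=\exp(tH)$ with $e^{-2t}=\kappa$ gives $\operatorname{Ad}(g_t)S^{Ab}_{\kappa}(\lambda)=\kappa\, S^{Ab}_{1}(\lambda/\kappa^{2})$, and $\pi(g_t)^{\dagger}$ then identifies the two solution spaces in $\boldsymbol{W}^{\dagger}$. The paper omits this proof as simple, and your argument is evidently the intended one: it is the same $\operatorname{Ad}_A$-conjugation by an element of $SL_2(\R)$ that the paper uses when it computes the spectrum on the isotypic components $\mathcal{S}(\boldsymbol{H}_{\pm})_{\ell}$.
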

\subsubsection{The spectrum of the Schr\"{o}dinger family  within a representation of $SL_2(\R) \times SO(n)$ }\label{334}
We will be interested in the spectrum of the Schr\"{o}dinger family within a representation of $SL_2(\R)\times SO(n)$, mainly in $\mathcal{S}(\boldsymbol{H})$. We show that in our setting calculating  the spectrum and solutions within each $SO(n)$-isotypic component is enough to  determine the spectrum and solutions in the ambient topological vector space. 

We shall use the following Peter--Weyl reduction, which applies to the smooth Fr\'{e}chet representations that occur below.

Let $K=SO(n)$, and let $(\pi,\boldsymbol{W})$ be a smooth Fr\'{e}chet representation of $SL_2(\R)\times K$.  For $\tau\in \widehat K$, let $\chi_\tau$ be the character of $\tau$ and define the continuous projection
\[
P_\tau(v):=\dim(\tau)\int_K \overline{\chi_\tau(k)}\,\pi(k)v\,dk,
\]
where $dk$ is the normalized Haar measure on $K$.  We denote the image of $P_\tau$ by $\boldsymbol{W}_\tau$.  Thus $\boldsymbol{W}_\tau$ is the $K$-isotypic component of type $\tau$.  By the Peter--Weyl theorem for Fr\'{e}chet representations of compact groups,
\[
\boldsymbol{W}_{fin}:=\bigoplus_{\tau\in\widehat K}\boldsymbol{W}_\tau
\]
is dense in $\boldsymbol{W}$.

As in the previous subsection, we suppress $\pi$ from the notation and write $S^{Ab}_{\kappa}(\lambda)$ for the corresponding continuous endomorphism of $\boldsymbol{W}$, and also for its restriction to every $\boldsymbol{W}_\tau$.

\begin{claim*}
Let $(\pi,\boldsymbol{W})$ be as above.  Then:
\begin{enumerate}
    \item For every $\lambda\in\C$ and every $\tau\in\widehat K$, restriction to $\boldsymbol{W}_{\tau}$ induces a canonical isomorphism
\[
\left(\operatorname{Ker}(S^{Ab}_{\kappa}(\lambda);\boldsymbol{W}^{\dagger})\right)_\tau
\simeq
\operatorname{Ker}(S^{Ab}_{\kappa}(\lambda);\boldsymbol{W}_{\tau}^{\dagger}).
\]
The inverse map is given by
\[
\eta\longmapsto \eta\circ P_{\tau}.
\]
    \item The spectrum of the Schr\"{o}dinger family in $\boldsymbol{W}$ is the union of the spectra on the $K$-isotypic pieces:
\[
\operatorname{Spec}_{\boldsymbol{W}}\{S^{Ab}_{\kappa}\}
=
\bigcup_{\tau\in\widehat K}
\operatorname{Spec}_{\boldsymbol{W}_\tau}\{S^{Ab}_{\kappa}\}.
\]
\end{enumerate}
\end{claim*}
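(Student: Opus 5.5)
The key structural fact is that $S^{Ab}_{\kappa}(\lambda)$ lies in the image of $\mathcal{U}(\mathfrak{sl}_2)$, so $\pi(S^{Ab}_{\kappa}(\lambda))$ commutes with $\pi(k)$ for every $k\in K$. Hence it commutes with each projection $P_\tau$ and preserves each $\boldsymbol{W}_\tau$. The plan is to use this commutation together with the density of $\boldsymbol{W}_{fin}$ from the Peter--Weyl theorem. Throughout, write $T=S^{Ab}_{\kappa}(\lambda)$.

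For part (1), I first make precise what $\left(\operatorname{Ker}(T;\boldsymbol{W}^{\dagger})\right)_\tau$ means. The compact group $K$ acts on $\boldsymbol{W}^{\dagger}$ by $(k\eta)(v)=\eta(\pi(k)^{-1}v)$. The corresponding projection on $\boldsymbol{W}^{\dagger}$ is the transpose of $P_\tau$, possibly with $\tau$ replaced by its dual; I will check this bookkeeping against the convention used. Its image consists exactly of those $\eta$ with $\eta=\eta\circ P_\tau$. Because $T$ commutes with $K$, the Hermitian adjoint $T^{\dagger}$ commutes with this $K$-action. So $\operatorname{Ker}(T^{\dagger})$ is $K$-stable, and its $\tau$-isotypic part is $\{\eta\in \operatorname{Ker} T^{\dagger} : \eta=\eta\circ P_\tau\}$.

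I then define two maps between the spaces in (1).
\begin{enumerate}
\item \emph{Restriction.} Send such an $\eta$ to $\eta|_{\boldsymbol{W}_\tau}$. This is continuous because $\boldsymbol{W}_\tau$ is a closed subspace, being the image of a continuous projection. It lies in the kernel because $\eta(Tw)=0$ for all $w$.
\item \emph{Extension.} Send $\xi\in\operatorname{Ker}(T;\boldsymbol{W}_\tau^{\dagger})$ to $\xi\circ P_\tau$. This is continuous. It lies in $\operatorname{Ker}(T;\boldsymbol{W}^{\dagger})$ because $(\xi\circ P_\tau)(Tv)=\xi(TP_\tau v)=0$, using $P_\tau T=TP_\tau$ and $P_\tau v\in\boldsymbol{W}_\tau$. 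It is $\tau$-isotypic because $P_\tau^2=P_\tau$.
\end{enumerate}
The two maps are mutually inverse. Indeed, $(\xi\circ P_\tau)|_{\boldsymbol{W}_\tau}=\xi$ since $P_\tau$ is the identity on $\boldsymbol{W}_\tau$. Conversely, $\eta|_{\boldsymbol{W}_\tau}\circ P_\tau=\eta\circ P_\tau=\eta$ for isotypic $\eta$.

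For part (2), the inclusion $\supseteq$ follows at once from (1): a nonzero $\xi$ in some $\operatorname{Ker}(T;\boldsymbol{W}_\tau^{\dagger})$ extends to the nonzero element $\xi\circ P_\tau$ of $\operatorname{Ker}(T;\boldsymbol{W}^{\dagger})$. For $\subseteq$, let $\eta\neq0$ lie in $\operatorname{Ker}(T;\boldsymbol{W}^{\dagger})$. Since $\boldsymbol{W}_{fin}$ is dense and $\eta$ is continuous, $\eta$ cannot vanish on every $\boldsymbol{W}_\tau$. Pick $\tau$ with $\eta|_{\boldsymbol{W}_\tau}\neq0$. This restriction is annihilated by $T^{\dagger}$, because $T$ preserves $\boldsymbol{W}_\tau$. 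Hence $\lambda\in\operatorname{Spec}_{\boldsymbol{W}_\tau}$.

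I expect the only genuinely delicate step to be the identification of the $\tau$-isotypic part of the dual, which need not be a Fréchet space. I would handle it purely algebraically, via the idempotent $\eta\mapsto\eta\circ P_\tau$, and not appeal to any Peter--Weyl theorem on the dual. I would also check whether the conventions force $\tau$ versus its contragredient $\bar\tau$ to appear; this does not affect the spectrum statement.
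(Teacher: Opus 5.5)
Your proposal is correct and follows the paper's proof. It uses the same restriction map with inverse $\eta\mapsto\eta\circ P_\tau$, resting on $P_\tau T=TP_\tau$, and the same density-of-$\boldsymbol{W}_{fin}$ argument for (2); the only variation is that you get injectivity in (1) directly from $\eta=\eta\circ P_\tau$, whereas the paper uses Schur orthogonality to show $\eta$ vanishes on every $\boldsymbol{W}_\sigma$ with $\sigma\not\simeq\tau$ and then invokes density, which amounts to the same thing.

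Your worry about $\tau$ versus its contragredient does not arise. Because $\eta$ is conjugate-linear and $\chi_\tau(k^{-1})=\overline{\chi_\tau(k)}$, the $\tau$-projection on $\boldsymbol{W}^{\dagger}$ is exactly $\eta\mapsto\eta\circ P_\tau$.
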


\begin{proof}
Put $T_\lambda=S^{Ab}_{\kappa}(\lambda)$. Since $T_\lambda$ belongs to the image of $\mathcal{U}(\mathfrak{sl}_2)$, it commutes with the action of $K$. Hence it commutes with every $P_\sigma$, and each $\boldsymbol{W}_\sigma$ is $T_\lambda$-stable.

We first prove the isomorphism in (1).  The restriction map from the left-hand side to the right-hand side is well-defined because $\boldsymbol{W}_{\tau}$ is $T_\lambda$-stable.  If $\varphi\in (\boldsymbol{W}^{\dagger})_\tau$, then by Schur orthogonality for the Hermitian dual $\varphi$ vanishes on $\boldsymbol{W}_\sigma$ for every $\sigma\not\simeq\tau$.  Therefore, if its restriction to $\boldsymbol{W}_{\tau}$ is zero, then $\varphi$ vanishes on $\boldsymbol{W}_{fin}$, and hence $\varphi=0$ by density.  This proves injectivity.

Conversely, let
\[
\eta\in \operatorname{Ker}(S^{Ab}_{\kappa}(\lambda);\boldsymbol{W}_{\tau}^{\dagger}).
\]
Then $\widetilde{\eta}:=\eta\circ P_{\tau}$ is a continuous conjugate-linear functional on $\boldsymbol{W}$, and it belongs to the $\tau$-isotypic component of $\boldsymbol{W}^{\dagger}$.  Since $P_{\tau}$ commutes with $T_\lambda$, we have $T_\lambda^{\dagger}\widetilde{\eta}=0$.  Thus $\widetilde{\eta}$ belongs to the left-hand side and restricts to $\eta$ on $\boldsymbol{W}_{\tau}$.  This proves (1).

We now prove (2). If $\lambda\in\operatorname{Spec}_{\boldsymbol{W}_\tau}\{S^{Ab}_{\kappa}\}$, then a nonzero solution on $\boldsymbol{W}_\tau$ gives, by composition with $P_\tau$, a nonzero solution on $\boldsymbol{W}$. Hence the right-hand side is contained in the left-hand side.

For the opposite inclusion, let $0\neq\varphi\in\operatorname{Ker}(T_\lambda^{\dagger};\boldsymbol{W}^{\dagger})$.  Since $\boldsymbol{W}_{fin}$ is dense in $\boldsymbol{W}$, the functional $\varphi$ is nonzero on some $\boldsymbol{W}_\tau$.  Its restriction to this $T_\lambda$-stable subspace is a nonzero element of
\[
\operatorname{Ker}(S^{Ab}_{\kappa}(\lambda);\boldsymbol{W}_\tau^{\dagger}).
\]
Thus $\lambda\in\operatorname{Spec}_{\boldsymbol{W}_\tau}\{S^{Ab}_{\kappa}\}$ for some $\tau\in\widehat K$, which proves the reverse inclusion.
\end{proof}

 \subsection{The spectrum of  the Schr\"odinger family in $\mathcal{S}(\boldsymbol{H})$}
In this section we calculate the spectrum of   the Schr\"odinger family in $\mathcal{S}(\boldsymbol{H})$. We show that  the spectrum of   the Schr\"odinger family in $\mathcal{S}(\boldsymbol{H})_{+} $ coincides with the spectrum that is calculated in physics in the space of functions (or distributions) in $L^2(\R^3)$ satisfying  certain  boundary conditions. 

We  fix $n=4$ to be in agreement with the physical system of the  hydrogen atom. The presented results hold with appropriate adjustments for any    $n\in 2\mathbb{N}$ such that  $n\geq 4$.
\subsubsection{The main result}\label{t721}
\begin{theorem*}\label{t151}
Let $(\mathcal{V},q,w)$ be a $4$-dimensional  PLQS, let $\kappa$ be a fixed positive number, and  let $\boldsymbol{H}$ be the corresponding canonical Hilbert space.
\begin{eqnarray}\nonumber
      && \operatorname{Spec}_{\mathcal{S}(\boldsymbol{H})_+}\{S^{Ab}_{\kappa} \}=  \left\{-\frac{\kappa^2}{(2n)^2}
|n\in \N\right\}\cup[0,\infty) \\ \nonumber
      && \operatorname{Spec}_{\mathcal{S}(\boldsymbol{H})_-}\{S^{Ab}_{\kappa} \}=  (0,\infty).
  \end{eqnarray}
\end{theorem*}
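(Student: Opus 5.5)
The plan is to use the dual pair of Proposition \ref{PLQS} to reduce to $SL_2(\R)$, and then to compute inside single discrete series representations. The operators $e,h,f$ span $\mathfrak{l}_w(\R)\simeq\mathfrak{sl}_2(\R)$, which commutes with $\mathfrak{k}_w(\R)\simeq\mathfrak{so}(3)$. Hence $\mathcal{S}(\boldsymbol{H})_\pm$ is a smooth Fr\'echet representation of $SL_2(\R)\times K$ with $K=SO(3)$, the stabilizer of $w$ in the Lorentz subgroup. By the Peter--Weyl claim of Subsection \ref{334}, it suffices to compute $\operatorname{Spec}_{\mathcal{S}(\boldsymbol{H})_{\pm,\tau}}\{S^{Ab}_\kappa\}$ for each $\tau=\tau_\ell$, $\ell\in\N_0$. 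By the scaling lemma of Subsection \ref{333}, we may take $\kappa=1$.

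The first step identifies the isotypic pieces, using the known $K$-type structure of the minimal representation of $O(4,2)$ (cf.\ \cite{KobayashiMano2011}) or, concretely, separation of variables. We have $\mathcal{S}(\boldsymbol{H})_{+,\tau_\ell}\simeq \tau_\ell\otimes D^+_{2\ell+2}$, the smooth vectors of the holomorphic discrete series of lowest $SO(2)$-weight $2\ell+2$. Similarly, $\mathcal{S}(\boldsymbol{H})_{-,\tau_\ell}\simeq\tau_\ell\otimes D^-_{2\ell+2}$, the antiholomorphic series with highest weight $-(2\ell+2)$. To verify this I would write $C_+(\R)\cong\R^3\setminus\{0\}$ and use spherical harmonics. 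In these coordinates $e$ is multiplication by $i r$ (up to sign), $h$ is the shifted Euler operator, and $f$ is $i$ times the radial part of $r\Delta$. So each isotypic piece becomes a space of functions of $r>0$ on which $\mathfrak{sl}_2$ acts by the standard ``Kirillov/radial'' model of a discrete series. The $SO(2)$ generator $e-f$ (suitably normalized) is a Laguerre-type operator, whose eigenvalues on $\boldsymbol{H}_+$ are $2\ell+2+2j$, $j\in\N_0$.

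Next I would compute the spectrum of $1-i(\lambda E+F)$ in $D^\pm_m$, $m=2\ell+2$, splitting by the conjugacy type of $X_\lambda=\lambda E+F$.

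\emph{Case $\lambda<0$.} Here $X_\lambda$ is elliptic, and there is $g\in SL_2(\R)$ with $\operatorname{Ad}(g)X_\lambda=\sqrt{|\lambda|}\,W$, where $W$ generates $SO(2)$. A solution is then a generalized eigenvector of $iW$ with eigenvalue $1/\sqrt{|\lambda|}$. Since $\pi(g)$ is an automorphism of the smooth space, and $SO(2)$-isotypic decomposition works as in Subsection \ref{334}, solutions exist iff $1/\sqrt{|\lambda|}$ is a $W$-weight of $D^\pm_m$. In $D^+_m$ the weights are $m+2j$, giving $\lambda=-1/(2N)^2$ for all $N\geq \ell+1$. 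Taking the union over $\ell$ gives all $N\in\N$. In $D^-_m$ the weights are negative, so there are no solutions.

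\emph{Case $\lambda>0$.} Here $X_\lambda$ is hyperbolic, conjugate to $\sqrt\lambda\,H$. The condition becomes that $iH$ has a generalized eigenvector with the real eigenvalue $1/\sqrt\lambda$ in $(D^\pm_m)^\dagger$. Every discrete series restricted to the split Cartan $A$ has absolutely continuous spectrum covering all of $\R$. Concretely, in the Kirillov model $A$ acts by dilations, and the Mellin characters $r^{s}$ with $\operatorname{Re}s$ on the unitary line give tempered functionals on the smooth vectors. So every $\lambda>0$ is in the spectrum for both signs.

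\emph{Case $\lambda=0$.} The condition is $F^\dagger\eta=-i\eta$, i.e.\ a character of the unipotent subgroup. In the model where $F$ acts by multiplication by $\pm i r'$ with $r'>0$, such $\eta$ (a delta function) exists only for the correct sign, namely in $D^+$ and not in $D^-$. This places $0$ in the $+$ spectrum only.

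\emph{Case $\lambda\notin\R$.} I would show there is no solution, and I expect this to be the main obstacle, together with justifying the boundary behaviour in the $\lambda\ge 0$ cases. The operator is still conjugate to a complex multiple of $H$ or $W$, but now with a non-real eigenvalue. The trial argument is to work in the radial ODE model: solutions of $(1-i(\lambda E+F))^\dagger\eta=0$ on $(0,\infty)$ are Whittaker functions $M,W_{\kappa',\ell+1/2}(2\sqrt{-\lambda}\,r)$. The smooth-vector space imposes, via explicit seminorms from the $G$-action, the regularity $r^{\ell}$ at $0$ and at most polynomial growth at $\infty$ on the distribution side. For non-real $\lambda$ the unique solution regular at $0$ grows exponentially, so it cannot define a continuous functional. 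The same asymptotic analysis independently confirms the real cases. Assembling over all $\ell$ via Subsection \ref{334} and rescaling by $\kappa^2$ gives the stated sets.
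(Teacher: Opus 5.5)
\emph{There is a gap: the case $\lambda\notin\R$ is not handled correctly.}

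For real $\lambda$ your plan matches the paper's proof. After reducing to $D^{\pm}_{2\ell+2}\otimes\mathcal{Y}^{\ell}$ and $\kappa=1$, you conjugate $\lambda E+F$ inside $SL_2(\R)$ to a multiple of the compact generator, of $H$, or of a nilpotent. You then read off $SO(2)$-weights, the Mellin functionals $t^{-1+i/(2\nu)}$, and $\delta(t-1)$.

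The theorem needs non-real $\lambda$ excluded, and you rightly call this the main obstacle. Two things go wrong there.

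First, the remark that $\lambda E+F$ ``is still conjugate to a complex multiple of $H$ or $W$'' is false. The matrix $\left(\begin{smallmatrix}0&\lambda\\1&0\end{smallmatrix}\right)$ is a complex multiple of a real matrix only when $\lambda\in\R$.

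Second, and more seriously, the radial-ODE argument rests on the claim that continuity on the smooth vectors forces a kernel element to behave like $t^{\ell}$ at the tip. That is not true.
\begin{itemize}
\item For non-real $\lambda$, the solution that decays at infinity behaves like $c\,t^{-\ell-1}$ near $t=0$ with $c\neq 0$. If $c=0$, it would be a Schwartz vector killed by $(1-iF)+\overline{\lambda}\,t$. Since $1-iF$ is symmetric and $t>0$, this forces $\lambda\in\R$.
\item Yet $t^{\ell}a\mapsto\int_0^\infty t^{\ell}\overline{a(t)}\,c\,t^{-\ell-1}\,t\,dt$ is a perfectly continuous functional on $t^{\ell}\mathcal{S}([0,\infty))$.
\end{itemize}
So discarding the exponentially growing regular solution leaves this singular, decaying solution untouched.

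What actually excludes it is the equation tested against $f=t^{\ell}a(t)Y(u)$ with $a(0)\neq 0$. Integration by parts leaves a Wronskian boundary term, a nonzero multiple (proportional to $(2\ell+1)c$) of $J_0:f\mapsto\overline{a(0)}$. You must also show this cannot be cancelled by adding a functional supported at the tip, i.e.\ a finite combination of $J_j:f\mapsto\overline{a^{(j)}(0)}$ (paired with some $\alpha\in(\mathcal{Y}^{\ell})^{\dagger}$). That does hold. On $a$ the operator is
\[
a\mapsto ta''+(2\ell+2)a'+(1+\lambda t)a ,
\]
so
\[
S^{\dagger}J_j=J_j+\overline{\lambda}\,jJ_{j-1}+(j+2\ell+2)J_{j+1},
\]
which strictly raises the top index. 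The same check is needed on $C_-(\R)$. This tip analysis is exactly what you flagged but did not supply.

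Your $\lambda=0$ claim on $C_-(\R)$ needs the same kind of control, but there it is easy. After conjugating to $1+iE$, the operator is multiplication by $1-t\geq 1$. Since $(1-t)^{-1}$ is a multiplier of $t^{\ell}\mathcal{S}((-\infty,0])$, the operator is surjective and its adjoint is injective.

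The paper avoids the radial analysis for non-real $\lambda$ entirely. It proves that the spectrum is real in any admissible smooth Fr\'echet representation of $SL_2(\R)$:
\begin{itemize}
\item By Casselman's submodule theorem, each irreducible piece is a quotient of a principal series realized on $C^{\infty}(S^1)$.
\item There the family is a first-order operator with leading coefficient $\cos^2\theta-\lambda\sin^2\theta$, which never vanishes for $\lambda\notin[0,\infty)$.
\item Hence every distributional solution is smooth, and its monodromy satisfies $|M|\neq 1$ whenever $\operatorname{Im}\lambda\neq 0$.
\end{itemize}
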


The representation $(\pi,G(\mathcal{V}), \boldsymbol{H})$ gives rise to a representation of $SL_2(\R)\times SO(3)$ on $\boldsymbol{H}$. 
We shall prove the above theorem using  the representation $(\pi,SL_2(\R)\times SO(3), \boldsymbol{H})$ and 
  Claim \ref{334} by reducing the  spectrum calculations to calculations within  each $SO(3)$-isotypic component. This is  stated precisely in   Proposition \ref{623} below.  
  Each  $SO(3)$-isotypic component
carries a discrete series representation of $SL_2(\R)$. We explain how the above mentioned fact together with Casselman-Wallach  theorem reduces the problem of calculating the spectrum of the Schr\"odinger family  in $\mathcal{S}(\boldsymbol{H})$ to a problem  that can be handled completely within representation theory of $SL_2(\R)$.

\subsubsection{The $(SL_2(\R) \times SO(3))$ representation on 
$\boldsymbol{H}$ }\label{acti}

Let $(\pi,G(\mathcal{V}), \boldsymbol{H})$ be the   unitary irreducible representation  associated with the given  
PLQS $(\mathcal{V},q,w)$ that was  discussed  in Subsection \ref{dis}.

To simplify our discussion we will identify $\mathcal{V}$ with $\R^4$, the Lorentzian quadratic form $q$ with the standard Lorentzian quadratic form $q(x,y,z,w)=x^2+y^2+z^2-w^2$,  and the fixed linear functional $w$ with the coordinate  $w$.

We let ${K}_{w}=SO(3)$  be the connected component of the identity in the  stabilizer of $w$ in $O(q)=O(3,1)$. The connected component of the identity in the centralizer of  $SO(3)$ in $G(\mathcal{V})$ is isomorphic to 
$PSL_2(\R)$.
Hence there is a unique homomorphism $\xi:SL_2(\R)\times SO(3)\longrightarrow  G(\mathcal{V})$
such that; 
\begin{enumerate}
    \item Restricted to $SO(3)$, $\xi$ is the identity. 
    \item For the elements 
    \[E:=\begin{pmatrix}
0 & 1 \\
0 & 0 
\end{pmatrix}, H:=\begin{pmatrix}
1 & 0 \\
0 & -1 
\end{pmatrix}\]
  in   $\mathfrak{sl}_2(\R)$, as differential operators on the cone 
\[\pi(\xi(E))=iw,\quad \pi(\xi(H))=h=2\chi+2. \] 
\end{enumerate}

\begin{remark*}
It can be shown that $\pi(\xi(\mathfrak{sl}_2(\R)))=\mathfrak{l}_w\cap \fs^*(\R)$, 
$\pi(\xi(\mathfrak{so}(3)))=\fk_w\cap \fs^*(\R)$, where
$\mathfrak{l}_w$ and $ \fk_w$ are as in  Section \ref{PLQS}.  
Explicitly, 
\begin{eqnarray}\nonumber
 && \pi(\xi(\mathfrak{so}(3)))=\operatorname{Span}_{\R}\{x\partial_y-y\partial_x,y\partial_z-z\partial_y,z\partial_x-x\partial_z \} \\ \nonumber 
&&\pi(\xi(\mathfrak{sl}_2(\R)))=\operatorname{Span}_{\R}\{e:=iw,f:=i\Psi^Q(w),h=2\chi+2\}.
\end{eqnarray}
Moreover, Proposition \ref{a823} and Proposition \ref{PLQS} imply that  the pair 

\noindent $(\pi(\xi(\mathfrak{sl}_2(\R))),\pi(\xi(\mathfrak{so}(3))))$ is a  dual pair in $\fs^*(\R)$. 

\end{remark*}

\subsubsection{The $(SL_2(\R) \times SO(3))$-isotypic decomposition of
$\boldsymbol{H}_{\pm}$ }\label{iso}
For   $\ell\in \mathbb{Z}_{\geq 0}$ we denote 
  by $\mathcal{Y}^{\ell}$ the subspace  of $L^2(S^2)$ consisting  of spherical harmonics of degree $\ell$. Each $\mathcal{Y}^{\ell}$ is an irreducible representation of $SO(3)$, and up to equivalence there are no others.

 The  $SO(3)$-isotypic component of  
$\boldsymbol{H}$ corresponding to 
$\mathcal{Y}^{\ell}$
will be denoted by $\boldsymbol{H}_{\ell}$ and of $\mathcal{S}(\boldsymbol{H})$ by $\mathcal{S}(\boldsymbol{H})_{\ell}$.
 
We denote  by $D^{\pm}_{2\ell+2}$
 the irreducible unitary discrete series representation  of $SL_2(\R)$ 
 with 
 $SO(2)$-types that are given by $\pm\{2\ell+2,2\ell+4,...\}$, respectively. More concretely,  these  integers are the eigenvalues of  the operator 
 $H_c:=i\begin{pmatrix}
0 & -1 \\
1 & 0 
\end{pmatrix}$,  the generator for the compact one-parameter subgroup $SO(2)$ of  $SL_2(\R)$.  

We set $D_{2\ell+2}:=D^{+}_{2\ell+2}\oplus D^{-}_{2\ell+2}$.

The following claim about the action of $(SL_2(\R) \times SO(3))$  on 
$\mathcal{S}(\boldsymbol{H})$ can be verified directly.
\begin{claim*}
  On $\mathcal{S}(\boldsymbol{H})$ the actions of the Casimirs of $\mathfrak{sl}_2(\R)$ and $\mathfrak{so}(3)$ coincide.  
\end{claim*}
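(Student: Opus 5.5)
We check the identity directly as differential operators on $C_0(\R)$, using explicit formulas for the two triples in the coordinates $x,y,z,w$ with $q=x^2+y^2+z^2-w^2$. Both Casimirs lie in $\mathcal{D}(C)$, and $\mathcal{D}(C)$ acts faithfully on $C^\infty(C_0(\R))$. It therefore suffices to show that the two Casimirs coincide as elements of $\mathcal{D}(C)$, i.e. that their lifts in $\mathcal{D}(V)_I$ differ by an element of $I\mathcal{D}(V)$. The action of $\rho$ on $\mathcal{S}(\boldsymbol{H})$ is the restriction of $\rho^\infty$, so the statement on $\mathcal{S}(\boldsymbol{H})$ follows.

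For $\mathfrak{sl}_2(\R)$ we take the triple $e=iw$, $h=2\chi+2$, $f=i\Psi^Q(w)$, with $\chi=x\partial_x+y\partial_y+z\partial_z+w\partial_w$ and $n=4$. The Casimir is $\Omega_{\mathfrak{sl}_2}=h^2-2h+4ef$, up to the paper's normalization. Since $ef=-w\Psi^Q(w)$, we need the lift of $\Psi^Q(w)=w\Box_Q-h\partial^Q_w$. Here $\Box_Q=\partial_x^2+\partial_y^2+\partial_z^2-\partial_w^2$, and $\partial^Q_w$ is, up to sign, $\partial_w$. Explicitly, $\tau^{-1}(w)=-e_w$, so $\partial^Q_w=-\partial_w$ and
\[
\Psi^Q(w)=w\Box_Q+(2\chi+2)\partial_w .
\]
For $\mathfrak{so}(3)$ we take $\Omega_{\mathfrak{so}(3)}=L_x^2+L_y^2+L_z^2$, with $L_x=y\partial_z-z\partial_y$ and so on. Its standard expansion is
\[
\sum L_i^2=r^2\Delta_3-\chi_3^2-\chi_3,
\]
where $r^2=x^2+y^2+z^2$, $\Delta_3=\partial_x^2+\partial_y^2+\partial_z^2$ and $\chi_3=x\partial_x+y\partial_y+z\partial_z$.

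The key step is to reduce the $\mathfrak{sl}_2$ side modulo $I\mathcal{D}(V)$. We expand $-4w\bigl(w\Box_Q+(2\chi+2)\partial_w\bigr)+(2\chi+2)^2-2(2\chi+2)$, substitute $w^2=r^2$ (valid modulo $I$ when the factor $w^2$ stands on the left), and write $\chi=\chi_3+w\partial_w$. The terms $r^2\partial_w^2$ and those containing $w\partial_w$ must then cancel against each other. The non-tangential part of $\Box_Q$ only survives through combinations that lie in $I\mathcal{D}(V)$, because both sides preserve $I$.

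A cleaner way to organize this, which we would try first, is to use the diffeomorphism $C_+(\R)\to\R^3\setminus\{0\}$ of Subsection~\ref{physmot}. There $e$ is multiplication by $ir$, $h=2\chi_3+2$, and $f=ir\Delta_3$ up to sign, as the formula $r\Delta+\kappa+\lambda r$ for $S_\kappa(\lambda)$ shows. On $C_-(\R)$ the same holds with $r$ replaced by $-r$, and the signs cancel in $ef$. The Casimir then becomes
\[
(2\chi_3+2)^2-2(2\chi_3+2)-4r\Delta_3 r
\]
up to normalization. Using $\Delta_3 r=r\Delta_3+2r^{-1}(\chi_3+1)$, this is a linear combination of $r^2\Delta_3$, $\chi_3^2$, $\chi_3$ and constants. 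Comparing with $\sum L_i^2$ fixes the normalizations: the two Casimirs agree up to an affine change, and the normalization convention of the claim (e.g. $\tfrac14 h^2-\tfrac12 h+ef$ versus $-\sum L_i^2$) makes them equal on the nose. Here the shift $+2$ in $h$, coming from half-forms, is exactly what makes the constant term vanish.

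The main obstacle is bookkeeping: fixing the sign conventions for $\tau$, $\partial^Q_w$ and the Casimir normalizations so that the constant term matches exactly. Agreement of the two sides only up to an additive constant would already signal a convention error in $\Psi^Q$. As a consistency check we would verify that the identity predicts $\ell(\ell+1)$ on $\mathcal{Y}^\ell$ against the lowest $SO(2)$-weight $2\ell+2$ of $D^\pm_{2\ell+2}$. This matches the Casimir of $D^\pm_{2\ell+2}$, namely $\tfrac14 k(k-2)=\ell(\ell+1)$ for $k=2\ell+2$.
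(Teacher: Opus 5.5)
Your overall strategy, checking the identity directly as differential operators modulo $I\mathcal{D}(V)$, is what the paper has in mind (it only says the claim ``can be verified directly''), and your first route goes through. In fact the $\partial_w$-terms cancel identically in $\mathcal{D}(V)$. With $\chi=\chi_3+w\partial_w$ one gets $\tfrac14(h^2-2h)=\chi^2+\chi=\chi_3^2+\chi_3+2\chi_3w\partial_w+w^2\partial_w^2+2w\partial_w$. One also gets $ef=-w\Psi^Q(w)=-w^2\Delta_3-w^2\partial_w^2-2\chi_3w\partial_w-2w\partial_w$. Hence
\[
\tfrac14h^2-\tfrac12h+ef=\chi_3^2+\chi_3-w^2\Delta_3=-(L_x^2+L_y^2+L_z^2)+Q\,\Delta_3 ,
\]
and $Q\Delta_3\in I\mathcal{D}(V)$. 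Your sentence that the non-tangential terms cancel ``because both sides preserve $I$'' is not an argument; the cancellation is exactly this short computation, which you should carry out rather than assert.

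The ``cleaner'' route you say you would try first has an operator-ordering error, and it fails as written. On $C_+(\R)$ you have $e=ir$ and $f=ir\Delta_3$. The sign of $f$ matters and is $+$: apply $w\Box_Q+h\partial_w$ to a $w$-independent extension. The term in $h^2-2h+4ef$ is then $4ef=4(ir)(ir\Delta_3)=-4r^2\Delta_3$, not $-4r\Delta_3 r$; the latter is $4fe$. Using your own commutator formula, your expression equals $4(\chi_3^2-\chi_3-r^2\Delta_3-2)=-4\sum L_i^2-4h$. The discrepancy $-4h=-4[e,f]$ is not a constant, so no affine renormalization reconciles it; your claim that the two Casimirs ``agree up to an affine change'' was never actually checked and is false for that expression. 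Once the order is corrected, no commutator is needed: $h^2-2h+4ef=4(\chi_3^2+\chi_3-r^2\Delta_3)=-4\sum L_i^2$ exactly. The normalization $\tfrac14h^2-\tfrac12h+ef$ versus $-\sum L_i^2$ is then forced, not adjustable. It gives eigenvalue $\ell(\ell+1)$ on $\mathcal{Y}^\ell$, consistent with the paper's formula for $\pi_\ell(F)$.
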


 The  claim above together with  the known classification of the unitary dual of $SL_2(\R)$ imply that for $\ell>0$
the only representations of $SL_2(\R)$ that can appear in  $\boldsymbol{H}_{\ell}$ are 
$D^{\pm}_{2\ell+2}$ and those that can appear  in $\boldsymbol{H}_{0}$ are 
$D^{\pm}_{2}$ and the trivial representation.  The next proposition states explicitly which representations of $SL_2(\R) \times SO(3)$ appear in $\boldsymbol{H}_{\ell}$. 

\begin{proposition*}
As representations of  $SL_2(\R) \times SO(3)$ on Hilbert spaces, for every $\ell\in \N_0$,
\begin{eqnarray}  
  &&\boldsymbol{H}_{\ell}\cong D_{2\ell+2}\otimes_{\C} \mathcal{Y}^{\ell},\\ \nonumber  
  &&(\boldsymbol{H}_{\pm})_{\ell}\cong D^{\pm}_{2\ell+2}\otimes_{\C} \mathcal{Y}^{\ell}
 \end{eqnarray}
\end{proposition*}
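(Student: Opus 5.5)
Since $\xi(SL_2(\R)\times SO(3))$ lies in the identity component $G_0$, it preserves $\boldsymbol{H}_+$ and $\boldsymbol{H}_-$. Hence it suffices to prove $(\boldsymbol{H}_{\pm})_{\ell}\cong D^{\pm}_{2\ell+2}\otimes\mathcal{Y}^{\ell}$; the statement for $\boldsymbol{H}_\ell$ then follows by summing. I would work in polar coordinates on the half cone $C_\pm(\R)$. Write $v=(r\theta,\pm r)$ with $r>0$ and $\theta\in S^2$. The density is $|\omega|=\tfrac12 r\,dr\,d\theta$, so
\[(\boldsymbol{H}_\pm)_\ell \cong L^2(\R_{>0},\tfrac{r}{2}dr)\otimes \mathcal{Y}^\ell .\]
This isomorphism is $SO(3)$-equivariant, with $SO(3)$ acting only on the second factor. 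Because $\mathfrak{l}_w$ centralizes $\mathfrak{k}_w$, the operators $e=iw$, $h$ and $f=i\Psi^Q(w)$ act on $Y\in\mathcal{Y}^\ell$ through the first factor only. Explicitly, $e=\pm ir$ and $h=2r\partial_r+2$. By the formula of Subsection \ref{physmot}, $f$ becomes $\pm i\,r\Delta$ restricted to $\ell$-harmonics, namely $\pm i\bigl(r\partial_r^2+2\partial_r-\ell(\ell+1)/r\bigr)$; the sign and factor must be fixed using $[e,f]=h$. So $(\boldsymbol{H}_\pm)_\ell=\boldsymbol{W}^\pm_\ell\otimes\mathcal{Y}^\ell$, where $\boldsymbol{W}^\pm_\ell$ is a unitary representation of $SL_2(\R)$ on the radial $L^2$ space. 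This space is unitary because it is the $\mathcal{Y}^\ell$-multiplicity space of the restriction of $\pi$.

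Next I would identify $\boldsymbol{W}^\pm_\ell$. The first step is to find the $SO(2)$-weights by diagonalizing the compact generator $H_c=i(F-E)$, which on this space is $\pm(r\Delta_\ell - r)$ up to normalization. Solving $H_c\phi=m\phi$ in the $\ell$-sector, the $L^2$ solutions are of the form $\phi=r^\ell e^{-r}L^{(2\ell+1)}_k(2r)$ with Laguerre polynomials $L^{(2\ell+1)}_k$ (up to rescaling $r$). These give the eigenvalues $m=\pm(2\ell+2+2k)$ for $k\in\N_0$; this is the hydrogen-atom $n=\ell+1+k$ computation. Since the Laguerre functions form an orthonormal basis of $L^2(\R_{>0},r\,dr)$, every weight vector has multiplicity one. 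All weights have a single sign with minimum $|m|=2\ell+2$, and the lowest vector is annihilated by the appropriate lowering operator. A unitary representation with this $K$-spectrum is exactly $D^{\pm}_{2\ell+2}$.

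As a cross-check one can use the Casimir claim: the $\mathfrak{sl}_2$ Casimir equals the $\mathfrak{so}(3)$ Casimir $\ell(\ell+1)$, which forces the constituents to lie in $\{D^\pm_{2\ell+2}\}$, plus the trivial representation when $\ell=0$. The trivial representation is excluded because $e=\pm ir$ has no kernel. Moreover, the sign of $e=\pm ir$ is definite on $\boldsymbol{H}_\pm$, and $E$ has positive spectrum in $D^+$ and negative spectrum in $D^-$, where the convention is to be checked. This pins down $\pm$ without computing the eigenfunctions, and multiplicity one then follows from the density of the Laguerre basis.

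The main obstacle is getting the signs and normalizations right: the exact form of $\Psi^Q(w)$ in polar coordinates, the identification of which sign of $H_c$ corresponds to the upper half cone, and the shift $h=2\chi+2$ in the measure $r\,dr$. I would first compute $\Psi^Q(w)=w\Box_Q-h\partial^Q_w$ restricted to the cone on $r^\ell Y(\theta)g(r)$ directly, and then fix the signs by checking $[e,f]=h$.
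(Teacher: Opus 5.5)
Your argument is correct in substance, but it takes a genuinely different route from the paper's. The paper computes nothing here. It uses the Casimir identity of Subsection~\ref{iso} and the unitary dual of $SL_2(\R)$ to limit the constituents of $\boldsymbol{H}_\ell$ to $D^{\pm}_{2\ell+2}$ (plus the trivial representation for $\ell=0$). It then imports the branching law of the minimal representation from \cite{MR2401813}, remarking only that the multiplicities could also be read off from a computation like the $\lambda=0$ case in Subsection~\ref{623}, where $E$ acts by multiplication by $it$.

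You instead derive the branching law directly. After separating variables, you diagonalize $H_c=i(F-E)$ on the radial factor and recognize the hydrogen radial functions $r^{\ell}e^{-r}L^{(2\ell+1)}_k(2r)$. Their completeness in $L^2(\R_{>0},r\,dr)$ gives the $SO(2)$-types $2\ell+2+2k$, each occurring once. This is self-contained and exhibits the weight vectors explicitly; they are the bound states behind the $\lambda<0$ case of Subsection~\ref{623}. The price is some analysis that the citation hides.

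Your ``cross-check'' is essentially the paper's suggested alternative, and it is a complete argument on its own. The operator $\pi(\xi(\exp sE))$ acts on the multiplicity space as multiplication by $e^{\pm isr}$, whose spectrum is simple and confined to one open half-line. This excludes principal and complementary series, which need both half-lines, and the trivial representation, which needs an atom at $0$. It therefore forces exactly one constituent, of type $D^{+}$ on $\boldsymbol{H}_+$ (resp.\ $D^-$ on $\boldsymbol{H}_-$). So multiplicity one already comes from here rather than from the Laguerre basis, and the Casimir then fixes the parameter $2\ell+2$.

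If you keep the Laguerre route, tighten two points.

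\textbf{Sign.} On $\boldsymbol{H}_+$ one gets $H_c=r-r\Delta_\ell$ with $\Delta_\ell=\partial_r^2+\tfrac{2}{r}\partial_r-\tfrac{\ell(\ell+1)}{r^2}$, not $r\Delta_\ell-r$. So $H_c\phi=m\phi$ is the Coulomb equation $-\Delta_\ell\phi-\tfrac{m}{r}\phi=-\phi$, which gives $m=2(\ell+1+k)>0$ as required.

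\textbf{Weight vectors.} You need to justify that genuine weight vectors of the unitary representation satisfy your ODE and lie on the regular branch. Compactness of $\xi(SO(2))$ makes the multiplicity space the Hilbert sum of its weight spaces. Since $C^\infty_c(C_0(\R))\subset\mathcal{S}(\boldsymbol{H})$ and the self-adjoint generator acts there by the symmetric differential operator, each weight vector is a distributional, hence classical, solution lying in $L^2(r\,dr)$. Near $r=0$ the second solution behaves like $r^{-\ell-1}$, which is not in $L^2(r\,dr)$ even for $\ell=0$. This is precisely where the cone measure replaces the boundary condition needed in $L^2(\R^3,r^2dr)$, so state it explicitly rather than leaving it inside ``the $L^2$ solutions are''.
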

This follows from Sections 1.3 and 3.1   of \cite{MR2401813}.

\begin{remark*}
    The multiplicities of $D^{\pm}_{2\ell+2}\otimes_{\C} \mathcal{Y}^{\ell}$ in the canonical Hilbert space $\boldsymbol{H}$ can also be  deduced by calculations similar to those that are given below  in  the proof of Proposition \ref{623} in the case of $\lambda=0$.
\end{remark*}
\subsubsection{Generalized spherical coordinates on the 
cone}

The cone ${C}_0(\R)$ is a double cover of $\R^3\setminus\{0\}$, this motivates us  to use  new, better-suited coordinates on ${C}_0(\R)$.

 We introduce generalized spherical coordinates on the cone via 
 \begin{eqnarray}\nonumber
&&\R\setminus\{0\}\times S^{2}\longrightarrow {C}_0(\R) \\ \nonumber
&& (t,u)\longmapsto (tu,t).
\end{eqnarray}
  This leads to an isomorphism  
\begin{eqnarray} \nonumber
  &&\boldsymbol{H}\cong L^2\left(\R\setminus\{0\},\frac{1}{2}|t|dt \right)\otimes_{\C} L^2(S^{2}), 
 \end{eqnarray}
respecting the standard actions of $SO(3)$. From this we see that 
for any $\ell \in \N_0$, 
 \[\boldsymbol{H}_{\ell}\cong L^2\left(\R\setminus\{0\},\frac{1}{2}|t|dt \right)\otimes_{\C} \mathcal{Y}^{\ell}. \]

\begin{proposition*}
The action of $\mathfrak{sl}_2(\R)$ on $\mathcal{S}(\boldsymbol{H})_{\ell}$     in terms of generalized spherical coordinates  is given by 
 \begin{eqnarray}\nonumber
&&\pi_{\ell} (E)=  i{t}, \\ \nonumber
&& \pi_{\ell}(H)  = 2 t\partial_t +2  ,\\ \nonumber
&& \pi_{\ell}(F)=   i\left(t\partial_t^2+2\partial_t-\frac{\ell(\ell+1)}{t} \right).  \end{eqnarray}
\end{proposition*}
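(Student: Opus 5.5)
The plan is to compute the three operators $e=iw$, $h=2\chi+2$ and $f=i\Psi^Q(w)$ directly as differential operators on $C_0(\R)$, and then express them in the coordinates $(t,u)\in(\R\setminus\{0\})\times S^2$, where the point is $(tu,t)$. By the preceding Remark, $\pi(\xi(E))=iw$ and $\pi(\xi(H))=h$, and we take $\pi(\xi(F))=f$. On the cone the coordinate $w$ is exactly $t$, which gives $\pi_\ell(E)=it$ at once.

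For $H$ we note that $\chi=\sum_i x_i\partial_{x_i}$ is the Euler field. Since the dilation $(tu,t)\mapsto(stu,st)$ corresponds to $t\mapsto st$, we get $\chi=t\partial_t$ and hence $\pi_\ell(H)=2t\partial_t+2$; this is consistent with $h=2\chi+(n-2)$ for $n=4$. As a check, $[h,e]=2e$ holds.

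The substantive step is $F$. We use the formula of the Lemma in Subsection~\ref{s224}, $\Psi^Q(w)=w\Box_Q-h\,\partial^Q_w$, computed with a lift to $V$ and then restricted to the cone.
\begin{itemize}
\item With $q=x^2+y^2+z^2-w^2$, we have $\Box_Q=\Delta_{xyz}-\partial_w^2$. For $w$ the coordinate functional, $q^*(w)=-1$ and $v_w=-e_4$, so $\partial^Q_w=-\partial_w$.
\item Hence $\Psi^Q(w)=w(\Delta_{xyz}-\partial_w^2)+(2\chi+2)\partial_w$.
\item To evaluate this on a function $F$ of $(t,u)$, extend $F$ off the cone homogeneously in a transverse direction. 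Since the operator preserves $I$, the result does not depend on the extension.
\item A convenient choice is to write ambient points as $(r u, w)$ and use coordinates $(r,w,u)$, with $t=w$ on the cone $r=|w|$. Take the extension $G(r,w,u)=F(r,u)$ for $w>0$ (and analogously for $w<0$), i.e.\ independent of $w$. Then $\partial_w G=0$ and $w\Delta_{xyz}G=w\bigl(\partial_r^2+\tfrac{2}{r}\partial_r-\tfrac{L^2}{r^2}\bigr)F$.
\item On $r=w=t$ this gives $t\partial_t^2+2\partial_t-\tfrac{\ell(\ell+1)}{t}$ on the $\ell$-isotypic part, where $L^2$ acts by $\ell(\ell+1)$.
\end{itemize}
Thus $f=i(t\partial_t^2+2\partial_t-\ell(\ell+1)/t)$, in agreement with Subsection~\ref{physmot}, where $S(E)=r\Delta+\kappa+\lambda r$.

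Two checks remain. First, the computation should be valid for $t<0$ as well. For $w<0$ we have $r=-w=|t|$, and the extension can be taken as $G(r,w,u)=F(-r,u)$, constant in $w$. This again yields $w(\partial_r^2+\tfrac2r\partial_r-\tfrac{L^2}{r^2})$, which with $r=-t$ becomes $t\partial_t^2+2\partial_t-\tfrac{\ell(\ell+1)}{t}$, since $\partial_r=-\partial_t$. Second, we confirm $[e,f]=h$: $[it,\,i(t\partial_t^2+2\partial_t)]=-[t,\,t\partial_t^2+2\partial_t]=2t\partial_t+2$. We also need $[h,f]=-2f$, which follows from the degrees of homogeneity.

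The main obstacle is justifying the extension-independence and handling the sign and normalization conventions for $\partial^Q_w$. An error there would flip the sign of the $h\partial_w$ term, so the $\mathfrak{sl}_2$ relations serve as the decisive consistency check on the final normalization.
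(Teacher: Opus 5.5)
Your argument is correct, but it obtains $\pi_\ell(F)$ by a different route than the paper. The paper says the formulas follow ``by direct calculations using'' the Casimir claim of Subsection~\ref{iso}. Once $\pi_\ell(E)=it$ and $\pi_\ell(H)=2t\partial_t+2$ are read off (exactly as you do), $F$ is forced algebraically: $E$ is multiplication by $it$ with $t\neq0$ on $C_0(\R)$, and $\Omega=\tfrac12H^2+EF+FE$ together with $[E,F]=H$ gives $2EF=\Omega-\tfrac12H^2+H$. Here $\Omega$ acts on $\mathcal{S}(\boldsymbol{H})_\ell$ by the $\mathfrak{so}(3)$-Casimir scalar, which is $2\ell(\ell+1)$ in this normalization. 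So on the paper's route the term $-\ell(\ell+1)/t$ comes for free, and no second-order operator has to be lifted to $V$ and restricted.

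You instead evaluate $\Psi^Q(w)=w\Box_Q+(2\chi+2)\partial_w$ itself on a $w$-independent extension, so only $w\Delta_{xyz}$ survives. The $\ell(\ell+1)$ then comes from the polar form of the Laplacian on $\R^3$. Extension-independence is fine: the same computation as for polynomials shows $\Psi^Q(w)(Qg)\in Q\,C^\infty$ for smooth $g$, and Hadamard's lemma applies near the smooth locus $C_0(\R)$. Your route is self-contained, and combined with $E$ and $H$ it actually proves the Casimir claim that the paper only asserts. The paper's route is shorter, given that claim.

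Two small points. First, for $t<0$ the point $(tu,t)$ has $x/|x|=-u$, so your extension $F(-r,u)$ really extends $F$ composed with the antipodal map of $S^2$. This is harmless, since $L^2$ commutes with that map, but the correct choice is $F(-r,-u)$. Second, the $\mathfrak{sl}_2$ relations are not a decisive normalization check. The term $\ell(\ell+1)/t$ commutes with $e$ and has the right $h$-weight, and with your extension the $h\partial_w$ term vanishes whichever sign it carries. What genuinely validates the computation is that your $\Psi^Q(w)$, with $\partial^Q_w=-\partial_w$, preserves $I$.
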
 

The proof follows from direct calculations using claim \ref{iso}.

\subsubsection{Implications of the Casselman-Wallach  theorem}\label{Imp}
In this paragraph  we explain how  Casselman-Wallach globalization theorem allows us to do all the relevant calculations for determining the spectrum of the Schr\"{o}dinger family completely within representation theory of $SL_2(\R)$. However, in this paper we solve this problem working with  these representations of $SL_2(\R)$ as they concretely appear inside the Hilbert  space $\boldsymbol{H}$ of functions on the cone. 

 By Casselman-Wallach globalization theorem \cite{Wallach3,Casselman,Wallach2},  any two 
admissible  smooth moderate growth Fr\'{e}chet representations of  the same   reductive Lie group  that are infinitesimally  equivalent  must be  isomorphic as topological representations. 

The topological vector space $\mathcal{S}(\boldsymbol{H})$
 is an admissible smooth Fr\'{e}chet representation of moderate growth of $G\simeq O(4,2)$. 
The subspaces $\mathcal{S}(\boldsymbol{H}_{\pm})_{\ell}$ 
are automatically smooth Fr\'{e}chet representations of moderate growth of $SL_2(\R) \times SO(3)$.  By  Proposition \ref{iso}  these representations are also admissible, and
\begin{eqnarray} \nonumber
  &&\mathcal{S}(\boldsymbol{H}_{+})_{\ell}\cong (D^{+}_{2\ell+2})^{\infty}\otimes_{\C} \mathcal{Y}^{\ell}, \quad \mathcal{S}(\boldsymbol{H}_{-})_{\ell}\cong (D^{-}_{2\ell+2})^{\infty}\otimes_{\C} \mathcal{Y}^{\ell}, 
 \end{eqnarray}
 where on the right-hand side, smoothness is with respect to the action of $SL_2(\R)$. In appendix \ref{Concrete} we explicitly describe  the isotypic components 
 $\mathcal{S}(\boldsymbol{H}_{\pm})_{\ell}$.
\subsubsection{The spectrum of  the Schr\"odinger family in $\mathcal{S}(\boldsymbol{H}_{\pm})$ }\label{623}
\begin{proposition*}
Let $(\mathcal{V},q,w)$ be a $4$-dimensional  PLQS, let $\kappa$ be a fixed positive number, and let $\boldsymbol{H}$ be the corresponding canonical Hilbert space. Fix $\ell \in \Z_{\geq 0}$. 
\begin{enumerate}
    \item 
    \[
\operatorname{Spec}_{\mathcal{S}(\boldsymbol{H}_{+})_{\ell}}\{S_{\kappa}\}=\left\{-\frac{\kappa^2}{(2n)^2}\mid \ell<n\in \N \right\}\cup [0,\infty).
\]
\[
\operatorname{Spec}_{\mathcal{S}(\boldsymbol{H}_{-})_{\ell}}\{S_{\kappa}\}=(0,\infty).
\]
  \item For $\lambda\in \C$, the kernel of the operator $S_{\kappa}(\lambda)$ in $(\mathcal{S}(\boldsymbol{H}_{+})_{\ell})^{\dagger}$ is nonzero if and only if one of the following hold:
\begin{enumerate}
    \item $0\leq \lambda$.
    \item $\lambda=-\frac{\kappa^2}{(2n)^2}$ for some $n\in \N$ such that $\ell<n$.
\end{enumerate}
Moreover, in these cases as a representation of $SO(3)$ the kernel is isomorphic to $\mathcal{Y}^{\ell}$. 
\item For $\lambda\in  \C$, the kernel of the operator $S_{\kappa}(\lambda)$ in $(\mathcal{S}(\boldsymbol{H}_{-})_{\ell})^{\dagger}$ is nonzero if and only if $\lambda>0$.
Moreover, in these cases as a representation of $SO(3)$ the kernel is isomorphic to $\mathcal{Y}^{\ell}$.
\end{enumerate}
\end{proposition*}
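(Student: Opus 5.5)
The plan is to reduce everything, via the Casselman--Wallach identification of Subsection \ref{Imp}, to a spectral problem for a single element of $\mathcal{U}(\mathfrak{sl}_2)$ acting on the smooth vectors of one discrete series representation. Since $S_\kappa(\lambda)$ lies in the image of $\mathcal{U}(\mathfrak{l}_w)$ and commutes with $SO(3)$, we have
\[
(\mathcal{S}(\boldsymbol{H}_\pm)_\ell)^\dagger\cong ((D^\pm_{2\ell+2})^\infty)^\dagger\otimes\overline{\mathcal{Y}^{\ell}},
\]
and the kernel of $S_\kappa(\lambda)^\dagger$ is $\operatorname{Ker}(S^{Ab}_\kappa(\lambda);((D^\pm_{2\ell+2})^{\infty})^\dagger)\otimes\overline{\mathcal{Y}^{\ell}}$. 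So parts (2) and (3) amount to showing that this $SL_2(\R)$-kernel is one-dimensional exactly for the stated $\lambda$ and zero otherwise; part (1) then follows. By Lemma \ref{l111} we may take $\kappa=1$ whenever convenient.

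For $\lambda$ real, I would use the adjoint action of $SL_2(\R)$ to put $\lambda E+F$ into normal form.

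\textbf{Case $\lambda<0$.} Write $\lambda=-\mu^2$ with $\mu>0$. Conjugating by $\operatorname{diag}(\mu^{-1/2},\mu^{1/2})$ sends $\lambda E+F$ to $-\mu(E-F)$, and $-i(E-F)=H_c$. Hence $S^{Ab}_\kappa(\lambda)$ is conjugate to $\kappa-\mu H_c$. A solution in the Hermitian dual then restricts nontrivially to some $SO(2)$-type, by the same density and Peter--Weyl argument as in Claim \ref{334} applied to $SO(2)$. It is therefore an $H_c$-eigenfunctional with eigenvalue $\kappa/\mu$. Such a functional exists, and the kernel is one-dimensional, precisely when $\kappa/\mu$ is an $SO(2)$-weight of $D^\pm_{2\ell+2}$.
\begin{itemize}
\item For $D^+$ the weights are $2n$ with $n\ge \ell+1$, which gives $\lambda=-\kappa^2/(2n)^2$ with $n>\ell$.
\item For $D^-$ all weights are negative while $\kappa/\mu>0$, so there is no solution.
\end{itemize}
This is exactly where the factor $2n$ comes from.

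\textbf{Case $\lambda>0$.} The element $\lambda E+F$ is hyperbolic and is conjugate to $\sqrt\lambda\,H'$ for a split Cartan generator $H'$. The equation becomes $\pi(H')^\dagger\eta=c\,\eta$ with $c=\mp i\kappa/\sqrt\lambda\in i\R$, which lies on the unitary (purely imaginary) spectrum of the skew-adjoint operator $\pi(H')$. In the $\pi_\ell$ realization of Proposition \ref{iso}, after conjugation, $H'$ becomes a first-order dilation-type operator. I would exhibit the generalized eigenvector explicitly as a tempered Mellin character in the appropriate variable, check that it pairs continuously with the explicit description of $\mathcal{S}(\boldsymbol{H}_\pm)_\ell$ from Appendix \ref{Concrete}, and show uniqueness up to scalar from the first-order equation. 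This works for both $D^+$ and $D^-$.

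\textbf{Case $\lambda=0$.} The operator $-iF$ is essentially self-adjoint with spectrum on a half-line whose sign depends on $\pm$. The equation $-iF^\dagger\eta=-\kappa\,\eta$ should then have a (one-dimensional) solution in $D^+$ and none in $D^-$. Concretely, in the $t$-model one solves the Coulomb zero-energy radial equation by Bessel functions of $\sqrt{\kappa t}$.

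\textbf{Case $\lambda\notin\R$.} I would argue directly with the ODE. By Proposition \ref{iso}, an element of the kernel is a distribution $u$ on $\pm t>0$ satisfying
\[
t u''+2u'-\tfrac{\ell(\ell+1)}{t}u+\kappa u+\lambda t u=0.
\]
This is the Coulomb/Whittaker equation, and its solutions are $t^{-1}M_{\cdot,\ell+1/2}(2\sqrt{-\lambda}\,t)$ and $t^{-1}W_{\cdot,\ell+1/2}(2\sqrt{-\lambda}\,t)$. For nonreal $\lambda$ one of these blows up exponentially at $\infty$ and the other is singular like $t^{-\ell-1}$ at $0$ (the Coulomb analogue of the regular/irregular dichotomy). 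Neither, nor any combination of them, defines a continuous functional. The same ODE computation should also serve as a cross-check of the real cases.

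The main obstacle, which I would address first, is the precise characterization of $\mathcal{S}(\boldsymbol{H}_\pm)_\ell$ near $t=0$ and $t=\infty$ (Appendix \ref{Concrete}). Every existence and non-existence claim reduces to deciding which solutions of this second-order ODE have growth and singularity compatible with continuity on that Fréchet space. The delicate points are the boundary behaviours at $\lambda=0$ and the exclusion of the irregular $t^{-\ell-1}$ solution at the origin. If the direct description proves awkward, I would instead work in the compact picture of $D^\pm_{2\ell+2}$, where smooth vectors are sequences of rapid decay in the $SO(2)$-weights. There $\lambda E+F$ becomes a tridiagonal (Jacobi) operator, and kernel elements of the dual are polynomially bounded solutions of a three-term recurrence.
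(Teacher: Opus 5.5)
\textbf{Verdict.} Your $\lambda<0$ case is the paper's argument: conjugate to $\kappa-\mu H_c$ and read off the $SO(2)$-weights $2n$, $n>\ell$, of $D^+_{2\ell+2}$, with none for $D^-$. For $\lambda>0$ you also follow the paper in conjugating to a dilation. There is, however, a genuine gap in how you decide which solutions of the radial ODE give kernel elements. Your $\lambda=0$ and $\lambda\notin\R$ cases, and the uniqueness claim at $\lambda>0$, all rest on that decision.

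\textbf{Continuity is the wrong criterion.} Your test is ``growth and singularity compatible with continuity.'' With the measure $\tfrac12|t|\,dt$ and elements $t^{\ell}a(t)Y(u)$, a solution $u\sim t^{-\ell-1}$ at $0$ pairs like $\int_0\overline{a(t)}\,dt$. So it does define a continuous conjugate-linear functional on $t^{\ell}\mathcal{S}([0,\infty))\otimes\mathcal{Y}^{\ell}$ whenever it is tame at infinity. This happens in three places:
\begin{itemize}
\item the decaying Whittaker solution for $\lambda\notin\R$;
\item $s^{-1/2}K_{2\ell+1}(2\sqrt{\kappa s})$ on $D^-$ at $\lambda=0$;
\item $t^{-1/2}Y_{2\ell+1}(2\sqrt{\kappa t})$ on $D^+$ at $\lambda=0$, which is $O(t^{-3/4})$ at $\infty$.
\end{itemize}
So your sentence ``neither, nor any combination, defines a continuous functional'' is false. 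Applied literally, your criterion gives a nonzero kernel for nonreal $\lambda$ and at $\lambda=0$ on $D^-$, and a kernel $\mathcal{Y}^{\ell}\oplus\mathcal{Y}^{\ell}$ at $\lambda=0$ on $D^+$.

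The sign heuristic for $-iF$ cannot repair this. The spectrum here is defined through the Hermitian dual, which is much larger than $\boldsymbol{H}$; compare $d/dx$ on $C_c^\infty(\R)$, where every $\lambda$ lies in the spectrum. Hilbert-space spectral theory therefore neither produces nor excludes dual solutions.

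\textbf{The missing ingredient is the boundary $t=0$ of the closed half-line.} You never address it, and it enters in three ways.
\begin{itemize}
\item The dual contains functionals $J_{j,\alpha}:t^{\ell}aY\mapsto\overline{a^{(j)}(0)}\,\alpha(Y)$ that are invisible on the open half-cone.
\item Integrating by parts against an irregular $u$ leaves a nonzero Wronskian-type term, $T^\dagger\eta_u=c\,J_{0,\alpha}$ with $c\neq0$. Excluding $u$ therefore means showing that no boundary-supported functional cancels this term. That is true here, because $T^\dagger J_j=\kappa J_j+\bar\lambda jJ_{j-1}+(j+2\ell+2)J_{j+1}$ raises the top index, but it has to be proved.
\item The same check, that $T^\dagger$ has no kernel on $\operatorname{span}\{J_{j,\alpha}\}$, is needed for uniqueness at $\lambda>0$.
\end{itemize}

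\textbf{How the paper avoids the second-order ODE.} For $\lambda=0$ it conjugates $1-iF$ by the Weyl element to $1+iE$, which is multiplication by $1-t$. There $T^\dagger J_j=J_j-jJ_{j-1}$ is unipotent triangular, and the open-cone kernel is $\delta(t-1)\otimes\alpha$ on $C_+$ and $0$ on $C_-$. For $\lambda\notin\R$ it passes to a principal series model on $C^\infty(S^1)$, where there is no boundary, and shows the monodromy satisfies $|M|\neq1$. Your fallback through the three-term recurrence on $K$-types would also avoid the boundary issue, since the dual there is just polynomially bounded sequences. As written, though, it is only a sketch, and the growth analysis of the recurrence is not done.
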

The above proposition and Claim \ref{334} prove Theorem \ref{t721}.

\begin{proof}
In Proposition \ref{realspec} in the appendix we show that in any admissible smooth Fr\'echet representation of $SL_2(\R)$ the spectrum of the Schr\"odinger family is always contained in $\R$. In particular this holds for $\mathcal{S}(\boldsymbol{H}_{\pm})_{\ell}$.  Taking this into account and using Lemma \ref{333}, it is enough to determine for which $\lambda\in  \R$ the kernel of the operator $S:=S_1(\lambda)$ in the space $(\mathcal{S}(\boldsymbol{H}_{\pm})_{\ell})^{\dagger}$ is nonzero, and in these cases to calculate its dimension. 

We fix $\lambda\in \R$. Our strategy for determining the kernel of $S^{\dagger}$ is as follows.
We find another, simpler operator $T$ that is conjugate to $S$ by the action of some element $A\in SL_2(\R)$.
Then $A$ defines an isomorphism between the kernels of $T^{\dagger}$ and $S^{\dagger}$ on the Hermitian duals, so it is enough to compute the dimension of the kernel of $T^{\dagger}$.

We treat the three cases of negative $\lambda$, $\lambda=0$, and positive $\lambda$, separately. 

\textbf{The case of $\lambda<0$.} We set $\nu:=\sqrt{|\lambda|}$ a positive square root. In this case there exists an element $A\in SL_2(\R)$ such that $\operatorname{Ad}_A(S)=1+i\nu E-i\nu F$. Set $T=1+i\nu E-i\nu F$. As a result the kernel of the operator $S^{\dagger}$ in $(\mathcal{S}(\boldsymbol{H}_{\pm})_{\ell})^{\dagger}$ is isomorphic to the kernel of the operator $T^{\dagger}$ in the same space.

This last kernel is equal to the eigenspace of the operator $(i(F-E))^{\dagger}$ in $(\mathcal{S}(\boldsymbol{H}_{\pm})_{\ell})^{\dagger}$ corresponding to the eigenvalue $\nu^{-1} $. The one-parameter subgroup generated by  $F-E$ inside $SL_2(\R)$ is $SO(2)$.  The  eigenvalues of $F-E$ in any  representation of $SL_2(\R)$ belong to $i\mathbb{Z}$. Moreover the eigenvectors are automatically smooth vectors; hence we can solve for the kernel in $\mathcal{S}(\boldsymbol{H}_{\pm})_{\ell}$ instead of in $(\mathcal{S}(\boldsymbol{H}_{\pm})_{\ell})^{\dagger}$. 

It follows from  Proposition   \ref{iso}, that the eigenvalues of $i(F-E)$ in $\mathcal{S}(\boldsymbol{H}_{\pm})_{\ell}$ are $\pm 2 (k+\ell+1)$ with $k\in \mathbb{N}_0$, and that each eigenspace is isomorphic as a representation of $SO(3)$ to the corresponding irreducible space of spherical harmonics.  

Hence for any $\lambda<0$,  
\begin{enumerate}
    \item The kernel of $S^{\dagger}$ in $(\mathcal{S}(\boldsymbol{H}_{-})_{\ell})^{\dagger}$ is zero for any $\ell\in \Z_{\geq0}$.
    \item The kernel of $S^{\dagger}$ in $(\mathcal{S}(\boldsymbol{H}_{+})_{\ell})^{\dagger}$ is nonzero if and only if $\lambda=-\frac{1}{(2n)^2}$ for some $n\in \N$ such that $\ell<n$.
 Moreover, for $\lambda=-\frac{1}{(2n)^2}$ with $\ell<n$, the kernel is isomorphic to $\mathcal{Y}^{\ell}$.
\end{enumerate}
Before turning to the cases $\lambda=0$ and $\lambda>0$, we record how the
restriction to the open half-cones will be used. Set
\[
\mathcal{D}^{\dagger}(C_{\pm}(\R))
:=
(C_c^{\infty}(C_{\pm}(\R)))^{\dagger}
=
\overline{\mathcal{D}'(C_{\pm}(\R))}.
\]
Restriction of test functions gives a natural map
\[
r_{\pm}:
(\mathcal{S}(\boldsymbol{H}_{\pm})_{\ell})^{\dagger}
\longrightarrow
\mathcal{D}^{\dagger}(C_{\pm}(\R))_{\ell}.
\]
This map is not injective in general. In the generalized spherical
coordinates of Subsection \ref{iso}, together with the concrete realization
of Proposition \ref{Concrete}, restricting to the open half-cone forgets
precisely the possible contribution supported on the boundary
\[
\{0\}\times S^2
\]
of the corresponding closed half-cone. We shall use the standard structure
theorem for distributions supported on a submanifold: in local coordinates
\((t,u)\), where the submanifold is given by \(t=0\), such a distribution is
a finite sum of derivatives of \(\delta_0\) in the transverse variable \(t\),
with distributional coefficients in the remaining variables \(u\); see, for
example, \cite[Section~2.3]{HormanderLPDE1} or
\cite[Chapter~3]{FriedlanderJoshi}. Passing to complex conjugates gives
the corresponding statement for the spaces \(\mathcal{D}^{\dagger}\).

In the present coordinates this means the following. Write an element of
\(\mathcal{S}(\boldsymbol{H}_{+})_{\ell}\) as
\[
t^{\ell}a(t)Y(u),
\qquad
a\in\mathcal{S}([0,\infty)),\quad Y\in\mathcal{Y}^{\ell},
\]
and similarly write an element of \(\mathcal{S}(\boldsymbol{H}_{-})_{\ell}\)
with \(a\in\mathcal{S}((-\infty,0])\). Then every element of
\(\operatorname{Ker}r_{\pm}\) is a finite sum of functionals of the form
\[
J^{\pm}_{j,\alpha}(t^{\ell}a(t)Y(u))
:=
\overline{a^{(j)}(0)}\,\alpha(Y),
\qquad j\geq 0.
\]
Here \(\alpha\in(\mathcal{Y}^{\ell})^{\dagger}\). Equivalently, \(\alpha\)
is the restriction to the \(\ell\)-th spherical harmonic component of a
conjugate-linear distribution on \(S^2\). For the negative half-line,
\(a^{(j)}(0)\) denotes the left derivative at \(0\).

Thus, when \(T\) is the conjugate of \(S\) used below, it is enough to check
that
\[
\operatorname{Ker}(T^{\dagger})\cap \operatorname{Ker}r_{\pm}=0.
\]
After this check, the restriction map is injective on
\[
\operatorname{Ker}\bigl(T^{\dagger};
(\mathcal{S}(\boldsymbol{H}_{\pm})_{\ell})^{\dagger}\bigr),
\]
and the kernel may be computed on the open half-cone.

\textbf{The case of $\lambda=0$.}
In this case there exists an element  $A\in SL_2(\R)$ such that $\operatorname{Ad}_A(S)=1+iE$. Set $T=1+iE$. As a result the kernel of the operator $S^{\dagger}$ in $(\mathcal{S}(\boldsymbol{H}_{\pm})_{\ell})^{\dagger}$ is isomorphic to the kernel of the operator $T^{\dagger}$ in the same space.

We first check injectivity of the restriction map on this kernel. In generalized spherical coordinates $T$ acts by multiplication by $1-t$. Hence
\[
T^{\dagger}J^{\pm}_{j,\alpha}=J^{\pm}_{j,\alpha}-jJ^{\pm}_{j-1,\alpha},\hspace{2mm}  \text{where} \hspace{2mm} J^{\pm}_{-1,\alpha}:=0.
\]
On the finite-dimensional space spanned by $J^{\pm}_{0,\alpha},\ldots,J^{\pm}_{N,\alpha}$ this operator is triangular with diagonal entries equal to $1$. Hence
\[
\operatorname{Ker}(T^{\dagger};\operatorname{Ker}r_{\pm})=0.
\]
Therefore $r_{\pm}$ is injective on $\operatorname{Ker}(T^{\dagger};(\mathcal{S}(\boldsymbol{H}_{\pm})_{\ell})^{\dagger})$.

On the open half-cone, $T^{\dagger}$ is given by multiplication by $1-t$:
\[
\left(\pi_{\ell,\pm}(T)\right)^{\dagger}(\varphi)=(1-t)\varphi, \quad \forall \varphi\in \mathcal{D}^{\dagger}(C_{\pm}(\R))_{\ell}.
\]
Hence its kernel on $\mathcal{D}^{\dagger}(C_{+}(\R))_{\ell}$ is spanned by the elements $\delta(t-1)\otimes \alpha$, where $\alpha\in(\mathcal{Y}^{\ell})^{\dagger}$. Using the $SO(3)$-invariant Hermitian inner product to identify $(\mathcal{Y}^{\ell})^{\dagger}$ with $\mathcal{Y}^{\ell}$, this kernel is isomorphic to $\mathcal{Y}^{\ell}$. These elements extend continuously to $(\mathcal{S}(\boldsymbol{H}_{+})_{\ell})^{\dagger}$, and the injectivity just proved shows that they exhaust the kernel of $T^{\dagger}$ there. Thus the corresponding kernel of $S^{\dagger}$ is isomorphic to $\mathcal{Y}^{\ell}$.

On the other hand the kernel of $T^{\dagger}$ in $\mathcal{D}^{\dagger}(C_{-}(\R))_{\ell}$ is trivial and, by the same injectivity on the kernel, so is the kernel of $S^{\dagger}$ in $(\mathcal{S}(\boldsymbol{H}_{-})_{\ell})^{\dagger}$. \\

\textbf{The case of $\lambda>0$.}  
We set  $\nu:=\sqrt{\lambda}$ a positive square root. 
In this case there exists an element  $A\in SL_2(\R)$ such that $\operatorname{Ad}_A(S)=1+i\nu  H$. Set $T=1+i\nu  H$.  As a result the kernel of $S^{\dagger}$ in $(\mathcal{S}(\boldsymbol{H}_{\pm})_{\ell})^{\dagger}$ is isomorphic to the kernel of $T^{\dagger}$ in the same space.

We again first check injectivity of the restriction map on the relevant kernel. Under the identification of Proposition \ref{Concrete}, write an element of $\mathcal{S}(\boldsymbol{H}_{\pm})_{\ell}$ as $t^{\ell}a(t)Y(u)$, with $a$ in the corresponding Schwartz space on the closed half-line.  On this space,
\[
T(t^{\ell}a(t)Y(u))
=t^{\ell}\left( (1+2i\nu(\ell+1))a(t)+2i\nu t a'(t)\right)Y(u).
\]
Therefore
\[
T^{\dagger}J^{\pm}_{j,\alpha}=\left(1-2i\nu(\ell+1+j)\right)J^{\pm}_{j,\alpha}.
\]
The scalar $1-2i\nu(\ell+1+j)$ is never zero. Consequently
\[
\operatorname{Ker}(T^{\dagger};\operatorname{Ker}r_{\pm})=0,
\]
and the restriction map $r_{\pm}$ is injective on $\operatorname{Ker}(T^{\dagger};(\mathcal{S}(\boldsymbol{H}_{\pm})_{\ell})^{\dagger})$.

In terms of the above-mentioned  generalized spherical coordinates, the equation on the open half-cone is
\[
\left(\pi_{\ell,\pm}(T)\right)^{\dagger}(\varphi)=\left(1+2i\nu +  2i\nu t\partial_t \right)\varphi=0,
\qquad \varphi\in \mathcal{D}^{\dagger}(C_{\pm}(\R))_{\ell}.
\]
Hence the  solution space in $\mathcal{D}^{\dagger}(C_{+}(\R))_{\ell}$ consists of elements represented by the locally integrable functions $t^{-1+\frac{i}{2\nu }}Y^{\ell}(u)$, for $Y^{\ell}\in \mathcal{Y}^{\ell}$, and the solution space in $\mathcal{D}^{\dagger}(C_{-}(\R))_{\ell}$ consists of the corresponding elements represented by $|t|^{-1+\frac{i}{2\nu }}Y^{\ell}(u)$.  Explicitly, on $C_{+}(\R)$, using generalized spherical coordinates and the  measure on the cone, such an element is given by 
\[
f\in C^{\infty}_c((0,\infty)\times S^2) \longmapsto \langle \varphi,f \rangle :=\frac{1}{2}\int_{0}^{\infty}\int_{S^2}\overline{f(t,u)}t^{-1+\frac{i}{2\nu }}Y^{\ell}(u)t\,dt\,du,
\]
and on $C_{-}(\R)$ by 
\[
f\in C^{\infty}_c((-\infty,0)\times S^2) \longmapsto \langle \varphi,f \rangle :=\frac{1}{2}\int^{0}_{-\infty}\int_{S^2}\overline{f(t,u)}|t|^{-1+\frac{i}{2\nu }}Y^{\ell}(u)|t|\,dt\,du.
\]
The explicit  description of the $SO(3)$-isotypic pieces of $\mathcal{S}(\boldsymbol{H}_{\pm})$   given in Section  \ref{Concrete} implies that these elements extend to continuous conjugate-linear functionals on $\mathcal{S}(\boldsymbol{H}_{\pm})_{\ell}$. Since the restriction map is injective on the kernel, these extensions exhaust $\operatorname{Ker}(T^{\dagger};(\mathcal{S}(\boldsymbol{H}_{\pm})_{\ell})^{\dagger})$. Therefore for $\lambda>0$ the kernel is isomorphic to $\mathcal{Y}^{\ell}$ for both signs.
\end{proof}

\begin{remark*}
For $\lambda<0$, the preceding proof gives a conceptual explanation of
the discrete energy formula
\[
E_n=-\frac{\kappa^2}{(2n)^2}.
\]
Indeed, the factor $2n$ is an $SO(2)$-weight in a  representation of $SL_2(\R)$.
\end{remark*}

\section{Appendix }\label{Ap3}
\subsection{The commutation relations in $\fs$}\label{crs}
Recall that the Lie algebra $\tilde{\fs}$ that was introduced in Section \ref{222} is the direct sum of 
\begin{eqnarray}\nonumber
&&\mathcal{D}^{0,1}(C)\cong \mathcal{D}^{0,1}(V)=\{\tau(v)|v\in V \}, \\ \nonumber
&&\mathcal{D}^{1,0}(C)\cong \mathcal{D}^{1,0}(V)_I=\operatorname{span}_{\C}\{L_{u,v}|u,v\in V \}\oplus \C h \oplus \C, \\ \nonumber
&&\mathcal{D}^{2,-1}(C)\cong  \mathcal{D}^{2,-1}(V)_I=\{\Psi^Q(\tau(v))|v\in V\}.
\end{eqnarray}

 The subalgebras $\mathcal{D}^{0,1}(V)$ and $\mathcal{D}^{2,-1}(V)_I$ are commutative and the element $1$ is central.   By direct calculations, for any $u,v,w,z\in V$,
\begin{eqnarray}\nonumber
&&[L_{u,v},L_{w,z}]=B(v,w)L_{u,z}-B(u,w)L_{v,z}+B(u,z)L_{v,w}-B(v,z)L_{u,w}\\ \nonumber
&&[L_{u,v},\tau(w)]=B(v,w)\tau(u)-B(u,w)\tau(v)\\ \nonumber
&&[L_{u,v},\Psi^Q(\tau(w))]=B(v,w)\Psi^Q(\tau(u))-B(u,w)\Psi^Q(\tau(v)) \\ \nonumber
&& [\Psi^Q(\tau(v)),\tau(u)]=2L_{v,u}-B(v,u)h\\ \nonumber
&& [h,\Psi^Q(\tau(v))]=-2\Psi^Q(\tau(v))\\ \nonumber
&& [h,\tau(u)]=2\tau(u)\\ \nonumber
&& [h,L_{u,v} ]=0.
\end{eqnarray}

\subsection{The Lie algebra $\fs$ as an  orthogonal Lie algebra}
 \subsubsection{The Lie algebra $\fs$ as an orthogonal  Lie algebra}\label{sss232}
 We let  $(V_2,Q_2)$ be the complexification of the  standard hyperbolic plane $(\R^{1,1}=\R^2,q_{1,1})$ from Subsection \ref{dis} with its basis $\{v_1,v_2\}$  of isotropic vectors satisfying $B_{1,1}(v_1,v_2)=1$.

Let $(V,Q)$ be a nondegenerate $n$-dimensional complex quadratic space. Similar to the construction given in Subsection \ref{dis},  here, in the complex algebraic setting,  we  set $\widetilde{V}=V\oplus V_2$, $\widetilde{Q}=Q \oplus Q_2$ and let $\widetilde{B}$  be the corresponding  symmetric bilinear form.
We let $G=G(V)=O(\widetilde{V},\widetilde{Q})$ be the orthogonal group    of $(\widetilde{V},\widetilde{Q})$,  $P$ be   
the stabilizer of $v_1$ in $G$ and  $N$ the unipotent radical of $P$.  We denote  the canonical copy of $O(V,Q)$ in $P$ by  $L$. 
We  let $A$ be the subgroup of $G$ consisting of all  the elements   that act on $V$ as the identity operator and preserves the line $\C v_1\subset V_2\subset \widetilde{V}$. The group $A$ is canonically isomorphic to $O_0(V_2,Q_2)$ (the connected component of the identity in $O(V_2,Q_2)$).

We denote the   Lie algebras of $G$, $P$, $N$, $L$ and $A$ by $\mathfrak{g}$, $\mathfrak{p}$, $\mathfrak{n}$, $\mathfrak{l}$ and $\mathfrak{a}$ respectively.

In this section we construct a canonical extension of the natural morphism of Lie algebras $\mathfrak{l} =\mathfrak{o}(Q) \longrightarrow \mathcal{D}(C)$ to a morphism of  Lie algebras  
$\psi: \mathfrak{g}\longrightarrow \mathcal{D}(C)$.
This will give 
us an epimorphism of associative algebras  
$\psi: \mathcal{U}(\mathfrak{g})\longrightarrow \mathcal{D}(C)$.

 We construct the  embedding in two steps. First we construct a morphism  $\mathfrak{p}\longrightarrow \mathcal{D}(C)$ and then we extend it to $\fg\longrightarrow \mathcal{D}(C)$.

The action of $A$ on  $v_1\in V_2$
defines an isomorphism $\rho:A\longrightarrow\C^{\times}$  given by
\[av_1= \rho(a)v_1, \quad \forall a\in A.\]
The  group $G$ acts by conjugation on its   Lie algebra $\fg$,  and using the homomorphism 
\[ \C^{\times}\xrightarrow{ \rho^{-1}}
A    \longhookrightarrow  G,\]
 $\C^{\times}$ also acts on $\fg$.

There is an obvious 
inclusion of Lie algebras $\iota_{\mathfrak{o}}:\mathfrak{l} \longrightarrow \fg$. There is also  a natural embedding of Lie algebras 
$\iota_{\mathcal{D}}:\mathfrak{l}  \longrightarrow \mathcal{D}$, obtained by differentiating the action of $L$ on regular functions on $V$. 

We have a natural isomorphism of abelian Lie algebras 
\begin{eqnarray}\nonumber
    &&f_{v_2}:\mathfrak{n}\longrightarrow V^*\subset \mathcal{D}(C) 
\end{eqnarray} 
given by 
\[\left(f_{v_2}(n)\right)(v):=B(nv_2,v) ,\]
for any $n\in \mathfrak{n}$ and $v\in {V}$, and 
where $nv_2$ stands for the application of the linear transformation $n$ to $v_2$.

Hence we obtain a unique  linear map $\widetilde{\psi}:\mathfrak{p} \longrightarrow \mathcal{D}(C)$ satisfying 
$\widetilde{\psi}\circ \iota_{\mathfrak{o}}=\iota_{\mathcal{D}}$ and $\widetilde{\psi}|_{\mathfrak{n}}=f_{v_2}$. It is easily verified that $\widetilde{\psi}$ is an $(L\times \C^{\times})$-equivariant embedding of complex Lie algebras. 

\begin{proposition*}
There exists a unique $ \C^{\times}$-equivariant morphism of complex Lie algebras $\psi:\fg\longrightarrow  \mathcal{D}(C) $ extending $\widetilde{\psi}$.
In addition $\psi$ is an embedding, its image is equal to $\fs$, and the induced map $\mathcal{U}(\fg)\longrightarrow  \mathcal{D}(C)$ is an epimorphism. 
\end{proposition*}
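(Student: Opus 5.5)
The plan is to use the grading of $\fg$ by the $\C^{\times}$-action through $A$. Under this action, $\fg=\mathfrak{n}^-\oplus(\mathfrak{l}\oplus\mathfrak{a})\oplus\mathfrak{n}$ with degrees $-1,0,1$. Here $\mathfrak{n}^-$ is the unipotent radical of the stabilizer of $v_2$, and $\mathfrak{n}^-\cong V$ as an $L$-module. Any $\C^{\times}$-equivariant map $\psi$ must send degree $k$ to the corresponding homogeneity piece of $\mathcal{D}(C)$. Thus $\mathfrak{n}^-$ must go to elements of homogeneity $-1$ that are $L$-equivariant images of $V$, and $\mathfrak{a}$ must go to $L$-invariants of homogeneity $0$.

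\textbf{Step 1: uniqueness.} Since $\psi$ is a Lie morphism and $\mathfrak{n}^-$ is generated as a bracket by $[\mathfrak{a},\mathfrak{n}^-]=\mathfrak{n}^-$, I would argue as follows. The degree-$0$ part $\mathfrak{l}\oplus\mathfrak{a}=\mathfrak{p}\cap\text{degree }0$ is already fixed by $\widetilde{\psi}$, because $\mathfrak{a}\subset\mathfrak{p}$. Consequently the value of $\psi$ on $\mathfrak{n}^-$ is constrained by the relation $[\mathfrak{n}^-,\mathfrak{n}]\subset\mathfrak{l}\oplus\mathfrak{a}$. Fix $y\in\mathfrak{n}^-$ and suppose $\psi,\psi'$ are two extensions. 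Then $\delta=\psi(y)-\psi'(y)$ commutes with $\psi(\mathfrak{n})=V^*$, so $\delta$ commutes with all multiplication operators by linear functions. Hence $\delta$ is a multiplication operator, i.e.\ it lies in $\mathcal{O}_C$. But $\delta$ has homogeneity $-1$, and the only element of $\mathcal{O}_C$ of negative degree is $0$. So $\delta=0$.

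\textbf{Step 2: existence.} I would define $\psi$ on $\mathfrak{n}^-$ by $\psi(y)=c\,\Psi^Q(\tau(v_y))$, where $v_y:=yv_1\in V$ is the $L$-equivariant identification $\mathfrak{n}^-\cong V$ and $c$ is a constant. The constant $c$ is determined by matching $[\mathfrak{n}^-,\mathfrak{n}]$ with the relation $[\Psi^Q(\tau(v)),\tau(u)]=2L_{v,u}-B(v,u)h$ from Appendix~\ref{crs}. One also checks that $\widetilde{\psi}$ maps the generator of $\mathfrak{a}$ to a multiple of $h$ (plus possibly a constant), consistent with $\C^{\times}$-weights. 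The brackets in $\fg$ can be computed from $\widetilde{B}$ in the standard way: $\mathfrak{o}(\widetilde V)\cong\wedge^2\widetilde V$ with elements $X_{a,b}=\widetilde B(a,\cdot)b-\widetilde B(b,\cdot)a$. Then $\mathfrak{n}=\{X_{v_1,v}\}$, $\mathfrak{n}^-=\{X_{v_2,v}\}$, $\mathfrak{a}=\C X_{v_1,v_2}$, and $\mathfrak{l}=\{X_{u,v}\}$. With these, verifying the Lie morphism property reduces to comparing each bracket type against the commutator table in Appendix~\ref{crs}: $\mathfrak{l}$--$\mathfrak{l}$, $\mathfrak{l}$--$\mathfrak{n}^{\pm}$, $\mathfrak{a}$--$\mathfrak{n}^{\pm}$, $\mathfrak{n}^{\pm}$ abelian, and $\mathfrak{n}^-$--$\mathfrak{n}$. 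The one nontrivial point is the constant shift in $h=2\chi+(n-2)$. One must make sure the $\mathfrak{a}$-generator maps to $h$ itself, not to the Euler operator, so that $[\mathfrak{n}^-,\mathfrak{n}]$ lands correctly. This is where the shift $(n-2)$ built into $\Psi^Q(w)=w\Box_Q-h\partial^Q_w$ and the identity $\Phi^Q\circ\Psi^Q=(2-n)$ are used. The commutativity of $\mathcal{D}^{2,-1}(C)$ handles $[\mathfrak{n}^-,\mathfrak{n}^-]=0$.

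\textbf{Step 3: image, injectivity, surjectivity.} The image is $\psi(\fg)=V^*\oplus(\mathfrak{o}(Q)\oplus\C h)\oplus\mathcal{D}^{2,-1}(C)$, which equals $\fs$ by the Proposition in Section~\ref{223}. Injectivity then follows either from $\fg$ being simple ($n+2\ge5$) with $\psi\neq0$, or by a dimension count. Finally, the epimorphism $\mathcal{U}(\fg)\to\mathcal{D}(C)$ follows from the Theorem of Section~\ref{222} that $\fs$ generates $\mathcal{D}(C)$.

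\textbf{Main obstacle.} I expect the main obstacle to be the $\mathfrak{n}^-$--$\mathfrak{n}$ bracket verification with correct normalizations and the $h$-shift, since everything else is forced by equivariance.
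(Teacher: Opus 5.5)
The overall plan is sound, but Step~1 (uniqueness) rests on a false inclusion. You write $\mathfrak a\subset\mathfrak p$. However, $P$ is the stabilizer of the \emph{vector} $v_1$, while $A$ acts on $v_1$ through the isomorphism $\rho:A\to\C^\times$. Hence $\mathfrak a\cap\mathfrak p=0$ and $\mathfrak p=\mathfrak l\oplus\mathfrak n$; this is why $\widetilde\psi$ is pinned down by its values on $\mathfrak l$ and $\mathfrak n$ alone. So $\widetilde\psi$ says nothing about $\mathfrak a$, and this has two consequences. In Step~2 you must define $\psi|_{\mathfrak a}$ yourself. In Step~1, two extensions $\psi,\psi'$ need not agree on $[\mathfrak n^-,\mathfrak n]$, which has a nonzero $\mathfrak a$-component: with $X_{a,b}=\widetilde B(a,\cdot)b-\widetilde B(b,\cdot)a$ one has $[X_{v_2,v},X_{v_1,w}]=X_{v,w}-B(v,w)X_{v_1,v_2}$. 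Your conclusion that $\delta$ commutes with $V^*$ is therefore unjustified. This is exactly the ambiguity you flag in Step~2: the brackets with $\mathfrak l\oplus\mathfrak n$ determine the image of the $\mathfrak a$-generator only up to a central constant. Uniqueness has to rule that constant out, not assume it away.

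The repair is short. Put $H=-X_{v_1,v_2}$, so $[H,X]=X$ for $X\in\mathfrak n$. Then $\psi(H)-\chi$ commutes with every $\tau(w)$, hence with $\mathcal O_C$, hence lies in $\mathcal O_C$; it has homogeneity $0$. So $\psi(H)=\chi+c$ and $\psi'(H)=\chi+c'$ with $c,c'\in\C$. For $y=X_{v_2,v}$ the bracket above gives $[\delta,\tau(w)]=B(v,w)(c-c')$, so $[\delta,\varphi]=s(\varphi)\in\C$ for every $\varphi\in V^*$. Apply the derivation $\operatorname{ad}_\delta$ to the relation $\sum_{i,j}B_{ij}z_iz_j=0$ in $\mathcal O_C$: the linear form $\sum_{i,j}B_{ij}s(z_i)z_j$ vanishes on $C$. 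Since $C$ spans $V$ and $(B_{ij})$ is invertible, $s=0$. Now your original argument applies: $\delta\in\mathcal O_C$ has homogeneity $-1$, so $\delta=0$, and choosing $B(v,w)\neq 0$ gives $c=c'$.

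With uniqueness repaired, the rest of your plan works. Take $\iota_{\mathcal D}(X_{u,v})=-L_{u,v}$ (the derivative of $f\mapsto f\circ g^{-1}$). Then the table in Appendix~\ref{crs} forces $\psi(X_{v_2,v})=-\tfrac12\Psi^Q(\tau(v))$ and $\psi(X_{v_1,v_2})=-\tfrac12 h$, and all brackets check. The identification of the image with $\fs$, injectivity (by simplicity of $\fg$ for $n>2$) and surjectivity of $\mathcal U(\fg)\to\mathcal D(C)$ then follow as you say. The paper leaves this proof to the reader, so there is no written argument to compare against; a direct verification of this kind is presumably what is intended.
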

The proof is left to the reader.

\subsubsection{The induced real  structure on $\fs$ }\label{s2.5}

The construction of $\tilde{\fs}$ 
 out of a  complex quadratic space $(V,Q)$  can be repeated  over the field of real numbers. Hence, given a real form $(\mathcal{V},q)$ of $(V,Q)$, not necessarily Lorentzian,  there is a canonical real form $\fs(\R)$ of  $\fs$ that is compatible with $(\mathcal{V},q)$ and respects the decomposition $\tilde{\fs}=\mathcal{D}^{0,1}(C)\oplus \mathcal{D}^{1,0}(C)  \oplus   \mathcal{D}^{2,-1}(C)$.
Explicitly, it is given by the direct sum of 
\begin{eqnarray}\nonumber
&&  \mathcal{D}^{0,1}(V)(\R)=\{ \tau^Q(v)|v\in \mathcal{V} \}, \\ \nonumber
&&  \mathcal{D}^{1,0}(V)_I(\R)=\operatorname{span}_{\R}\{L_{u,v}|u,v\in \mathcal{V} \}\oplus \R h \oplus \R \\ \nonumber
&&   \mathcal{D}^{2,-1}(V)_I(\R)=\{\Psi^Q(\tau^Q(v))|v\in \mathcal{V}\}. 
\end{eqnarray}
Similar to proposition \ref{sss232},  we can show that $\fs(\R)\simeq \mathfrak{o}(\mathcal{V}\oplus \mathcal{V}_{1,1})$, where $\mathcal{V}_{1,1}$ is a two-dimensional Lorentzian quadratic space.

\subsubsection{The isomorphism $\fs(\R)\simeq \fs^*(\R)$ }\label{a823}
  In Section \ref{CmpRF} we introduced the real form $\fs(\R)$ (that was  further discussed in   in Section \ref{s2.5}).
The main real form of $\fs$ that is used in this paper is $\fs^*(\R)$ that was defined in Section \ref{canreal}.
 
In this section we build an explicit isomorphism between 
$\fs(\R)$ and $ \fs^*(\R)$.  

We define a linear invertible transformation   $\mathcal{F}$ on the algebra     $\mathcal{D}(C)$  by $\mathcal{F}(d) =\exp(\operatorname{ad}_{i\frac{\pi}{4}h})(d)$.
 On $\mathcal{D}^{k,\ell}(C)$,  $\mathcal{F}$ is given by multiplication by $e^{i\frac{\ell\pi}{2}}$. 
 
\begin{proposition*}\label{p432}
The restriction  $\mathcal{F}|_{\fs^{*}(\R)}$ is an isomorphism of real Lie algebras from 
$\fs^{*}(\R)$ onto $\fs(\R)$. Moreover, it is the unique $O(q)\times \R^{\times}$-equivariant  isomorphism  of real Lie algebras from 
$\fs^{*}(\R)$ onto $\fs(\R)$ that on $\mathcal{D}^{0,1}(C)\cap \fs^{*}(\R)$ is given by multiplication by $i$.
\end{proposition*}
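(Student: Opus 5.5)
The plan is to reduce everything to the grading. Since $[h,d]=2\ell d$ for $d\in\mathcal{D}^{k,\ell}(C)$ (because $h=2\chi+(n-2)$ with $\chi$ the Euler operator), we have $\exp(\operatorname{ad}_{i\frac{\pi}{4}h})d=e^{i\frac{\pi}{2}\ell}d$. Hence $\mathcal{F}$ is an automorphism of the associative algebra $\mathcal{D}(C)$, and in particular a Lie algebra automorphism of $\fs$. On the three graded pieces of $\fs$ it acts by $i$ on $\mathcal{D}^{0,1}(C)$, by $1$ on $\mathcal{D}^{1,0}(C)\cap\fs$, and by $-i$ on $\mathcal{D}^{2,-1}(C)$. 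It also commutes with $O(Q)\times\C^\times$, which preserves the grading. What remains is to show that $\mathcal{F}$ carries the fixed points of the antilinear involution $d\mapsto -d^*$ onto the fixed points of the complex conjugation $\sigma$ defining $\fs(\R)$.

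For this I would compute $*$ on the spanning set of Appendix~\ref{crs}, using the claim in Subsection~\ref{sub4.3} and integration by parts against $|\omega|$.
\begin{enumerate}
\item For real $v\in\mathcal{V}$, $\tau(v)$ is a real multiplication operator, so $\tau(v)^*=\tau(v)$. Thus $\fs^*(\R)\cap\mathcal{D}^{0,1}(C)=i\,\tau(\mathcal{V})$.
\item $L_{u,v}$ for real $u,v$ generates the measure-preserving $O(q)$-action, so it is skew: $L_{u,v}^*=-L_{u,v}$.
\item The generator $h$ is the shifted Euler operator. The shift $n-2$ is exactly what makes it skew for the homogeneous density $|\omega|$, which has degree $n-2$. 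One checks $\chi^*=-\chi-(n-2)$, so $h^*=-h$.
\item Using the commutator $[\Psi^Q(\tau(v)),\tau(u)]=2L_{v,u}-B(v,u)h$, one gets $\Psi^Q(\tau(v))^*=\Psi^Q(\tau(v))$. Alternatively, $\Psi^Q(\tau(v))$ is a real second-order operator whose adjoint has the same symbol and lies in $\mathcal{D}^{2,-1}(C)$. Hence it equals $\Psi^Q(\tau(v))+c$ with $c\in\mathcal{D}^{0,-1}=0$.
\end{enumerate}

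Combining these, $\fs^*(\R)=i\tau(\mathcal{V})\oplus\operatorname{span}_\R\{L_{u,v}\}\oplus\R ih\oplus i\Psi^Q(\tau(\mathcal{V}))$. Wait — $h^*=-h$ puts $h$ itself (not $ih$) in $\fs^*(\R)$, so the middle piece is $\operatorname{span}_\R\{L_{u,v}\}\oplus\R h$. Applying $\mathcal{F}$ multiplies the first piece by $i$, giving $-\tau(\mathcal{V})=\tau(\mathcal{V})$. It fixes the middle piece. It multiplies the last piece by $-i$, giving $\Psi^Q(\tau(\mathcal{V}))$. This is exactly the description of $\fs(\R)$ in Subsection~\ref{s2.5}, intersected with $\fs$, so $\mathcal{F}$ maps $\fs^*(\R)$ onto $\fs(\R)$.

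For uniqueness, let $\phi$ be another such isomorphism and set $\theta=\mathcal{F}^{-1}\circ\phi$. Then $\theta$ is an $O(q)\times\R^\times$-equivariant automorphism of $\fs^*(\R)$ that is the identity on $\mathcal{D}^{0,1}\cap\fs^*(\R)$. Equivariance under $\R^\times$ forces $\theta$ to preserve the grading. The graded pieces are the modules $\mathcal{V}^*$, $\mathfrak{o}(q)$, $\R$ and $\mathcal{V}$, which are pairwise non-isomorphic absolutely irreducible $O(q)$-modules (for $n\ge3$). So by Schur, $\theta$ acts by scalars $a$ on $\mathfrak{o}(q)$, $b$ on $\R h$, and $c$ on $\mathcal{D}^{2,-1}$. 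Applying $\theta$ to the bracket relation $[\Psi^Q(\tau(v)),\tau(u)]=2L_{v,u}-B(v,u)h$ gives $c=a=b$. Applying it to $[h,\tau(u)]=2\tau(u)$ gives $b=1$. Hence $\theta=\mathrm{id}$.

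The main obstacle is the adjoint computation, especially checking that $\Psi^Q(\tau(v))$ is symmetric with no lower-order correction. I would argue by homogeneity, as in step 4 above, rather than by explicit integration by parts on the cone. The factor $i$ that appears in each identification must be tracked carefully.
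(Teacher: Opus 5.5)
Your proposal is correct and is essentially the paper's argument, which only asserts that the result follows from the commutation relations of Appendix~\ref{crs}. You fill in the omitted steps: the action of $\mathcal{F}$ on the grading, the computation of $*$ on the spanning set, and the uniqueness via Schur's lemma together with $[\Psi^Q(\tau(v)),\tau(u)]=2L_{v,u}-B(v,u)h$ and $[h,\tau(u)]=2\tau(u)$.

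Two small corrections. In step 4(b), equal principal symbols only place $\Psi^Q(\tau(v))^*-\Psi^Q(\tau(v))$ in $\mathcal{D}^{1,-1}(C)$, not $\mathcal{D}^{0,-1}(C)$; this space is nonetheless zero, since a degree $-1$ element of $\mathcal{D}^1(C)$ is a derivation sending each $z_i$ to a constant $c_i$, and $\sum_i c_i\partial_{z_i}Q$ is then a linear form lying in $(Q)$, forcing $c=0$ (alternatively, just use your commutator argument 4(a)). Also, $\mathcal{V}^*\cong\mathcal{V}$ as $O(q)$-modules, so it is the $\R^\times$-grading, not non-isomorphism, that separates those two pieces.
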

This immediately follows from the commutation relations in Appendix \ref{crs}.

\subsection{A Lie-theoretic description of $\Psi^Q$  }\label{JM}
In Section  \ref{s224} we introduced the isomorphism $\Psi^Q: \mathcal{D}^{0,1}(C)\longrightarrow \mathcal{D}^{2,-1}(C)$. In this section we  give a Lie-theoretic characterization of it using the Jacobson-Morozov theorem.  

Recall that $h=2\chi+n-2\in \fs$. Starting with a $w\in V^*$, such that $Q^*(w)\neq 0$, 
the triple $\{e_w:=w,h,f_w:=Q^*(w)^{-1}\Psi^Q(w)\}$
is an $\mathfrak{sl}_2$-triple in $\fs$ satisfying the standard commutation relations 
\[[h,e_w]=2e_w,\quad [h,f_w]=-2f_w,\quad [e_w,f_w]=h. \]

The Jacobson-Morozov theorem (see e.g., \cite[Ch. X]{MR1920389})  for the finite-dimensional simple Lie algebra $\fs$ implies that for $w$ as above there is a unique $f_w\in \fs$ such that 
$\{e_w,h,f_w\}$
is an $\mathfrak{sl}_2$-triple in $\fs$ satisfying the standard commutation relations. This uniquely determines the morphism $\Psi^Q$.

\begin{remark*}
   The Lie algebra generated by the above triple $\{e_w,h,f_w\}$ is $\mathfrak{l}_{w}\simeq \mathfrak{sl}_2(\C)$ that was introduced in Section \ref{PLQS} as   part of a dual pair. 
   
  We remark  that the   one-dimensional subspace spanned by $h$ is given by the intersection of all such $\mathfrak{l}_{w}$.    Also note that  $h$  is the unique element in $\C h\subset \mathfrak{l}_{w}$ satisfying $[h,e_w]=2e_w$.
\end{remark*}

\subsection{On the reductive dual pair symmetry of  $\fs$ }\label{l}
Here we give a proof of Proposition \ref{PLQS}.

\begin{proof}
Using the  explicit commutation relations in ${\fs}$, which were given in Appendix  \ref{crs}, it can be shown that
$\mathfrak{l}_{w}=\operatorname{span}_{\C}(\Psi^Q\left({w}\right), {w},h)\cong \mathfrak{sl}_2(\C)$ and $\mathfrak{k}_{w}=\operatorname{span}_{\C}\{ L_{u,v}|u,v\in V_w^{\perp}\}\simeq \mathfrak{o}(n-1,\C)$, where 
$V_w^\perp:=\ker w
 =\{v\in V\mid w(v)=0\}
 =\{v\in V\mid B(v_w,v)=0\}$. These facts together with the explicit commutation relations finish the proof of 1. The proof of 2. is similar. 
\end{proof}

\subsection{Concrete realization of $\mathcal{S}(\boldsymbol{H}_{\pm})_{\ell}$
  }\label{Concrete}
As explained in section  \ref{iso}, the unitary representation  $\pi_{\ell}$ of $SL_2(\R)\times SO(3)$ on $\boldsymbol{H}_{\ell}$ can be realized on $L^2\left(\R\setminus\{0\},\frac{1}{2}|t|dt \right)\otimes_{\C} \mathcal{Y}^{\ell}$. It follows from Sections \ref{iso}, and \ref{Imp} that in this realization

$$(\boldsymbol{H}_+)_{\ell}=L^2\left((0,\infty),\frac{1}{2}|t|dt \right)\otimes_{\C} \mathcal{Y}^{\ell},$$ and $$(\boldsymbol{H}_-)_{\ell}=L^2\left((-\infty,0),\frac{1}{2}|t|dt \right)\otimes_{\C} \mathcal{Y}^{\ell}.$$ 
We  denote the corresponding irreducible subrepresentations of $SL_2(\R)\times SO(3)$ by 
$(\pi_{\ell,+},(\boldsymbol{H}_+)_{\ell})$ and $(\pi_{\ell,-},(\boldsymbol{H}_-)_{\ell})$.

We recall that the Schwartz space $\mathcal{S}([0,\infty))$ of rapidly decreasing functions on $[0,\infty)$ is given by the space of all complex-valued functions on $[0,\infty)$ that are smooth on $(0,\infty)$, smooth from the right at $0$, and rapidly decaying at infinity; explicitly, for any $m,n\in \mathbb{N}_0$, $\sup\{|t^mf^{(n)}(t)|:t\in [0,\infty)\}< \infty$. 

Similarly, the  Schwartz space  $\mathcal{S}((-\infty,0])$ of rapidly decreasing functions on $(-\infty,0]$ is defined. 

\begin{proposition*}
For every $\ell\in \mathbb{N}_0$, 
\begin{enumerate}
    \item The space $\mathcal{S}(\boldsymbol{H}_{+})_{\ell}$ is isomorphic to the smooth subspace   of $(\boldsymbol{H}_+)_{\ell}$, where the latter space   is  given by 
    $t^{\ell}\mathcal{S}([0,\infty))\otimes_{\C} \mathcal{Y}^{\ell}$.
      \item The space $\mathcal{S}(\boldsymbol{H}_{-})_{\ell}$ is isomorphic to the smooth subspace    of $(\boldsymbol{H}_-)_{\ell}$, where the latter space    is $t^{\ell}\mathcal{S}((-\infty,0])\otimes_{\C} \mathcal{Y}^{\ell}$.
\end{enumerate}
\end{proposition*}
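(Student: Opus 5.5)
The plan is to identify the $SL_2(\R)$-smooth vectors in the concrete model $L^2\left((0,\infty),\tfrac12 t\,dt\right)\otimes\mathcal{Y}^{\ell}$ directly. The action there is given by the formulas of Proposition \ref{iso}: $E=it$, $H=2t\partial_t+2$, $F=i(t\partial_t^2+2\partial_t-\ell(\ell+1)/t)$. Since $SO(3)$ acts on the finite-dimensional factor $\mathcal{Y}^\ell$, smoothness for $SL_2(\R)\times SO(3)$ reduces to smoothness for $SL_2(\R)$ on the radial factor. The case of $\boldsymbol{H}_-$ follows from the case of $\boldsymbol{H}_+$ via $t\mapsto -t$, which intertwines $E\mapsto -E$, $F\mapsto -F$; this is an outer automorphism of $\mathfrak{sl}_2$ exchanging $D^+$ and $D^-$. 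So I would treat only $\boldsymbol{H}_+$.

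First, conjugate away the weight. Write $g=t^\ell a$ and compute $F(t^\ell a)=i\,t^\ell\left(ta''+(2\ell+2)a'\right)$. On $a$ the triple therefore acts by
\[E=it,\qquad H=2t\partial_t+2\ell+2,\qquad F=i\left(t\partial_t^2+(2\ell+2)\partial_t\right),\]
with measure $t^{2\ell+1}dt$. In the variable $s=2\sqrt{t}$ this is the standard radial (Bessel) model of $D^+_{2\ell+2}$ on even functions of $s$ in dimension $2\ell+3$. Its $K$-finite vectors are explicit: the eigenvectors of $i(F-E)$ are $e^{-t}L_k^{(2\ell+1)}(2t)$, with Laguerre polynomials, and eigenvalues $2(k+\ell+1)$, which matches Proposition \ref{iso}.

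Next, characterise smooth vectors by $K$-type decay. A vector is smooth if and only if its Fourier--Laguerre coefficients $c_k$ decay faster than any power of $k$. This holds because smooth vectors for $SL_2(\R)$ are exactly the vectors in the domain of all powers of the Casimir-shifted operator $1+(i(F-E))^2$, given the Casimir is scalar and $K=SO(2)$. So I need the classical fact that an $L^2(t^{2\ell+1}dt)$ function has rapidly decreasing Laguerre coefficients if and only if it lies in $\mathcal{S}([0,\infty))$. The direction ``$\Leftarrow$'' follows by integrating by parts with the self-adjoint Laguerre operator $N=-(t\partial_t^2+(2\ell+2-2t)\partial_t)$, whose eigenfunctions are the $L_k$. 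Since $N$ preserves $e^{-t}$-twisted Schwartz functions on $[0,\infty)$, no boundary terms appear at $0$, because of the weight $t^{2\ell+2}$ in the integrations.

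For ``$\Rightarrow$'' I would use the stronger statement that $\mathcal{S}(\boldsymbol{H})_\ell$ is the common domain of all monomials in $E,H,F$, and bound sup norms of $t^m a^{(j)}$ using Sobolev-type estimates. Operators $E^m$ and $(F\text{-like})$ combinations generate $t^m\partial_t^j$ modulo lower-order terms near $0$, and $t\partial_t^2+(2\ell+2)\partial_t$ is elliptic in $s$ at $s=0$. Smoothness in $s^2$ of even functions then gives smoothness of $a$ from the right at $0$.

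I expect the main obstacle to be the regularity of $a$ at the boundary point $t=0$. Here I would pass to the variable $s=2\sqrt{t}$, viewing $a(s^2/4)$ as a radial function on $\R^{2\ell+3}$, and invoke that $F$ becomes $i\Delta_{\R^{2\ell+3}}$ up to a constant. Elliptic regularity for $\Delta$ on radial functions, together with the classical fact that smooth radial functions are smooth functions of $s^2$, then gives $a\in C^\infty([0,\infty))$. Decay at infinity comes from the $E$ powers. Alternatively, once density and Casselman--Wallach are granted (Subsection \ref{Imp}), it suffices to check that $t^\ell\mathcal{S}([0,\infty))\otimes\mathcal{Y}^\ell$ is a Fréchet globalization of moderate growth with the same $K$-finite vectors. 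That check, done via the Laguerre expansion, is the route I would try first.
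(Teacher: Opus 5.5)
Your route is different from the paper's. The paper gets the first assertion from Casselman--Wallach (Subsection~\ref{Imp}). It then quotes Baruch's description of the smooth vectors of discrete series in the Kirillov model, which is exactly the present model because $E$ acts by multiplication by $it$. You instead reprove that description in this case, in three steps:
\begin{itemize}
\item smooth vectors are those with rapidly decreasing $SO(2)$-components (Nelson, with scalar Casimir);
\item the $K$-finite basis is $t^{\ell}e^{-t}L_k^{(2\ell+1)}(2t)$ with weights $2(k+\ell+1)$, and your computation of this is correct;
\item rapid decay of the Laguerre coefficients is equivalent to lying in $\mathcal{S}([0,\infty))$.
\end{itemize}
Your integration-by-parts argument for the easy direction is fine, and so is the reduction of the minus case via $t\mapsto -t$. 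Your version is self-contained and explains where $t^{\ell}$ and the Laguerre index come from. The paper's version is two lines but rests on an outside reference.

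Two points need repair.

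\textbf{The dimension is wrong.} With $s=2\sqrt t$ one gets $t\partial_t^2+(2\ell+2)\partial_t=\partial_s^2+\frac{4\ell+3}{s}\partial_s$ and $t^{2\ell+1}dt\propto s^{4\ell+3}ds$. So $F=i\Delta$ on radial functions on $\R^{4\ell+4}$, not on $\R^{2\ell+3}$. This matches your own $H=s\partial_s+2\ell+2=s\partial_s+d/2$ and the Laguerre index $2\ell+1=d/2-1$.

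This is not cosmetic for your regularity step. Smoothness only gives $\Delta^m A\in L^2$ on $\R^d\setminus\{0\}$. Before invoking elliptic regularity you must rule out point-supported terms $\sum_j c_j\Delta^j\delta$ in the distributional Laplacian of $A$. In $\R^3$ this genuinely fails: $A=|x|^{-1}e^{-|x|}$ is in $L^2$, satisfies $\Delta A=A$ off the origin, and is singular at $0$. In $\R^{4\ell+4}$ it works, because $|x|^{2-d}\notin L^2_{\mathrm{loc}}$ for $d\ge 4$, and you should say so.

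A cleaner way to do the hard direction is to sum the Laguerre series directly. Use polynomial-in-$k$ bounds on the seminorms $\sup_{t\ge 0}|t^m\partial_t^j(e^{-t}L_k^{(2\ell+1)}(2t))|$. These follow from $|e^{-x/2}L_k^{(\alpha)}(x)|\le\binom{k+\alpha}{k}$, from $\frac{d}{dx}L_k^{(\alpha)}=-L_{k-1}^{(\alpha+1)}$, and from the three-term recurrence. Alternatively, use the Hermite characterization of $\mathcal{S}(\R^{4\ell+4})$ restricted to radial functions.

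\textbf{The first assertion is not yet proved.} Your main route characterizes the $SL_2(\R)$-smooth vectors of $(\boldsymbol{H}_+)_\ell$. The first assertion is that these equal $\mathcal{S}(\boldsymbol{H}_+)_\ell$, the $G$-smooth vectors in that isotypic piece. Only the inclusion of $G$-smooth vectors into $SL_2(\R)$-smooth vectors is automatic.

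You invoke Casselman--Wallach only for your alternative route, but the main route needs it too, or an equivalent argument. In particular, your claim that $\mathcal{S}(\boldsymbol{H})_\ell$ is ``the common domain of all monomials in $E,H,F$'' is, by Appendix~\ref{sv}, a statement about $SL_2(\R)$-smooth vectors. It therefore presupposes exactly this identification.
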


\begin{proof}
The isomorphism statement follows from the Casselman-Wallach  theorem as discussed in Subsection \ref{Imp}. 
    The rest of the proof immediately follows from the description of the smooth subspace of the discrete series representations of $SL_2(\R)$ in the Kirillov model as given in \cite{Baruch}.
\end{proof}

\subsection{Spectrum  of the Schr\"{o}dinger family is always real}\label{realspec}
Here we show that the  spectrum of the Schr\"odinger family in any admissible  smooth Fr\'echet representation of moderate growth is a subset of the real numbers. 
\begin{proposition*}
Let $(\pi,SL_2(\R),{W}) $ be an admissible  smooth Fr\'echet representation of moderate growth.  Then  for any $\kappa>0$
\[\operatorname{Spec}_{\boldsymbol{W}}\{S^{Ab}_{\kappa}\}\subseteq \R.\]
\end{proposition*}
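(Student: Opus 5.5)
The plan is to show that for non-real $\lambda$ the operator $T_\lambda:=S^{Ab}_{\kappa}(\lambda)$ is conjugate inside $\mathfrak{sl}_2(\C)$ to an element whose Hermitian adjoint visibly has zero kernel on $\boldsymbol{W}^{\dagger}$. Write $\lambda=a+ib$ with $b\neq 0$. The element $X_\lambda:=\lambda E+F\in\mathfrak{sl}_2(\C)$ has determinant $-\lambda$, so it is semisimple with eigenvalues $\pm\sqrt{\lambda}$, and $\sqrt{\lambda}\notin\R\cup i\R$ because $b\neq 0$. I will use the reduction of Claim~\ref{334}, applied with $SO(2)$ in place of $K$ (the same Peter--Weyl argument), to restrict to $K$-isotypic pieces, where $K=SO(2)\subset SL_2(\R)$. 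Admissibility makes each $K$-type $\boldsymbol{W}_m$ finite dimensional. However, $X_\lambda$ does not preserve $K$-types, so this alone is not enough. I will instead work with the Harish-Chandra module $\boldsymbol{W}_{fin}$, which is dense, and argue via its structure.

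\textbf{Step 1 (reduction to the $(\mathfrak g,K)$-module).} An element $\eta\in\boldsymbol{W}^{\dagger}$ with $T_\lambda^{\dagger}\eta=0$ restricts to a conjugate-linear functional on $\boldsymbol{W}_{fin}$ that kills $T_\lambda\boldsymbol{W}_{fin}$, and by density $\eta=0$ if and only if this restriction vanishes. Since $\eta$ is continuous, it is enough to show that $T_\lambda\boldsymbol{W}$ is dense in $\boldsymbol{W}$.

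\textbf{Step 2 (finite length).} An admissible moderate growth Casselman--Wallach representation has finite length, because admissible finitely generated Harish-Chandra modules of $SL_2(\R)$ have finite length. I expect to need finite generation as an implicit hypothesis, or to reduce to irreducible subquotients. Using a composition series and exactness of Casselman--Wallach globalization, it suffices to prove that $T_\lambda$ has dense image, or better is surjective, on each irreducible smooth globalization. Surjectivity passes through extensions by a five-lemma type argument on closed subspaces.

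\textbf{Step 3 (irreducibles).} For the trivial representation $T_\lambda=\kappa\neq0$, so there is nothing to prove. For infinite-dimensional irreducibles I will realize the representation in a concrete model and make $X_\lambda$ diagonal. Conjugating by $A\in SL_2(\C)$, which is allowed at the level of the complexified Lie algebra acting on the $(\mathfrak g,K)$-module, I bring $X_\lambda$ to $\sqrt{\lambda}\,H$, so that $T=\kappa-i\sqrt{\lambda}H$. The trouble is that complex conjugation does not act on the smooth globalization. So I will rather use a real conjugation bringing $X_\lambda$ into a real normal form. Since $\operatorname{Ad}$ of $SL_2(\R)$ preserves $\mathfrak{sl}_2(\R)$, write $X_\lambda=aE+F+ib E$. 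Scaling by the torus in $SL_2(\R)$ normalizes one coefficient. In the principal series line model, $E$ acts by $i t$ (multiplication), as in Proposition~\ref{iso}, and $H$ acts by a Euler operator, so $T_\lambda$ becomes a first or second order ODE operator with the non-real parameter $b$. I will show its formal adjoint has no tempered solutions: solutions grow like $\exp(\pm\sqrt{\lambda}\,\cdot)$ with $\operatorname{Re}\sqrt\lambda\neq0$ in both directions, or have non-admissible power behavior at $0$, similarly to the $\lambda>0$ computation in the proof of Proposition~\ref{623}.

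\textbf{Main obstacle.} The hardest point is the third step for general irreducibles, namely controlling the distributional kernel of $T_\lambda^{\dagger}$ uniformly across the principal, complementary and discrete series models, including boundary contributions as in the $J^{\pm}_{j,\alpha}$ analysis. My first attempt is an energy argument instead. Take $\eta$ with $T_\lambda^\dagger\eta=0$ and pair against $K$-finite vectors, so that the relation $\eta(T_\lambda v)=0$ becomes a three-term recursion on the sequence $\eta|_{\boldsymbol{W}_m}$ coming from $E,F$ raising and lowering $K$-types. I will then show that for non-real $\lambda$ every nonzero solution of the recursion grows exponentially in $m$. Such growth contradicts continuity of $\eta$, since the Fréchet seminorms on $K$-types grow only polynomially in $m$ by moderate growth. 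This recursion argument uses only admissibility and moderate growth, matching the hypotheses.
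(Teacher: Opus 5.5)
Your reductions are sound: $\operatorname{Ker}(T_\lambda^{\dagger};\boldsymbol{W}^{\dagger})=0$ is equivalent to density of $T_\lambda\boldsymbol{W}$, and passing to irreducible subquotients works because the kernel on $\boldsymbol{W}^{\dagger}$ is controlled by the kernels on the duals of a subrepresentation and a quotient.

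\textbf{The gap.} The decisive step is missing. You never actually rule out nonzero solutions for $\lambda\notin\R$, and neither mechanism you sketch can do it on its own.

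In the line model with $\lambda\notin\R$, exactly one of $e^{\pm i\sqrt{\lambda}\,t}$ decays as $|t|\to\infty$ on each half-line. So there is always a one-dimensional space of tempered solutions near infinity, and your claim of growth ``in both directions'' is false. What must be shown is that this decaying solution is incompatible with the admissible behaviour at $t=0$, boundary functionals included.

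The same happens in the $K$-type recursion. The limiting constant-coefficient recursion has characteristic roots $\mu^{\pm1}$ with $|\mu|\neq1$, so there is an exponentially decaying solution at each end. For a discrete series the solution is pinned down by the lowest $K$-type, and its behaviour at infinity is not decided by the asymptotics.

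Whether the two ends match is a global connection problem. For generic real $\lambda<0$ the local picture is of exactly the same type ($|\mu|\neq1$, one decaying solution at each end), yet nonzero tempered solutions \emph{do} exist at the discrete energy levels. So ``every nonzero solution grows exponentially'' cannot follow from asymptotic analysis; it needs a global input that genuinely uses $\operatorname{Im}\lambda\neq0$. Your unspecified ``energy argument'' does not supply this, since $\eta$ lies only in $\boldsymbol{W}^{\dagger}$ and cannot be paired with itself.

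\textbf{How the paper fills it.} The paper supplies exactly this input and also removes your ``main obstacle''.
\begin{enumerate}
\item By Casselman's submodule theorem an irreducible $\boldsymbol{W}$ is a quotient of a principal series. The kernel of $T_\lambda^{\dagger}$ on the dual of a quotient embeds into the kernel on the dual of the principal series. So no case analysis over discrete, complementary and principal series is needed.
\item In the circle model $C^{\infty}(S^1)$ the family acts by a \emph{first-order} operator with leading coefficient $ic_\lambda(\theta)$, where $c_\lambda=\cos^2\theta-\lambda\sin^2\theta$. When $\lambda\notin\R$, $c_{\overline{\lambda}}$ has no zeros. Hence every solution of the adjoint equation in $\mathcal{D}^{\dagger}(S^1)$ is smooth and locally explicit, and no boundary terms arise.
\item A global solution exists only if the monodromy $M=\exp\{i\int_0^{2\pi}d\theta/c_{\overline{\lambda}}(\theta)\}$ equals $1$. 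The global input is the computation $|M|=\exp\{\operatorname{Im}(\lambda)\int_0^{2\pi}\sin^2\theta\,|c_\lambda(\theta)|^{-2}\,d\theta\}\neq1$. For $\lambda<0$ the same formula gives $|M|=1$, and $M=1$ only at special values, which is the discrete spectrum.
\end{enumerate}
In your recursion language (for a principal series), a polynomially bounded solution exists if and only if $M=1$. Completing your route would therefore amount to reproducing this monodromy computation.
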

\begin{proof}
Lemma \ref{333} shows that it is enough to prove the proposition for $\kappa=1$. We can further assume that $W$ is irreducible.  
By Casselman submodule theorem we can assume that $W$ is a quotient of a principal series representation realized on the space 
of $\nu$-homogeneous smooth functions on the punctured plane $\R^2_0:=\R^2\setminus\{0\}$ for some $\nu\in \C$.
Restriction to the circle $S^1\subset \R^2_0$ gives an isomorphic representation on the space $C^{\infty}(S^1)$ of smooth functions on the circle.

In this representation  the  Schr\"odinger family acts via 
\[\pi^{\nu}(S^{Ab}_{1}({\lambda}))=\left(1+i\nu (\lambda+1)s(\theta)  \right)+ic_{\lambda}(\theta)\frac{d}{d\theta},\]
where 
\[ s(\theta):=\sin(\theta)\cos(\theta),\quad  c_{\lambda}(\theta):=\cos^2(\theta)-\lambda \sin^2(\theta).  \]
The  Hermitian dual of $C^{\infty}(S^1)$ is
\[
\mathcal{D}^{\dagger}(S^1):=(C^{\infty}(S^1))^{\dagger}=\overline{\mathcal{D}'(S^1)}.
\]
The Hermitian adjoint operator 
\[
\pi^{\nu}(S^{Ab}_{1}({\lambda}))^{\dagger}:\mathcal{D}^{\dagger}(S^1) \longrightarrow \mathcal{D}^{\dagger}(S^1)
\]
is  given by 
\begin{eqnarray}\nonumber
&\pi^{\nu}(S^{Ab}_{1}({\lambda}))^{\dagger} =&   1-i(\overline{\nu}+2) (\overline{\lambda}+1)s(\theta)+ic_{\overline{\lambda}}(\theta) \frac{d}{d\theta}. 
\end{eqnarray}
For $\lambda \in \C\setminus \R$ (in fact this is true for $\lambda \in \C\setminus [0,\infty)$), the leading coefficient   $c_{\overline{\lambda}}(\theta)\neq 0$ never vanishes. Hence for $\lambda \in \C\setminus \R$,   any element in 
$\operatorname{Ker}(\pi^{\nu}(S^{Ab}_{1}({\lambda}))^{\dagger})$
must be represented by a smooth function on the circle which locally must be of the form 
\[C (c_{\overline{\lambda}}(\theta))^{-(\overline{\nu}+2)/2}\exp\left\{i\int_{\theta_0}^{\theta}\frac{dt}{c_{\overline{\lambda}}(t)}\right\},\]
for some complex constant  $C$ and some real constant $\theta_0$.
For existence of a global nonzero solution we must demand that the monodromy $M$, when  varying $\theta$ along  $[0,2\pi]$, is equal to $1$. 

The monodromy can be calculated explicitly, it is independent of $\nu$ and   is given by
\begin{eqnarray}\nonumber
&&  M=\exp\left\{i\int_0^{2\pi}\frac{d\theta}{c_{\overline{\lambda}}(\theta)} \right\}=\exp\left\{i\int_0^{2\pi}\frac{c_{\lambda}(\theta)d\theta}{|c_{\lambda}(\theta)|^2} \right\}. 
\end{eqnarray}
Hence 
\begin{eqnarray}\nonumber
&&  |M|=\exp\left\{-\int_0^{2\pi}\frac{\operatorname{Im} (c_{\lambda}(\theta))d\theta}{|c_{\lambda}(\theta)|^2} \right\}=\exp\left\{\operatorname{Im}(\lambda)\int_0^{2\pi}\frac{ \sin^2 (\theta)d\theta}{|c_{\lambda}(\theta)|^2} \right\}. 
\end{eqnarray}
Since we assume that  $\lambda\notin \R$ and since the integral is positive $|M|\neq 1$.
\end{proof}
We remark that similar but  more detailed calculations of the monodromy show that \[\operatorname{Spec}_{\boldsymbol{W}}\{S^{Ab}_{\kappa}\}\subseteq \left\{-\frac{\kappa^2}{n^2}
| n\in \N\right\}\cup[0,\infty).\]

\subsection{The self-adjoint property of the subspace of smooth vectors}\label{sv}
In this section we prove  Claim \ref{exa}.  

 Let $G$ be a real Lie group, $\fg$ the complexification of the Lie algebra $\operatorname{Lie}(G)$, and $\mathcal{U}(\fg)$ its universal enveloping algebra. 

Let $(\pi,G,\boldsymbol{H})$ be a unitary representation of $G$ on a Hilbert space $\boldsymbol{H}$. Denote by  $\boldsymbol{H}^{\infty}$ the subspace of $\boldsymbol{H}$ consisting of smooth vectors for $\pi$.  
\begin{claim*} 
    The natural action $\pi$ of the universal enveloping algebra $\mathcal{U}(\fg)$ on the space $\boldsymbol{H}^{\infty}$ is   self-adjoint.
\end{claim*}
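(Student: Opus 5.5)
We need to prove that the action of $\mathcal{U}(\fg)$ on $\boldsymbol{H}^{\infty}$ is symmetric and equal to its adjoint. The plan has two halves: symmetry is formal, and the inclusion of the adjoint domain into $\boldsymbol{H}^{\infty}$ is the real content.

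\textbf{Symmetry and density.} First, $\boldsymbol{H}^{\infty}$ is dense by G\aa rding: for $v\in\boldsymbol{H}$ and a Dirac sequence $\phi_k\in C_c^\infty(G)$, the vectors $\pi(\phi_k)v$ are smooth and converge to $v$. For $X\in\operatorname{Lie}(G)$ and $x,y\in\boldsymbol{H}^{\infty}$, differentiate the identity $\langle\pi(\exp sX)x,y\rangle=\langle x,\pi(\exp(-sX))y\rangle$ at $s=0$. This gives $\langle \pi(X)x,y\rangle=\langle x,-\pi(X)y\rangle=\langle x,\pi(X^*)y\rangle$. Since $*$ is an antilinear anti-automorphism and $\pi$ is multiplicative on $\boldsymbol{H}^\infty$, the relation $\langle \pi(a)x,y\rangle=\langle x,\pi(a^*)y\rangle$ extends to all $a\in\mathcal{U}(\fg)$. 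Hence $\pi\preceq\pi^*$.

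\textbf{Reverse inclusion.} It remains to show that the adjoint domain $\mathcal{W}^*$ is contained in $\boldsymbol{H}^{\infty}$; the adjoint action then automatically agrees with $\pi$ there. The key tool is the theorem of Goodman (or Nelson): a vector $v$ is smooth if and only if it lies in the domain of every power $\overline{\pi(\Delta)}^{\,k}$ of the closure of the Nelson Laplacian $\Delta=1-\sum X_i^2$, where $\{X_i\}$ is a basis of $\operatorname{Lie}(G)$. Equivalently, $\boldsymbol{H}^{\infty}=\bigcap_k D(\overline{\pi(\Delta)}^k)$.

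Nelson's theorem also gives that $\pi(\Delta)$ is essentially self-adjoint on $\boldsymbol{H}^{\infty}$, and indeed already on the G\aa rding space, and the same holds for each power $\Delta^k$. So let $v\in\mathcal{W}^*$ and fix $k$, and put $a=\Delta^k$, noting that $a^*=a$. By the definition of $\mathcal{W}^*$, the functional $x\mapsto\langle\pi(a)x,v\rangle$ is bounded on $\boldsymbol{H}^\infty$. This says $v\in D(\pi(a)|_{\boldsymbol{H}^\infty}^{\ \ast})$, where $\ast$ denotes the operator adjoint. Since $\pi(a)|_{\boldsymbol{H}^\infty}$ is essentially self-adjoint, that operator adjoint equals the closure $\overline{\pi(\Delta^k)}$.

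One then has to identify $\overline{\pi(\Delta^k)}$ with $\overline{\pi(\Delta)}^{\,k}$. This follows from the spectral theorem applied to the self-adjoint operator $\overline{\pi(\Delta)}$, whose $k$-th power is self-adjoint and extends $\pi(\Delta^k)|_{\boldsymbol{H}^\infty}$, combined with essential self-adjointness of $\pi(\Delta^k)$ on $\boldsymbol{H}^\infty$. Thus $v\in D(\overline{\pi(\Delta)}^{\,k})$ for every $k$, and Goodman's criterion gives $v\in\boldsymbol{H}^{\infty}$.

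\textbf{Main obstacle.} The difficulty lies entirely in the analytic input: Nelson's essential self-adjointness of $\Delta^k$ on smooth vectors, and Goodman's characterization of $\boldsymbol{H}^\infty$ through these domains. I would cite these rather than reprove them. If a more self-contained argument were wanted, I would instead use the elliptic-regularity route. Write $\Delta^{k}$ times a parametrix as the identity plus a smoothing term, using the Dixmier–Malliavin factorization of $\boldsymbol{H}^\infty=\pi(C_c^\infty(G))\boldsymbol{H}$, to show directly that any $v$ with $\pi(\Delta^k)^{*}v\in\boldsymbol{H}$ for all $k$ is a finite sum $\sum\pi(\phi_j)w_j$ with $\phi_j\in C_c^\infty(G)$, and is therefore smooth.
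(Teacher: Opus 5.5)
Your argument is correct, but it takes a different route from the paper. The paper never introduces a Laplacian. It defines Sobolev spaces $\boldsymbol{H}^n\subset\boldsymbol{H}$ by asking that $X\otimes w\mapsto\langle\pi(X)w,v\rangle$ be bounded on $\mathcal{U}^n(\fg)\otimes\boldsymbol{H}^{\infty}$, and notes that the adjoint domain is exactly $\bigcap_n\boldsymbol{H}^n$. It checks that each $\boldsymbol{H}^n$ is a Hilbert space carrying a continuous $G$-action, and then mollifies. For $v\in\bigcap_n\boldsymbol{H}^n$ it picks $\mu_n$ with $\tau_m(v-\pi(\mu_n)v)<2^{-n}$ for $m\le n$. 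The smooth vectors $\pi(\mu_n)v$ are then Cauchy in every Sobolev norm, and completeness of $\boldsymbol{H}^{\infty}$ gives $v\in\boldsymbol{H}^{\infty}$. This is a Friedrichs-type ``weak equals strong'' argument. It needs no ellipticity, because the adjoint-domain hypothesis controls $\pi(a)$ for \emph{all} $a\in\mathcal{U}(\fg)$ at once. You discard everything except the information about the powers $\Delta^k$, and must then recover smoothness from ellipticity, which is exactly the content of Goodman's theorem $\boldsymbol{H}^{\infty}=\bigcap_k D(\overline{\pi(\Delta)}^{\,k})$. Your spectral-theoretic steps are right and the proof is short modulo citations. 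However, the cited input (elliptic regularity on $G$) is deeper than the claim itself, whereas the paper's proof needs only G\aa rding mollification and the completeness of $\boldsymbol{H}^{\infty}$.

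Two minor points. First, essential self-adjointness of $\pi(\Delta^k)$ on $\boldsymbol{H}^{\infty}$ is the Nelson--Stinespring theorem rather than Nelson's theorem for $\Delta$. You can, however, get it for free from Goodman's theorem, since $\bigcap_k D(\overline{\pi(\Delta)}^{\,k})$ is a core for every power of the self-adjoint operator $\overline{\pi(\Delta)}$. Alternatively, avoid powers altogether and induct, using that $\pi^*(\Delta)$ preserves the adjoint domain and agrees with $\overline{\pi(\Delta)}$ there. Second, your optional ``self-contained'' sketch is inaccurate as stated. A parametrix $E$ for $\Delta^k$ is only of class $C^{m}$, with $m$ growing with $k$. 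It therefore writes $v$ as $\pi(E)w$ plus $\pi(\psi)v$ with $\psi\in C_c^{\infty}(G)$, which shows that $v$ is a $C^{m}$-vector for every $m$. It does not produce a finite sum $\sum_j\pi(\phi_j)w_j$ with smooth $\phi_j$, and Dixmier--Malliavin is not what is needed there.
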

Before proving the claim we  introduce relevant Sobolev spaces and other tools   that will be used in the proof.  

We denote by $\mathcal{U}^n(\fg)$   the $n$-th piece in the PBW filtration of  $\mathcal{U}(\fg)$ consisting of elements of  degree $\leq n$.

 We fix a $*$-invariant   Hermitian scalar product $B_n$ on  $\mathcal{U}^n(\fg)$ and   denote the corresponding Hermitian scalar product on $T^n:=\mathcal{U}^n(\fg)\otimes_{\C}\boldsymbol{H}$ by $\sigma_n$.  
 
 For $n\in \Z_{\geq 0}$, the $n$-th Sobolev norm of $v\in \boldsymbol{H}^{\infty}$ is defined as the operator norm of the morphism \begin{eqnarray}\nonumber
     && M_v:(\mathcal{U}^n(\fg), B_n)\longrightarrow (\boldsymbol{H}, \langle \_,\_ \rangle)\\ \nonumber
     && X\longmapsto \pi^{\infty}(X)v.
    \end{eqnarray}
The Fr\'echet topology on  $\boldsymbol{H}^{\infty}$ is induced by this sequence  of Sobolev norms.

 The adjoint action of $G$ on $\mathcal{U}^n(\fg)$ is continuous and we obtain a continuous representation of $G$ on    $T^n$.  The smooth subspace of $T^n$ is clearly given by $T^{n,\infty}=\mathcal{U}^n(\fg)\otimes_{\C}\boldsymbol{H}^{\infty}$. 

 The action $\pi$ defines a continuous linear map  $\pi^{n}:T^{n,\infty}\longrightarrow \boldsymbol{H}$ via
 \[X\otimes w\longmapsto  \pi(X)w. \]
 Hence any vector $v\in \boldsymbol{H}$ defines a linear functional $\xi_v$ on $T^{n,\infty}$ (in fact on all $\mathcal{U}(\fg)\otimes_{\C}\boldsymbol{H}^{\infty}$) via 
 \[\xi_v(X\otimes w)=\langle \pi(X)w,v \rangle. \]
\begin{defn*}
    The $n$-th  Sobolev space $\boldsymbol{H}^n\subset \boldsymbol{H}$ consists of all vectors $v\in \boldsymbol{H}$ such that the linear functional $\xi_v$ is bounded with respect to the  norm associated with $\sigma_n$. We denote the norm of the  functional $\xi_v$ by $\tau_n(v)$. The map $\tau_n:\boldsymbol{H}^n\longrightarrow \R_{\geq 0}$ is   a norm on $\boldsymbol{H}^n$.
\end{defn*}
In the proof of the claim we shall use the  following lemma   that can  be easily  verified.

\begin{lemma*}
\begin{enumerate}
    \item     The norm $\tau_n$ on $\boldsymbol{H}^n$ is induced by a Hermitian inner product $\langle\_,\_\rangle _n$. With respect to $\langle\_,\_\rangle _n$, $\boldsymbol{H}^n$ is a Hilbert space. 
    \item Each $\boldsymbol{H}^n$ is stable under $\pi(G)$, and this action of $G$ is continuous.
    \item $\boldsymbol{H}^{\infty}\subset \boldsymbol{H}^n$ and restriction of $\tau_n$ to 
$\boldsymbol{H}^{\infty}$ 
coincides with the   $n$-th Sobolev norm on   $\boldsymbol{H}^{\infty}$ that was introduced before.   
\end{enumerate}
\end{lemma*}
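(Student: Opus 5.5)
Fix $n$ and write $T=\mathcal{U}^n(\fg)$ with its $*$-invariant form $B_n$, so $T^{n,\infty}=T\otimes\boldsymbol{H}^\infty$. I will prove the three parts in turn.

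\textbf{Part 1: $\tau_n$ is a Hilbert norm.} Consider $\xi_v$ as a conjugate-linear/linear functional on $T\otimes\boldsymbol{H}^\infty$. Since $\boldsymbol{H}^\infty$ is dense in $\boldsymbol{H}$, the space $T\otimes\boldsymbol{H}^\infty$ is dense in the Hilbert space $(T^n,\sigma_n)$. Hence $v\in\boldsymbol{H}^n$ exactly when $\xi_v$ extends to a bounded functional on $T^n$. By Riesz this gives a unique element $R(v)\in T^n$ with $\xi_v(y)=\sigma_n(y,R(v))$ and $\tau_n(v)=\|R(v)\|_{\sigma_n}$.

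The map $v\mapsto R(v)$ is linear and injective. Injectivity holds because $1\in T$, so $\xi_v(1\otimes w)=\langle w,v\rangle$ for all $w$ forces $v=0$ when $R(v)=0$. Thus $\langle v,v'\rangle_n:=\sigma_n(R(v),R(v'))$ is a Hermitian inner product inducing $\tau_n$.

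For completeness, let $v_k$ be $\tau_n$-Cauchy. Since $\|v\|\le c\,\tau_n(v)$ (test against $1\otimes w$), the $v_k$ converge in $\boldsymbol{H}$ to some $v$ and $R(v_k)\to r$ in $T^n$. Passing to the limit in $\xi_{v_k}(y)=\sigma_n(y,R(v_k))$ for fixed $y\in T\otimes\boldsymbol{H}^\infty$ gives $\xi_v(y)=\sigma_n(y,r)$. Hence $v\in\boldsymbol{H}^n$, $R(v)=r$, and $v_k\to v$ in $\tau_n$.

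\textbf{Part 2: $G$-stability and continuity.} I will use the identity
\[
\xi_{\pi(g)v}(X\otimes w)=\langle \pi(X)w,\pi(g)v\rangle=\langle\pi(\operatorname{Ad}_{g^{-1}}X)\pi(g^{-1})w,v\rangle=\xi_v\bigl((\operatorname{Ad}_{g^{-1}}\otimes\pi(g^{-1}))(X\otimes w)\bigr).
\]
The operator $\operatorname{Ad}_{g^{-1}}\otimes\pi(g^{-1})$ on $T^n$ is bounded, with norm at most $\|\operatorname{Ad}_{g^{-1}}|_T\|$ (which is locally bounded in $g$), and it preserves $T\otimes\boldsymbol{H}^\infty$. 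It follows that $\tau_n(\pi(g)v)\le c(g)\,\tau_n(v)$, so $\boldsymbol{H}^n$ is $G$-stable, and equivalently $R(\pi(g)v)=U(g)R(v)$ for the adjoint-type representation $U(g)=(\operatorname{Ad}_{g^{-1}}\otimes\pi(g^{-1}))^{*}$ on $T^n$.

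The key step, and the one I expect to be the main obstacle, is strong continuity of $g\mapsto\pi(g)$ on $\boldsymbol{H}^n$. The point to get right is that $T^n$ is a \emph{continuous} representation. Here this holds because $\operatorname{Ad}$ is a continuous finite-dimensional representation and $\pi$ is strongly continuous, and the tensor product of a continuous finite-dimensional representation with a unitary one is continuous. Consequently $g\mapsto U(g)R(v)$ is continuous in $T^n$, and because $R$ is isometric onto its image, the orbit map $g\mapsto\pi(g)v$ is continuous in $\tau_n$. Uniform local boundedness of $c(g)$ is what makes this argument work.

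\textbf{Part 3: inclusion and agreement of norms.} Take $v\in\boldsymbol{H}^\infty$. Using $*$-invariance,
\[
\xi_v(X\otimes w)=\langle w,\pi(X^*)v\rangle,
\]
and $X\mapsto X^*$ is an antiunitary map of $(T,B_n)$. Write $\{X_i\}$ for a $B_n$-orthonormal basis of $T$. Then $R(v)=\sum_i X_i\otimes\pi(X_i^*)v$, and $\tau_n(v)^2=\sum_i\|\pi(X_i^*)v\|^2$, so in particular $v\in\boldsymbol{H}^n$. This quantity is the Hilbert–Schmidt norm of $M_v$, since $\{X_i^*\}$ is also orthonormal; it is therefore equivalent to the operator norm of $M_v$. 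I would either identify the two Sobolev norms up to this equivalence, or define the Sobolev norm via the Hilbert–Schmidt version so that they coincide exactly. The norms then agree on $\boldsymbol{H}^\infty$ in the sense required.
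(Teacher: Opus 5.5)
Your proof is correct. The paper gives no proof of this lemma (it only says it "can be easily verified"), so your argument supplies the missing verification rather than competing with one. Realizing $\boldsymbol{H}^n$ as the domain of the Hilbert-space adjoint $R=(\pi^n)^*$ of the densely defined operator $\pi^n:T^{n,\infty}\to\boldsymbol{H}$ handles both of the first two parts.

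\begin{enumerate}
\item Adjoints are closed, and the bound $\|v\|\le\|1\|_{B_n}\,\tau_n(v)$ makes $\tau_n$ equivalent to the graph norm. This gives part 1.
\item The intertwining $R(\pi(g)v)=U(g)R(v)$, with $U(g)=\bigl(\operatorname{Ad}_{g^{-1}}\otimes\pi(g^{-1})\bigr)^*$ strongly continuous on $T^n$, gives continuity of orbit maps in $\tau_n$ directly. This gives part 2. Your separate remark about local boundedness of $c(g)$ is only needed for joint continuity, and that bound is automatic from $\|\operatorname{Ad}_{g^{-1}}\otimes\pi(g^{-1})\|=\|\operatorname{Ad}_{g^{-1}}|_{\mathcal{U}^n(\fg)}\|$.
\end{enumerate}

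Your hedge in part 3 is right and should be kept. For $v\in\boldsymbol{H}^\infty$ you get $\tau_n(v)^2=\sum_i\|\pi(X_i)v\|^2$ over a $B_n$-orthonormal basis. That is the squared Hilbert--Schmidt norm of $M_v$, not the square of its operator norm.

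So the statement's literal claim that $\tau_n$ \emph{coincides} with the $n$-th Sobolev norm is false in general. For example, take translations of $\R$ on $L^2(\R)$ with $n=1$ and real-valued $v$. Then $\langle v,v'\rangle=0$, so $\tau_1(v)^2=\|v\|^2+\|v'\|^2$, while $\|M_v\|=\max(\|v\|,\|v'\|)$. In general one only has $\|M_v\|\le\tau_n(v)\le\sqrt{\dim\mathcal{U}^n(\fg)}\,\|M_v\|$.

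This discrepancy is harmless for the paper. The subsequent proof of the claim only uses two facts: $\boldsymbol{H}^\infty\hookrightarrow\bigcap_n\boldsymbol{H}^n$ is a topological embedding, and $\boldsymbol{H}^\infty$ is complete with respect to the $\tau_n$. Both are unaffected by passing to equivalent norms.
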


\begin{proof}[Proof of the claim]
From the definition of the adjoint action it is clear that  $(\boldsymbol{H}^{\infty})^{*}= \cap_{n}\boldsymbol{H}^n$.
 Note  that for $n>m$  the inclusion $\boldsymbol{H}^n\longrightarrow \boldsymbol{H}^m$  is a continuous  $G$-equivariant  map.  Hence it is enough to show that  $\boldsymbol{H}^{\infty}=\cap_{n}\boldsymbol{H}^n$. Clearly the inclusion $\boldsymbol{H}^{\infty}\subseteq \cap_{n}\boldsymbol{H}^n$ is a topological embedding. 
We will prove the required equality  using the fact that $\boldsymbol{H}^{\infty}$ is complete with respect to the system of Sobolev norms $\tau_n$. This follows from  \cite[Lemma. 2.15]{BK2014} (also see \cite{MR84713}).

 Let $v\in \cap_{n}\boldsymbol{H}^n\subset \boldsymbol{H}$.  
For any smooth measure $\mu$ with compact support  on $G$ we have $\pi(\mu)v\in \boldsymbol{H}^{\infty}\subseteq \cap_{k}\boldsymbol{H}^k$. 
 Using continuity of the actions of $G$ on the Sobolev spaces, 
  for any $n\in \Z_{\geq 0}$ we can find a smooth measure $\mu_n$ on $G$, supported in a small neighborhood of the identity in $G$, such that 
 $\tau_m(v-\pi(\mu_n)v)<2^{-n}$ for all $m\leq n$. 
 
The above inequality shows that the sequence of vectors $v_n:=\pi(\mu_n)v$ is a Cauchy sequence in $\boldsymbol{H}^{\infty}$ with respect to all norms $\tau_n$. Using completeness of  $\boldsymbol{H}^{\infty}$, this sequence has a limit in $\boldsymbol{H}^{\infty}$. The unique limit is clearly $v$, hence $v\in \boldsymbol{H}^{\infty}$.
\end{proof}

\printbibliography
 
\end{document}